\newlength\figureheight
\newlength\figurewidth
\begin{document}

\newtheorem{thm}{Theorem} [section]
\renewcommand{\thethm}{\arabic{section}.\arabic{thm}}
\newtheorem{lem}{Lemma} [section]
\renewcommand{\thelem}{\arabic{section}.\arabic{lem}}
\newtheorem{prop}{Proposition}[section]
\renewcommand{\theprop}{\arabic{section}.\arabic{prop}}
\newtheorem{cor}{Corollary} [section]
\renewcommand{\thecor}{\arabic{section}.\arabic{cor}}
\newtheorem{defn}{Definition} [section]
\renewcommand{\thedefn}{\arabic{section}.\arabic{defn}}
\newtheorem{rem}{Remark} [section]
\renewcommand{\therem}{\arabic{section}.\arabic{rem}}
\newtheorem{ex}{Example} [section]
\renewcommand{\theex}{\arabic{section}.\arabic{ex}}

\newenvironment{example}[1][Example]{\begin{trivlist}
\item[\hskip \labelsep {\bfseries #1}]}{\end{trivlist}}

\renewcommand{\qedsymbol}{ \begin{tiny}$\blacksquare$ \end{tiny} }

\renewcommand{\leq}{\leqslant}
\renewcommand{\geq}{\geqslant}


\title {Separation of Reliability and Secrecy\\ in Rate-Limited Secret-Key Generation}
\author{R\'{e}mi A. Chou,~\IEEEmembership{Student Member,~IEEE} and Matthieu R. Bloch,~\IEEEmembership{Member,~IEEE}
\thanks{This work was supported in part by the NSF under Award CCF 1320298 and the ANR grant ANR-13-BS03-0008.}
\thanks{R. A. Chou and M. R. Bloch are with the School~of~Electrical~and~Computer~Engineering,~Georgia~Institute~of~Technology, Atlanta,~GA~30332--0250 and GT-CNRS UMI 2958, 2 rue Marconi, 57070 Metz, France.} \thanks{E-mail : remi.chou@gatech.edu; matthieu.bloch@ece.gatech.edu. Preliminary versions of the results were presented at the 2012 IEEE International Symposium on Information Theory in \cite{Chou12}.}
}

\maketitle

\begin{abstract}
For a discrete or a continuous source model, we study the problem of secret-key generation with one round of rate-limited public communication between two legitimate users. Although we do not provide new bounds on the wiretap secret-key (WSK) capacity for the discrete source model, we use an alternative achievability scheme that may be useful for practical applications. As a side result, we conveniently extend known bounds to the case of a continuous source model. Specifically, we consider a sequential key-generation strategy, that implements a rate-limited reconciliation step to handle reliability, followed by a privacy amplification step performed with extractors to handle secrecy. We prove that such a sequential strategy achieves the best known bounds for the rate-limited WSK capacity (under the assumption of degraded sources in the case of two-way communication). However, we show that, unlike the case of rate-unlimited public communication, achieving the reconciliation capacity in a sequential strategy does not necessarily lead to achieving the best known bounds for the WSK capacity. Consequently, reliability and secrecy can be treated successively but not independently, thereby exhibiting a limitation of sequential strategies for rate-limited public communication. Nevertheless, we provide scenarios for which  reliability and secrecy can be treated successively and independently, such as the two-way rate-limited SK capacity, the one-way rate-limited WSK capacity for degraded binary symmetric sources, and the one-way rate-limited WSK capacity for Gaussian degraded sources. 
\end{abstract}
\begin{IEEEkeywords}
Secret-key capacity, secret-key generation, rate-limited communication, reconciliation, privacy amplification 
\end{IEEEkeywords}
\section{Introduction}

Information-theoretic secret-key generation protocols \cite{Maurer93,Ahlswede93} draw their strength from a security relying on information-theoretic metrics rather than on complexity theory, thereby avoiding the assumption of limited computational power for the eavesdropper. In such protocols, two legitimate users (Alice and Bob) and an eavesdropper (Eve) observe the realizations of correlated random variables (RVs), discrete or continuous. The legitimate users, who can exchange messages over a public channel, aim at extracting a common secret key from their observations. The rules by which the legitimate users compute the messages they exchange over the public channel and agree on a key define a key-generation strategy. The maximum number of secret-key bits per observed realization of the RVs is called the wiretap secret-key (WSK) capacity~\cite{Ahlswede93,Maurer93}. 

Closed-form expressions and bounds for the WSK capacity have been established for a large variety of models~\cite{Maurer93,Ahlswede93,Csiszar00,Csiszar04,Ye05,Csiszar08,Csiszar10,
Nitinawarat12,Csiszar13}. However, usual achievability proofs rely on a random binning argument and thus, do not always provide direct insight into the design of practical key-generation strategies. Moreover, such proofs handle reliability (the legitimate users must share the same key) and secrecy (the key must be unknown to the eavesdropper) jointly, which might limit the flexibility of the scheme. 

A more constructive scheme is a sequential key-generation strategy, which consists of two steps that handle reliability and secrecy successively instead of jointly. A reconciliation step~\cite{Brassard94} is first performed, during which Alice and Bob communicate over the public channel to agree on a common bit sequence, which might not be totally hidden from Eve. Then, a privacy amplification step~\cite{Bennett95,Maurer00} is performed, during which Alice and Bob apply a deterministic function to their shared sequence to generate their common secret key, this time completely unknown from Eve.
The main benefit of sequential key-generation strategies is to separate how one deals with reliability and secrecy,\footnote{We mean that the key-generation can be performed by the succession of two protocols, one, free from any secrecy constraint, dealing with reliability, and the other dealing with secrecy. A stronger result would be that optimizing both protocols independently, in a sense defined in Section~\ref{SecIndep}, leads to the best possible key-generation strategy. In Section~\ref{Sec_indep}, we prove that this stronger result holds in some scenarios.} and thus to provide a perhaps more practical key-generation design. Indeed, reconciliation can be efficiently implemented with LDPC codes~\cite{Bloch06b,Elkouss09}  and privacy amplification can be performed with extractors~\cite{Bennett95,Maurer00}. While sequential key-generation is studied in~\cite{Maurer00,Bloch11} for a public channel of unlimited capacity, we focus on the performance of sequential key-generation strategies in the case of rate-limited public communication.\footnote{Note that the achievability scheme of~\cite[Theorem 4.1]{Nitinawarat12}, which only holds for Gaussians sources and when there is no side information at the eavesdropper, is very close to the sequential approach that we study, even though their model is different in that it deals with a quantized source and unrestricted public communication.} Note that sequential strategies have also been used for secrecy purposes in \cite{Bellare12}, in which a practical capacity-achieving scheme involving invertible extractors is proposed for the symmetric wiretap channel.  %

Besides sequential strategies, constructive secret-key capacity achieving schemes relying on polar codes have been recently proposed~\cite{Chou13b,Sutter13} for some of the models studied in this paper. A brief comparison between sequential schemes and polar code schemes can be found in \cite{Chou13b}.

  Although, we do not improve the rate-limited WSK capacity bounds for the discrete source model, we provide an achievability scheme that might be easier to translate into practical designs. Specifically, we show that sequential strategies, that are known to be optimal for rate-unlimited public communication, are also optimal for rate-limited communication. We however also qualify the robustness of sequential strategies to rate-limited public communication, as we show in this case that it may not be optimal to achieve the reconciliation capacity in a sequential strategy. That is, reliability and secrecy can be handled successively but not necessarily independently, thereby limiting the coding scheme flexibility. 
 The main contributions of this work are:
\begin{itemize}
\item  an alternative achievability scheme that separates reliability and secrecy by means of a reconciliation protocol and a privacy amplification step performed with extractors, which achieves
\begin{enumerate}[(i)]
\item the best known bound of the two-way one-round rate-limited WSK capacity for degraded sources in Theorem~\ref{theorem_seq2};
\item the  one-way rate-limited WSK capacity in Theorem~\ref{theorem_seq1} (this extends \cite{Chou12}, which only considers degraded sources) ;
\item the two-way one-round rate-limited SK capacity (no side information at the eavesdropper) in Theorem~\ref{theorem_seq3};
\end{enumerate}
As a side result, we extend the bounds for a discrete source model in~\cite{Csiszar00}, to the case of a continuous source model in Corollary~\ref{cor_extend} (the case of the one-way rate-limited WSK capacity is treated in \cite{Watanabe10a}, but only for degraded sources) ;
\item scenarios for which achieving the reconciliation capacity is optimal in a sequential key-generation strategy, as it is not necessarily the case in general when constraints are imposed on public communication. Such results are important to obtain a flexible coding scheme; Specifically, we treat the case of 
\begin{enumerate}[(i)]
\item the two-way rate-limited SK capacity in Section~\ref{Example_SK};
\item the one-way rate-limited WSK capacity for degraded binary symmetric sources in Section~\ref{Example_bin};
\item the one-way rate-limited WSK capacity for degraded Gaussian sources in Section~\ref{Example_gauss};
\end{enumerate}
As side results, we obtain a characterization of the rate-limited reconciliation capacity in Proposition~\ref{C_rec2}, which corresponds to the best trade-off between the length of the sequence shared by Alice and Bob after reconciliation and the quantity of information publicly exchanged; we also obtain a closed-form expression of the one-way WSK capacity for degraded binary symmetric sources with Proposition~\ref{discretegen}, as illustrated in  Example~\ref{Example_binary}.
\end{itemize}

Our proofs techniques mainly rely on the analysis of randomness extraction with extractors, Wyner-Ziv coding, and a fine analysis with robust typicality~\cite{Orlitsky01} to extend the discrete case to a continuous setting. The determination of the one-way WSK capacity for degraded binary symmetric sources relies on perhaps less standard techniques, as we use the Krein-Milman Theorem to simplify a convex optimization problem under convex constraints.

We note that our model and our analysis rely on restrictive assumptions.
\begin{itemize}
\item Our analysis only deals with asymptotic rates. We thus do not provide directly applicable results, but rather some insight for practical applications into the optimality or non-optimality of a sequential strategy in the case of rate-limited public communication. Specifically, we show that sequential strategies remain optimal in most cases for rate-limited communication, but unlike the case of rate-unlimited communication, achieving the reconciliation capacity in a sequential strategy is not necessarily optimal. 
\item We consider an independent and identically distributed (i.i.d.) source. While this assumption is not necessarily satisfied in practical settings, it remains realistic as shown in \cite{Pierrot13}, in which an i.i.d. source is induced in an indoor wireless environment.
\item We assume the existence of an authenticated public channel. In practice, a solution to ensure this would be to have the legitimate users share a secret seed, which size can be chosen in the order of the logarithm of the length of the message~\cite{Wegman81,Maurer93}.
\end{itemize}
Note that for the finite-length regime in the case of rate-unlimited communication, a lower and an upper bound for achievable secret-key rates are provided in~\cite{Renner05} for an arbitrary source (an analysis of privacy amplification is also provided in~\cite{Watanabe12}), and an achievable secret-key rate is derived in \cite{Pierrot13} for i.i.d. sources. However, whether a sequential strategy is optimal for the finite-length regime remains an open question.

Note also that there exists works dealing with related models, that do not require such assumptions. For instance,~\cite{Dodis06} provides a non-asymptotic practical secret key-generation scheme for a non-memoryless source model, in which the legitimate users observe discrete components that are close with respect to certain metrics, and the eavesdropper has no observations of the source, with the assumption of one-way public communication over an unauthenticated channel with unlimited capacity. Privacy amplification is also treated for a non-asymptotic regime in~\cite{Dodis09} by means of malleable extractors~\cite{Dodis09,Cohen12}, when the legitimate users observe the same component of a non-memoryless source and the eavesdropper observes a correlated component of the source, with the assumption of two-way one-round communication over an unauthenticated channel with unlimited capacity. Such models are less general that the ones studied in this paper since the observations of the legitimate users are constrained to be equal or close to each other; however, they offer in return more practical solutions as they free themselves from the aforementioned assumptions.
The remainder of the paper is organized as follows. In Section \ref{Sec2}, we introduce the problem and provide some background on the topic. Specifically, we formally introduce the problem studied in Section \ref{SecMod}, and recall known bounds for the secret-key capacity in Section~\ref{SecKres}. In Section~\ref{SecStrat}, we describe the two steps of a sequential strategy and recall known bounds achieved by such a strategy. In Section~\ref{SecIndep}, we introduce the notion of independence between the two steps of a sequential strategy, when constraints are imposed on public communication. In Section \ref{Seqsec}, we prove that the sequential application of reconciliation and privacy amplification with extractors is an optimal key-generation strategy. In Section~\ref{Sec_indep}, we provide scenarios for which these two phases can be treated independently of each other. Specifically, we provide the case of the two-way SK capacity in Section~\ref{Example_SK}, the one-way WSK capacity for degraded binary symmetric sources in Section~\ref{Example_bin}, and the one-way WSK capacity for degraded Gaussian sources in Section~\ref{Example_gauss}. All proofs are gathered in the appendices to streamline presentation.

\section{Notation}
Consider $p,q \in \mathbb{R}$. We define the following  associative and commutative operation $p\star q \triangleq p(1-q)+(1-p)q$; observe that $[0,1]$ is closed with respect to $\star$. We define the integer interval $\llbracket p,q \rrbracket$, as the set of integers between $\lfloor p \rfloor$ and $\lceil q \rceil$. We define $[p]^+$ as $\max(0,p)$. We denote the Bernoulli distribution with parameter $p \in [0,1]$ by $\mathcal{B}(p)$, and for any $p\in[0,1]$, we define $\bar{p} \triangleq 1-p$. Finally, we note ${H}_b(\cdot)$ the binary entropy function.
\section{Problem Statement and Background} \label{Sec2}
\subsection{Source {M}odel for {S}ecret {K}ey-{G}eneration} \label{SecMod}
As illustrated in Figure \ref{figsource}, a source model for secret-key generation represents a situation in which two legitimate users, Alice and Bob, and one eavesdropper, Eve, observe the realizations of a memoryless source (MS) $(\mathcal{X}\mathcal{Y}\mathcal{Z},p_{XYZ})$, that can be either discrete (DMS) or continuous (CMS). The three components $X$, $Y$ and $Z$, are observed by Alice, Bob, and Eve, respectively. The MS is assumed to be outside the control of all parties, but its statistics are known. Alice and Bob's objective is to process their observations and agree on a key $K$, about which Eve should have no information. We assume a two-way one-round communication between Alice and Bob, that is, we suppose that Alice first sends a message to Bob, and that in return Bob sends a message to Alice.\footnote{One could also suppose that Bob is the one who sends messages, in which case one only needs to exchange the role of $X$ and $Y$ in the following.} We also assume that the messages  are exchanged over an authenticated noiseless public channel with limited rate; in others words, Eve has total access to Alice and Bob's messages, but cannot tamper with the messages over the channel. 
\begin{figure}
\centering
  \includegraphics[width=8.2cm]{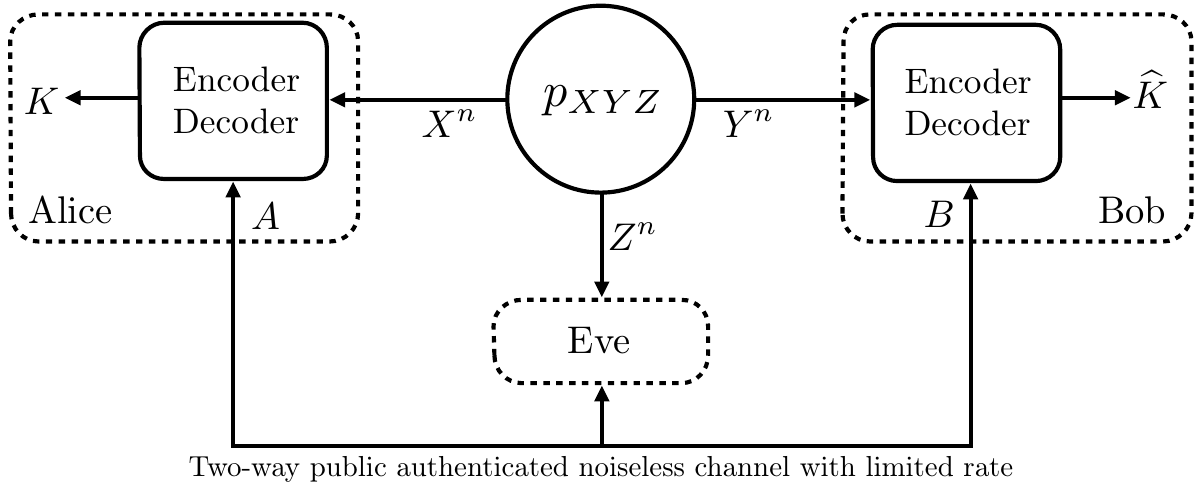}
  \caption{ Source model for secret-key generation}
  \label{figsource}
\end{figure}
We now formally define a key-generation strategy.
\begin{defn}
A $\left(2^{nR},n,R_1,R_2\right)$ key-generation strategy $\mathcal{S}_n$ for a source model with MS $(\mathcal{X}\mathcal{Y}\mathcal{Z},p_{XYZ})$ consists of
\begin{itemize}
\item a key alphabet $\mathcal{K} = \left\llbracket 1, 2^{nR} \right\rrbracket$;
\item two alphabets $\mathcal{A}$, $\mathcal{B}$ respectively used by Alice and Bob to communicate over the public channel;
\item two encoding functions $f_0 : \mathcal{X}^n \rightarrow \mathcal{A}$, $g_0 : \mathcal{Y}^n \times \mathcal{A} \rightarrow \mathcal{B}$;
\item two functions $\kappa_a: \mathcal{X}^n \times \mathcal{B} \rightarrow \mathcal{K}$, $\kappa_b: \mathcal{Y}^n \times \mathcal{A} \rightarrow \mathcal{K}$;
\end{itemize}
and operates as follows.
\begin{itemize}
\item Alice observes $X^n$ while Bob observes $Y^n$;
\item Alice transmits $A = f_0 (X^n)$ subject to ${H}(A) \leq n R_1$;
\item Bob transmits $B = g_0 (Y^n,A)$ subject to ${H}(B) \leq n R_2$;
\item Alice computes $K= \kappa_a(X^n,B)$ while Bob computes $\hat{K} = \kappa_b(Y^n,A)$.
\end{itemize}
\end{defn}
The performance of a $\left(2^{nR},n,R_1,R_2\right)$ key-generation strategy $\mathcal{S}_n$ is measured in terms of the average probability of error between the key $K$ generated by Alice and the key $\hat{K}$ generated by Bob
$$\textbf{P}_e(\mathcal{S}_n) \triangleq \mathbb{P} [ K \neq \hat{K} | \mathcal{S}_n ],
$$ in terms of the information leakage to the eavesdropper $$\textbf{L}(\mathcal{S}_n) \triangleq {I} (K;Z^nAB|\mathcal{S}_n),$$ and in terms of the uniformity of the key $$\textbf{U}(\mathcal{S}_n) \triangleq \log \left\lceil 2^{nR} \right\rceil - {H}(K| \mathcal{S}_n).
$$
\begin{defn} \label{defsec}
A WSK rate $R$ is achievable for a source model if there exists a sequence of $\left(2^{nR},n,R_1,R_2\right)$ key-generation strategies $\left\{ \mathcal{S}_n \right\}_{n \geq 1}$ such that
\begin{align*}
& \displaystyle\lim_{n \to \infty } \textbf{\textup{P}}_e(\mathcal{S}_n) =0  && \text{(reliability),}\\
& \displaystyle\lim_{n \to \infty } \textbf{\textup{L}}(\mathcal{S}_n) =0  && \text{(strong secrecy),}\\
& \displaystyle\lim_{n \to \infty } \textbf{\textup{U}}(\mathcal{S}_n)=0  & &\text{(strong uniformity).}
\end{align*}
Moreover, the WSK capacity of a source model with MS $(\mathcal{X}\mathcal{Y}\mathcal{Z},p_{XYZ})$ is the supremum of achievable WSK rates, and is denoted by $C_{\textup{WSK}}$.
In the following, we also consider situations in which the eavesdropper has access to the  public messages exchanged by Alice and Bob, but has no side information $Z^n$. In such cases, the WSK capacity is simply called the secret-key (SK) capacity and is denoted by $C_{\textup{SK}}$.
\end{defn}
\begin{rem}
As shown in \cite{Bellare12}, the security criterion $\textbf{\textup{L}}(\mathcal{S}_n)$ in Definition \ref{defsec} is equivalent to a definition proposed in~\cite{Bellare12} and inspired by semantic security for computationally bounded eavesdropper~\cite{Goldwasser84}. In our case, the definition in \cite{Bellare12} translates to 
$$
 \max_{\mathcal{A},f} \left( \mathbb{P} \left[ \mathcal{A}(Z^n,A,B) = f(K) \right] - \frac{1}{|f(K)|} \right),
$$
where the maximization is over any computationally unbounded adversary $\mathcal{A}$ and any function $f$. It means that for any $\mathcal{A}$ and $f$, the probability that an adversary $\mathcal{A}$ recovers $f(K)$ from $(Z^n,A,B)$ should not be better than the probability of guessing $f(K)$, only knowing its length $|f(K)|$. 
\end{rem}
\subsection{Known Bounds for $C_{\textup{WSK}}$ and $C_{\textup{SK}}$} \label{SecKres}
For convenience, we recall known results regarding the model described in Section~\ref{SecMod}. Note that these results  only hold for DMS.\\

\begin{thm}[\! {\cite[Theorems 2.5, 2.6]{Csiszar00}}]  \label{C_WSK} 
Let $(\mathcal{X}\mathcal{Y}\mathcal{Z},p_{XYZ})$ be a DMS. 
\counterwithin{enumi}{prop}\begin{enumerate}[(a)]
\item For $R_1,R_2 \in \mathbb{R}^+$, the two-way one-round WSK capacity satisfies \label{C_WSK2}
\begin{align*}
& C_{\textup{WSK}}(R_1,R_2) \geq  R_{\textup{WSK}}(R_1,R_2),
\end{align*}
where
\begin{multline*}
R_{\textup{WSK}}(R_1,R_2) \triangleq \displaystyle\max_{U,V} \left( [{I}(Y;U) -{I}(Z;U)]^+ \right. \\ \left. + [ {I}(X;V|U) - {I}(Z;V|U)]^+ \right) 
\end{multline*}   
\vspace*{-1em}
subject to
\vspace*{-1.1em}
\begin{align*} 
&R_1 \geq {I}(X;U) - I(Y;U),  \\ 
&R_2 \geq  {I}(Y;V|U) - I(X;V|U), \\ \nonumber 
&U \text{---} X \text{---} YZ  \text{, } V \text{---} YU \text{---} XZ, \\  \nonumber 
& |\mathcal{U}| \leq |\mathcal{X}|+2, |\mathcal{V}| \leq |\mathcal{Y}|. 
\end{align*}
\item For $R_1 \in \mathbb{R}^+$, the one-way WSK capacity is \label{C_WSK1}
\begin{align*}
& C_{\textup{WSK}}(R_1,0) = \displaystyle\max_{U,V} \left( {I}(Y;V|U) - {I}(Z;V|U) \right) \text{ }
\end{align*}
\vspace*{-1em}
subject to
\vspace*{-1.1em}
\begin{align}
&R_1 \geq {I}(X;V) - I(Y;V), \nonumber \\
&U \text{---} V \text{---} X \text{---} YZ, \nonumber \\
&|\mathcal{U}|,|\mathcal{V}| \leq |\mathcal{X}|+2.\nonumber 
\end{align}
\end{enumerate}
\end{thm}
\begin{cor}[\! {\cite[Theorems 2.2, 2.3, 2.4]{Csiszar00}}]  \label{C_SK}
Let $(\mathcal{X}\mathcal{Y},p_{XY})$ be a DMS. 
\counterwithin{enumi}{prop}\begin{enumerate}[(a)] \label{C_SK2}
\item For $R_1,R_2 \in \mathbb{R}^+$, the two-way one-round SK capacity is
\begin{align*}
& C_{\textup{SK}}(R_1,R_2) = \displaystyle\max_{U,V} \left({I}(Y;U)  + {I}(X;V|U) \right) 
\end{align*}
\vspace*{-1em}
subject to
\vspace*{-1.1em}
\begin{align*}
&R_1 \geq {I}(X;U) - I(Y;U), \nonumber \\
&R_2 \geq  {I}(Y;V|U) - I(X;V|U), \nonumber\\ 
&U \text{---} X \text{---} Y\text{, } V \text{---} YU \text{---} X,  \\
&|\mathcal{U}| \leq |\mathcal{X}|+2, |\mathcal{V}| \leq |\mathcal{Y}|.\nonumber 
\end{align*}
\item For $R_1 \in \mathbb{R}^+$, the one-way SK capacity is \label{C_SK1}
\begin{align}
& C_{\textup{SK}}(R_1,0) = \displaystyle\max_{U} {I}(Y;U) \nonumber 
\end{align}
\vspace*{-1em}
 subject to
\vspace*{-1.1em}
\begin{align*}
&R_1 \geq {I}(X;U) - I(Y;U), \\
&U \text{---} X \text{---} Y, \nonumber \\
&|\mathcal{U}| \leq |\mathcal{X}|+1.
\end{align*}
\end{enumerate}
\end{cor}
\subsection{Sequential Strategy} \label{SecStrat}
In the following, we use the term sequential key-generation strategy, for a key-generation strategy consisting of the succession of a reconciliation protocol and a privacy amplification with extractors. %
\subsubsection{Reconciliation}\label{reconciliationsec}
During the reconciliation phase, Alice and Bob send messages to each other over an authenticated public channel with limited rate. Alice and Bob then process their observations to agree on a common bit sequence $S$. At this stage the sequence is not subject to any secrecy constraint. %
Formally, a two-way one-round rate-limited reconciliation protocol is defined as follows.
\begin{defn} \label{reconciliation}
Let $R_1,R_2 \in \mathbb{R}^+$. A rate-limited reconciliation protocol $\mathcal{R}_n(R_1,R_2)$, noted $\mathcal{R}_n$ for convenience, for a source model with MS $(\mathcal{X}\mathcal{Y},p_{XY})$ consists of
\begin{itemize}
\item an alphabet $\mathcal{S} = \llbracket 1, M\rrbracket$;
\item two alphabets $\mathcal{A}$, $\mathcal{B}$ respectively used by Alice and Bob to communicate over the public channel;
\item two encoding functions $f : \mathcal{X}^n \rightarrow \mathcal{A}$, $g : \mathcal{Y}^n \times \mathcal{A} \rightarrow \mathcal{B}$;
\item two functions $\eta_a: \mathcal{X}^n \times \mathcal{B} \rightarrow \mathcal{S}$, $\eta_b: \mathcal{Y}^n \times \mathcal{A} \rightarrow \mathcal{S}$;
\end{itemize}
and operates as follows
\begin{itemize}
\item Alice observes $X^n$ while Bob observes $Y^n$;
\item Alice transmits $A = f (X^n)$ subject to ${H}(A) \leq n R_1$;
\item Bob transmits $B = g (Y^n,A)$ subject to ${H}(B) \leq n R_2$;
\item Alice computes $S = \eta_a(X^n,B)$ while bob computes $\hat{S} = \eta_b(Y^n,A)$.
\end{itemize}
\end{defn}
The reliability performance of a reconciliation protocol is measured in terms of the average probability of error $$\textbf{P}_e(\mathcal{R}_n) \triangleq \mathbb{P} [ S \neq \hat{S} | \mathcal{R}_n ].
$$
In addition, since the reconciliation protocol, which generates the common sequence $S$, is followed by the privacy amplification step to generate a secret-key, it is desirable to leak as little information as possible over the public channel. As in~\cite{Bloch11} we define the reconciliation rate of a reconciliation protocol as 
$$\textbf{R}(\mathcal{R}_n) \triangleq \frac{1}{n} \left[ {H}(S|\mathcal{R}_n) - {H}(AB|\mathcal{R}_n)\right].
$$%
\begin{defn}
For a given $(R_1,R_2)$, a reconciliation rate $R$ is achievable, if there exists a sequence of  rate-limited reconciliation protocols $\left\{ \mathcal{R}_n \right\}_{n \geq 1}$ such that
\begin{equation*}
\displaystyle\lim_{n \rightarrow \infty} \textup{\textbf{P}}_e(\mathcal{R}_n) = 0  \text{ and }\displaystyle \varliminf_{n\rightarrow \infty} \textup{\textbf{R}}(\mathcal{R}_n)  \geq R.
\end{equation*}
Moreover, the two-way one-round rate-limited reconciliation capacity $C_{\textup{rec}}(R_1,R_2)$ of a MS $(\mathcal{X}\mathcal{Y},p_{XY})$ is the supremum of achievable reconciliation rates.
\end{defn}
Intuitively, the reconciliation capacity characterizes the best trade-off between the length of the sequence shared by Alice and Bob after reconciliation and the quantity of information publicly exchanged.
\subsubsection{Privacy amplification} \label{privateamp}
During the privacy amplification phase, Alice and Bob generate their secret key by applying a deterministic function, on which they publicly agreed ahead of time, to their common sequence $S$ obtained after reconciliation. This phase is performed with extractors~\cite{Vadhan98}, which are functions that take as input a sequence of $n$ arbitrarily distributed bits and output a sequence of $k$ nearly uniformly distributed bits, using another input of $d$ truly uniformly distributed bits. The following theorem provides a lower bound on the size of the key, on which the legitimate users agree.
\begin{thm} [\!\cite{Maurer00}, {\cite[Theorem 4.6]{Bloch11}}] \label{thMaurer}
Let $S \in \left\{ 0,1 \right\}^n$ be the RV that represents the common sequence shared by Alice and Bob, and let $E$ be the RV that represents the total knowledge about $S$ available to Eve. Let $e$ be a particular realization of $E$. 
If Alice and Bob know that $${H}_{\infty}(S|E=e) \geq \gamma n, \text{ for some $\gamma \in ]0,1[$,}$$ then there exists an extractor $g: \left\{0,1\right\}^n \times \left\{0,1\right\}^d \to \left\{0,1\right\}^k$ with 
$d \leq n \delta(n) \text{ and } k \geq n(\gamma - \delta(n))$, where $\delta(n)$ satisfies $\lim_{n \rightarrow + \infty} \delta(n) = 0$.\\
Moreover, if $U_d$ is a RV uniformly distributed on $\left\{0,1\right\}^d$ and Alice and Bob choose $K=g(S,U_d)$ as their secret key, then
\begin{equation*}
{H}(K|U_d, E=e) \geq k - \delta^*(n), 
\end{equation*}
with  $\delta^*(n)=2^{-\sqrt{n}/\log n} \left( k+\sqrt{n}/\log n\right)$.
\end{thm}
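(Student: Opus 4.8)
The plan is to invoke the existence of a good strong randomness extractor and then convert the resulting statistical-distance guarantee into the claimed entropy bound. First I would recall the notion of a $(\gamma n, \epsilon)$-strong extractor: a function $g:\{0,1\}^n \times \{0,1\}^d \to \{0,1\}^k$ such that, for every distribution of $S$ with ${H}_\infty(S) \geq \gamma n$, the pair $(g(S,U_d),U_d)$ lies within statistical distance $\epsilon$ of the uniform distribution on $\{0,1\}^k \times \{0,1\}^d$. The existence of such extractors with seed length $d$ sublinear in $n$ and output length $k$ extracting almost all of the min-entropy is standard, e.g. via explicit constructions as in \cite{Vadhan98}, whose seed length scales like $\mathrm{polylog}(n) \cdot \log(1/\epsilon)$ and whose entropy loss is $O(\log(1/\epsilon))$. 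The decisive parameter choice is to set the error to $\epsilon = 2^{-\sqrt{n}/\log n}$, which is subexponentially small so that $\log(1/\epsilon) = \sqrt{n}/\log n = o(n)$; this makes both the seed length and the entropy loss $o(n)$, yielding $d \leq n\delta(n)$ and $k \geq n(\gamma - \delta(n))$ for some $\delta(n) \to 0$, and it will reproduce exactly the quantity $\delta^*(n)$ in the final step.

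Next I would apply this extractor to the conditional distribution $P_{S\mid E=e}$. Since we are given ${H}_\infty(S|E=e) \geq \gamma n$, the strong-extractor property directly yields that the joint law of $(K,U_d) = (g(S,U_d),U_d)$, conditioned on $E=e$, is within statistical distance $\epsilon$ of the uniform distribution on $\{0,1\}^k \times \{0,1\}^d$. This is the step in which the min-entropy hypothesis is consumed, and it is essential that the extractor be \emph{strong}, so that near-uniformity of $K$ holds jointly with the publicly known seed $U_d$ rather than only marginally; this is what allows the conclusion to be stated as a bound on ${H}(K|U_d, E=e)$.

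The final and most delicate step is the passage from statistical distance to conditional entropy. I would use the standard continuity bound for Shannon entropy: if a distribution $P$ on an alphabet of size $2^k$ is within statistical distance $\theta \leq 1/2$ of the uniform distribution, then ${H}(P) \geq k - \theta k - {H}_b(\theta)$, and ${H}_b(\theta) \leq \theta \log(1/\theta) + \theta\log e$. Applying this with $\theta = \epsilon$ to the law of $K$ conditioned on $(U_d, E=e)$, after using the joint closeness above to control the deviation on average over the seed, produces an entropy deficiency of order $\epsilon k + \epsilon\log(1/\epsilon) = 2^{-\sqrt{n}/\log n}\bigl(k + \sqrt{n}/\log n\bigr) = \delta^*(n)$, which is precisely the claimed bound. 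The main obstacle is the bookkeeping around the conditioning: the strong-extractor guarantee is a statement about the joint pair $(K,U_d)$, whereas the conclusion concerns the conditional entropy averaged over $U_d$, so one must argue that the joint closeness transfers to the averaged conditional distribution, and one must verify that a single construction simultaneously meets all three parameter constraints on $d$, $k$, and $\epsilon$ with the prescribed asymptotics.
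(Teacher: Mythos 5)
The paper does not prove Theorem~\ref{thMaurer}: it is imported from \cite{Maurer00,Bloch11} and merely restated in Section~\ref{privateamp}, so there is no in-paper proof to compare against. Your reconstruction --- a strong $(\gamma n,\epsilon)$-extractor applied to $P_{S|E=e}$ with $\epsilon = 2^{-\sqrt{n}/\log n}$ so that both the seed length and the entropy loss are $o(n)$, followed by the statistical-distance-to-entropy continuity bound averaged over the seed --- is precisely the argument of the cited references and correctly reproduces the form $\delta^*(n)=2^{-\sqrt{n}/\log n}\left(k+\sqrt{n}/\log n\right)$.
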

Note that, the size $d$ of the uniformly distributed input sequence is negligible, compared to $n$, so that the effect on the rate of public communication is negligible. Moreover, extractors that extract almost the entire min-entropy of the input~$S$ and require comparatively negligible amount of uniform randomness can be efficiently constructed~\cite{Vadhan98}.
\subsubsection{Known Results Concerning Sequential Strategy}
For a DMS, in the absence of rate constraint between Alice and Bob, i.e. $R_1,R_2 = +\infty$, \cite{Maurer00}, \cite[Theorem 4.7]{Bloch11} state that one can handle reliability and secrecy successively to achieve the WSK capacity $C_{\textup{WSK}}(+\infty,+\infty)$, by means of a reconciliation step, and a privacy amplification step. This result is extended to the case of one-way rate-limited communication, i.e. $R_1 \in \mathbb{R}_+$ and $R_2 =0$, for a degraded DMS in \cite{Chou12}.
\subsection{Independence Between Reconciliation and Privacy Amplification} \label{SecIndep}
In this section, we define a notion of independence between reconciliation and privacy amplification, when constraints hold on the public communication rate. As explained earlier, we would like to ensure that reliability and secrecy can be handled not only successively but also independently, to obtain a flexible coding scheme.
We first recall that in the case of one-way rate-limited communication, the reconciliation capacity is given by the following proposition.
\begin{prop} [\!\!\cite{Chou12}] \label{prop_onerec}
Let $(\mathcal{X}\mathcal{Y},p_{XY})$ be a DMS. Let $R_1 \in \mathbb{R}_+$. The reconciliation capacity $C_{\textup{rec}}(R_1,0)$ is given by  
\begin{align*}
C_{\textup{rec}}(R_1,0) = C_{\textup{SK}}(R_1,0).
\end{align*}
\end{prop}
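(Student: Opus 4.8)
The goal is to establish the identity $C_{\textup{rec}}(R_1,0) = C_{\textup{SK}}(R_1,0)$ for a one-way rate-limited protocol. My plan is to prove the two inequalities separately. Recall that $C_{\textup{SK}}(R_1,0) = \max_U I(Y;U)$ subject to the rate constraint $R_1 \geq I(X;U|Y)$, the Markov condition $U \rightarrow X \rightarrow Y$, and the range constraint on $\mathcal{U}$. The reconciliation rate is $\textbf{R}(\mathcal{R}_n) = \frac{1}{n}[H(S|\mathcal{R}_n) - H(AB|\mathcal{R}_n)]$, which in the one-way setting with $R_2=0$ reduces to $\frac{1}{n}[H(S) - H(A)]$ since $B$ is trivial.

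For the achievability direction $C_{\textup{rec}}(R_1,0) \geq C_{\textup{SK}}(R_1,0)$, the plan is to fix an optimal $U$ attaining the SK capacity and build a reconciliation protocol by a Wyner--Ziv-style construction: Alice generates $U^n$ from $X^n$ through the test channel $p_{U|X}$, then uses binning to communicate a description so that Bob, using his side information $Y^n$, can reconstruct $U^n$ reliably. The public message $A$ carries the bin index, whose rate can be made arbitrarily close to $I(X;U|Y)$ by standard Wyner--Ziv arguments, satisfying the $R_1$ constraint. The shared sequence $S$ is taken to be (a bit representation of) the reconstructed $U^n$, so $\frac{1}{n}H(S) \approx H(U)$ and $\frac{1}{n}H(A) \approx I(X;U|Y)$. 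The reconciliation rate is then approximately $H(U) - I(X;U|Y) = H(U) - H(U) + H(U|Y) \geq I(Y;U)$ after accounting for the overhead, and a careful accounting (subtracting $H(U|Y)$ as the residual uncertainty Bob removes) yields the target $I(Y;U)$. Here I would invoke the strong typicality / robust typicality machinery mentioned in the introduction to control the entropy rates.

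For the converse direction $C_{\textup{rec}}(R_1,0) \leq C_{\textup{SK}}(R_1,0)$, I would take any sequence of reconciliation protocols $\{\mathcal{R}_n\}$ achieving reliability and a reconciliation rate $R$, and extract a single-letter bound. Starting from $n\textbf{R}(\mathcal{R}_n) = H(S) - H(A)$, I would use Fano's inequality (from $\textbf{P}_e(\mathcal{R}_n) \to 0$) to replace $H(S)$ by $H(S|A)$ up to vanishing terms, since Bob recovers $S$ from $(Y^n, A)$. Then $H(S|A) \leq H(S|A) $ can be bounded by $I(S;Y^n|A) + H(S|Y^nA) \approx I(S;Y^n|A)$, and I would identify an auxiliary variable $U_i$ (via a standard single-letterization with a time-sharing variable absorbed into $U$) satisfying the Markov chain $U \rightarrow X \rightarrow Y$. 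The rate constraint $H(A) \leq nR_1$ translates into $R_1 \geq I(X;U|Y)$ through the identity $I(X^n;A|Y^n) \leq H(A)$ combined with the chain rule and the reliability of reconstruction. Assembling these pieces gives $\textbf{R}(\mathcal{R}_n) \leq I(Y;U) + o(1)$ with $U$ feasible for the $C_{\textup{SK}}(R_1,0)$ optimization, and the range constraint on $\mathcal{U}$ follows from the usual Carath\'eodory/support-lemma argument.

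The main obstacle I anticipate is the converse single-letterization: correctly defining the auxiliary $U_i$ so that it simultaneously (i) satisfies the Markov condition $U \rightarrow X \rightarrow Y$, (ii) respects the rate constraint $R_1 \geq I(X;U|Y)$, and (iii) makes the bound on $\frac{1}{n}[H(S) - H(A)]$ telescope cleanly into $I(Y;U)$. The delicate point is that $S$ and $A$ are coupled through the protocol, so the entropy difference does not decompose trivially; I expect to need to introduce past/future source symbols as part of the auxiliary variable and to lean on Csisz\'ar's sum identity (as in the original \cite{Csiszar00} converse) to handle the $H(A)$ term. Since Proposition~\ref{prop_onerec} is attributed to \cite{Chou12}, I would follow that converse template closely, taking care that the reconciliation-rate definition (an entropy difference rather than a mutual information) is handled consistently throughout.
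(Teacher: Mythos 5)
Your overall route is the same as the paper's: Proposition \ref{prop_onerec} is cited from \cite{Chou12}, and the paper's own closest proof is that of the two-way Proposition \ref{C_rec2} in Appendix \ref{AppendixC_rec}, which does exactly what you describe --- Wyner--Ziv coding for achievability and a Csisz\'ar--K\"orner-style single-letterization for the converse (via \cite[Lemma~4.1]{Ahlswede98}, with a time-sharing index $J$ and the identification $U = A X^{J-1}Y_{J+1}^n J$; in the one-way case the second auxiliary $V=\hat S$ is not needed). Your converse sketch, including the use of Fano's inequality to control $H(\hat S | X^n)$ and the Csisz\'ar sum identity to handle $H(A)$, is the paper's argument in all essentials.

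However, your achievability bookkeeping contains a genuine error. If $S$ is the Wyner--Ziv codeword $U^n$ selected from a codebook of size $2^{n(I(X;U|Y)+I(Y;U))+o(n)} = 2^{nI(X;U)+o(n)}$, then $\tfrac{1}{n}H(S) \approx I(X;U)$, \emph{not} $H(U)$: the selected codeword cannot carry more entropy than the logarithm of the codebook size, and $I(X;U) < H(U)$ whenever $H(U|X)>0$. Your subsequent identity ``$H(U) - I(X;U|Y) = H(U) - H(U) + H(U|Y)$'' is also false, since $I(X;U|Y) = H(U|Y) - H(U|X) \neq I(Y;U)$ in general; had both claims held you would obtain a reconciliation rate of $I(Y;U) + H(U|X)$, strictly exceeding your own converse bound, which signals that the accounting cannot be right. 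The correct computation uses the Markov chain $U \rightarrow X \rightarrow Y$ to write $I(X;U|Y) = I(X;U) - I(Y;U)$, whence
\begin{equation*}
\frac{1}{n}\left[H(S) - H(A)\right] \approx I(X;U) - I(X;U|Y) = I(Y;U),
\end{equation*}
which is exactly how the codebook rates $R_u = I(X;U|Y)+O(\epsilon)$ and $R_u' = I(Y;U) - O(\epsilon)$ are balanced in the construction of Appendix \ref{AppendixC_rec}. With that correction your argument coincides with the paper's.
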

As shown in Example~\ref{counterexample}, unlike the case of rate-unlimited communication, in the case of rate-limited communication, it is not necessarily optimal to first achieve the reconciliation capacity in Proposition~\ref{prop_onerec} and then to perform privacy amplification. In other words, if a sequential strategy is known to achieve the secret-key capacity, it does not tell us at which rate we should perform the reconciliation step. In the following, we say that reconciliation and privacy amplification are independent if achieving the reconciliation capacity in a sequential strategy leads to achieving the secret-key capacity.\\

\begin{ex} \label{counterexample}
Consider the scenario presented in Figure \ref{figcountex}, in which $|\mathcal{X}| = |\mathcal{Y}|=|\mathcal{Z}|=2 $, $X \text{---} Y \text{---} Z$ forms a Markov chain, and $X \sim \mathcal{B}(p)$. We assume a one-way rate-limited public communication, i.e $R_1 \in \mathbb{R}$ and $R_2 = 0$. We set the parameters as follows. $R_1 = H(X|Y)/3$, $p=0.23$, $\beta_1 = 0.01 $, $\beta_2 = 0.03 $, $\gamma_1 =0.03 $ and $\gamma_2 = 0.01$.

 We know by \cite{Chou12} that a sequential strategy achieves the WSK capacity $C_{\textup{WSK}}(R_1,0)$. Moreover, we can show that 
 \begin{align}
& C_{\textup{WSK}}(R_1,0) = \max_{\alpha_1,\alpha_2} (f-g)(\alpha_1,\alpha_2),\nonumber \\ & \phantom{mmm}  \text{ subject to }  (h-f)(\alpha_1,\alpha_2) = R_1 \label{eqmax1},\\
& C_{\textup{rec}}(R_1,0) = \max_{\alpha_1,\alpha_2} f(\alpha_1,\alpha_2), \nonumber\\ 
& \phantom{mmm}  \text{ subject to } (h-f)(\alpha_1,\alpha_2) = R_1 \label{eqmax2},
 \end{align}
 \vspace*{-0.7em}
where
 \begin{align*}
  f(\alpha_1,\alpha_2) & \triangleq {H}_b( p_y) - p_uH_b(a) - \bar{p}_uH_b(b), \\
  g(\alpha_1,\alpha_2)& \triangleq H_b(p_z) -p_uH_b(c) - \bar{p}_uH_b(d),\\
  h(\alpha_1,\alpha_2)& \triangleq H_b(p) -p_u H_b(\alpha_1) - \bar{p}_u H_b(\alpha_2),
 \end{align*}
with
$p_u = (\bar{\alpha}_2 - p)/(\bar{\alpha}_2-\alpha_1)$, 
$p_y = \bar{p} \bar{\beta}_1+p\beta_2$,
$p_z = p_y \bar{\gamma}_1 + \bar{p}_y \gamma_2$, 
$a = \alpha_1 \beta_2+ \bar{\alpha}_1\bar{\beta}_1$, 
$b = \alpha_2 \bar{\beta}_1 + \bar{\alpha}_2\beta_2$, 
$c = \bar{\gamma}_1 a + \gamma_2 \bar{a}$,
$d = \bar{\gamma}_1 b + \gamma_2 \bar{b}$.
%

Numerically, $$C_{\textup{WSK}}(R_1,0) > 0.050 > 0.045 >  (f - g) (\alpha_1^*,\alpha_2^*),$$ where $(\alpha_1^*,\alpha_2^*)$ achieves $C_{\textup{rec}}(R_1,0)$.
 Hence, for this example, achieving the reconciliation capacity in a sequential key-generation is not optimal and incurs a rate loss above $10 \%$.
\end{ex}
\begin{rem}
Deriving (\ref{eqmax1}) and (\ref{eqmax2}) is not straightforward. We used Proposition~\ref{Sufprop} given in the following sections, which shows that equality holds in the public communication rate constraint (\ref{rate1}) and that $|\mathcal{U}| \leq |\mathcal{X}|$.
\end{rem}
\begin{figure}
\begin{center}
  \includegraphics[width=8.2cm]{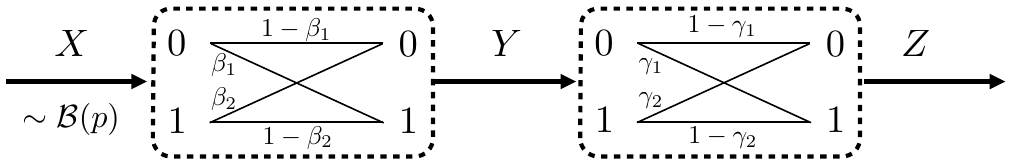}
  \caption{Example of a binary DMS studied in Example \ref{counterexample}}
\label{figcountex}
\end{center}
\end{figure}
In Section~\ref{Seqsec}, for $R_1,R_2 \in \mathbb{R}_+$, we study the achievability of $R_{\textup{WSK}}(R_1,R_2)$, $C_{\textup{WSK}}(R_1,0)$, given in Theorem \ref{C_WSK} and $C_{\textup{SK}}(R_1,R_2)$ given in Corollary \ref{C_SK}, with a sequential key-generation strategy. Moreover, in Section~\ref{Sec_indep}, we identify scenarios for which reconciliation and privacy are independent in the sense defined in this section.
\section{Sequential Strategies Achieve the Best Know Bounds of $C_{\textup{WSK}}$ and $C_{\textup{SK}}$} \label{Seqsec}
In this section, we provide one of our main result. That is, the successive combination of reconciliation and privacy amplification, achieves the best known rates of the secret-key capacity  (under the assumption of degraded sources in the case of two-way communication), when constraints are imposed on the public communication. As a side result, we extend known bounds of $C_{\textup{WSK}}$ and $C_{\textup{SK}}$ for DMS to the case of CMS.
\begin{thm} \label{theorem_seq2}
Let $(\mathcal{X}\mathcal{Y}\mathcal{Z},p_{XYZ})$ be a MS such that $X \text{---} Y \text{---} Z$. For $R_1, R_2 \in \mathbb{R^+}$, all WSK rates $R$ that satisfy
\begin{equation*}
R < R_{\textup{WSK}}(R_1,R_2) 
\end{equation*}
are achievable with sequential key-generation strategies.
\end{thm}
\begin{IEEEproof}
See Appendix \ref{AppendixThcont}.
\end{IEEEproof}
\begin{rem}
Note that we assume $X \text{---} Y \text{---} Z$. For two-way communication, the necessity of this hypothesis might be an inherent weakness of a scheme that consists of a successive design of reconciliation and privacy amplification, rather than a joint design as in \cite{Csiszar00} (see the proof in Appendix~\ref{AppendixThcont} for more details). Observe, however, that for a one-way public communication, in Theorem \ref{theorem_seq1}, this assumption is not required.
\end{rem}
\begin{thm} \label{theorem_seq1}
Let $(\mathcal{X}\mathcal{Y}\mathcal{Z},p_{XYZ})$ be a MS. For $R_1 \in \mathbb{R^+}$, all WSK rates $R$ that satisfy
\begin{equation*}
R < C_{\textup{WSK}}(R_1,0) 
\end{equation*}
are achievable with sequential key-generation strategies.
\end{thm}
\begin{IEEEproof}
See Appendix \ref{AppendixTh1}.
\end{IEEEproof}
\begin{thm} \label{theorem_seq3}
Let $(\mathcal{X}\mathcal{Y},p_{XY})$ be a MS.
For $R_1,R_2 \in \mathbb{R^+}$, all SK rates $R$ that satisfy
\begin{equation*}
R < C_{\textup{SK}}(R_1,R_2) 
\end{equation*}
are achievable with sequential key-generation strategies.
\end{thm}
We omit the proof of Theorem~\ref{theorem_seq3}, which is similar to the one of Theorem~\ref{theorem_seq2} without the RV $Z$.\\
Note that putting constraints on the public communication leads to auxiliary random variables in the expression of the secret-key capacity and the reconciliation capacity, as seen in Section~\ref{Sec2}. Hence, as demonstrated in Example \ref{counterexample}, auxiliary random variables that achieve the reconciliation capacity, may not achieve the secret-key capacity. 
In other words, reliability and secrecy can be handled successively, but cannot necessarily be treated independently, as defined in Section \ref{SecIndep}.
Nevertheless, in the next section, we identify scenarios for which reconciliation and privacy amplification can be treated independently.

As a side result, we have extended known bounds for the secret-key capacity for DMS to the case of CMS. We summarize this result in the following corollary, which is directly deduced from Theorems~\ref{theorem_seq2}, \ref{theorem_seq1}, and \ref{theorem_seq3}.
\begin{cor} \label{cor_extend}
Let $(\mathcal{X}\mathcal{Y}\mathcal{Z},p_{XYZ})$ be a MS. 
\counterwithin{enumi}{prop}\begin{enumerate}[(a)]
\item Assume that $X \text{---} Y \text{---} Z$. For $R_1,R_2 \in \mathbb{R}^+$, the two-way WSK achievable bound $R_{\textup{WSK}}(R_1,R_2)$ given in Theorem~\ref{C_WSK}.\ref{C_WSK2}, remains valid for a CMS.
\item For $R_1 \in \mathbb{R}^+$, the expression of the one-way WSK capacity $C_{\textup{WSK}}(R_1,0)$ given in Theorem \ref{C_WSK}.\ref{C_WSK1}, remains valid for a~CMS.
\item For $R_1,R_2 \in \mathbb{R}^+$, the two-way SK capacity $C_{\textup{SK}}(R_1,R_2)$ given in Corollary~\ref{C_SK}, remains valid for a CMS.
\end{enumerate}
\end{cor}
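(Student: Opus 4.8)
The plan is to split the statement into its achievability and converse directions and to observe that the achievability direction is already in hand, so that only the converse of the two capacity equalities requires new work.

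For the achievability direction nothing remains to be done: Theorems~\ref{theorem_seq2}, \ref{theorem_seq1} and \ref{theorem_seq3} are stated and proved for an \emph{arbitrary} memoryless source, continuous or discrete, hence apply verbatim to a CMS. In particular, Theorem~\ref{theorem_seq2} yields $C_{\textup{WSK}}(R_1,R_2) \geq R_{\textup{WSK}}(R_1,R_2)$ under $X\rightarrow Y\rightarrow Z$, which is exactly part~(a) (a pure achievability claim, just as in the DMS case of Theorem~\ref{C_WSK}). Likewise, Theorems~\ref{theorem_seq1} and \ref{theorem_seq3} supply the achievability halves of the capacity equalities in parts~(b) and~(c). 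Thus part~(a) is complete, and the only content left to establish is the converse inequality for the capacities in~(b) and~(c).

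For the converse the key observation is that in any strategy the messages $A,B$ and the key $K$ are \emph{discrete}-valued, and only the source components $X,Y,Z$ may be continuous. I would therefore revisit the converse of~\cite{Csiszar00} and check that each step survives this mixed setting: Fano's inequality applied to the discrete key, the constraints $H(A)\leq nR_1$ and $H(B)\leq nR_2$ on discrete messages, and the chain rule, data-processing inequality and non-negativity of the mutual informations $I(\cdot;\cdot)$, all of which hold for continuous $X,Y,Z$ when mutual information is taken in its general form as a supremum over finite quantizations. The auxiliary variables $U_i,V_i$ extracted from the strategy are functions of messages and source symbols, and the Markov constraints $U\rightarrow X\rightarrow YZ$ and $V\rightarrow YU\rightarrow XZ$ (respectively $V\rightarrow U\rightarrow X\rightarrow YZ$ in the one-way case) are conditional-independence relations that transfer unchanged to continuous alphabets. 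The single-letterization via a uniform time-sharing index carries over as well. The only genuinely finite-alphabet ingredient is the support-lemma bound on $|\mathcal{U}|,|\mathcal{V}|$; for a CMS these range constraints are vacuous, so the expressions are simply read as suprema over all admissible auxiliary variables and no support lemma is needed.

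I expect the main obstacle to be the analytic control of the mutual-information terms in the continuous setting rather than the algebra of the inequalities. Concretely, one must rule out $\infty-\infty$ indeterminacies when relating the informations to differential entropies, and must guarantee that the single-letter expression on the right-hand side is finite so that the converse chain is meaningful. The finiteness of the binding quantities, namely $nR\lesssim H(K)$, $H(A)\leq nR_1$ and $H(B)\leq nR_2$, keeps the relevant mutual informations finite, and the supremum-over-quantizations definition of $I(\cdot;\cdot)$ together with robust typicality~\cite{Orlitsky01} furnishes the continuity needed to preserve each inequality in the limit. Making this convergence precise, rather than the information identities themselves, is the crux of the argument.
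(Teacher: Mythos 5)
Your reading of the achievability side coincides exactly with the paper's: the entire proof of Corollary~\ref{cor_extend} given in the text is the one sentence that it is ``directly deduced'' from Theorems~\ref{theorem_seq2}, \ref{theorem_seq1} and \ref{theorem_seq3}, whose proofs already cover the continuous case by jointly quantizing $X,Y,Z,U,V$ at resolution $n^{-b}$, invoking Lemma~\ref{lemiquant} to control the mutual-information terms, and using robust typicality \cite{Orlitsky01} to keep the error exponents alive. Where you genuinely depart from the paper is in observing that parts (b) and (c) assert capacity \emph{equalities}, so that the converse of \cite{Csiszar00} must also survive the passage to continuous alphabets---a point the paper glosses over entirely (it leans on \cite{Watanabe10a} only for the degraded one-way case, and its own converse arguments, e.g.\ for Proposition~\ref{C_rec2}, are written for a DMS and extended to a CMS only on the achievability side). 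Your sketch of that converse is the right one: the key and the public messages are discrete, so Fano's inequality and the constraints ${H}(A)\leq nR_1$, ${H}(B)\leq nR_2$ are unaffected; the identification of the auxiliary variables, the Markov constraints, and the single-letterization via a time-sharing index are alphabet-agnostic once mutual information is taken in the Pinsker sense as a supremum over finite quantizations; and the support-lemma cardinality bounds are simply dropped. You correctly isolate the residual analytic issue (ruling out $\infty-\infty$ and justifying the limiting arguments) but leave it as a plan rather than carrying it out; since the paper supplies nothing at all on this point, your proposal is, if anything, more honest about where the remaining work lies, and the comparison buys the reader an explicit warning that the corollary as stated for (b) and (c) rests on a converse extension that neither you nor the paper has fully written down.
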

\section{Scenarios for Which Independence Holds Between Reliability and Secrecy} \label{Sec_indep}
As seen in the Example~\ref{counterexample}, achieving the reconciliation capacity might not lead to achieving the secret-key capacity. In this section, we identify special cases for which independence holds between reconciliation and privacy amplification.  Specifically, we prove that independence holds for the two-way one-round SK capacity, the one-way WSK capacity in the case of binary symmetric degraded sources, and the one-way WSK capacity in the case of Gaussian degraded sources. 
As a side result, we obtain an expression for the two-way rate-limited reconciliation capacity and a closed-form expression for the secret-key capacity $C_{\textup{WSK}}(R_1,0)$ in the case of degraded binary symmetric sources.
\subsection{Two-Way Rate-Limited SK capacity} \label{Example_SK}
In this section, we consider the two-way rate-limited SK capacity. That is, the eavesdropper has no correlated observation of the source.

We first show that the two-way rate-limited SK capacity is equal to the two-way rate-limited reconciliation capacity in the following proposition.
\begin{prop} \label{C_rec2}
Let $(\mathcal{X}\mathcal{Y},p_{XY})$ be a MS. 
 For $R_1,R_2\in \mathbb{R}^+$, the rate-limited reconciliation capacity $C_{\textup{rec}}(R_1,R_2)$ is  
\begin{align*}
C_{\textup{rec}}(R_1,R_2) = C_{\textup{SK}}(R_1,R_2).
\end{align*}
\end{prop}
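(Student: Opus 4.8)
The plan is to establish the equality by proving the two inequalities $C_{\textup{rec}}(R_1,R_2) \geq C_{\textup{SK}}(R_1,R_2)$ and $C_{\textup{rec}}(R_1,R_2) \leq C_{\textup{SK}}(R_1,R_2)$ separately, mirroring the one-way argument behind Proposition~\ref{prop_onerec} but carrying the extra communication round. Throughout, the key structural observation is that, with no side information at the eavesdropper, the total knowledge available to Eve at the end of reconciliation is exactly $E=(A,B)$, so that privacy amplification against $E$ links the reconciliation rate $\frac1n[H(S)-H(AB)]$ to an achievable secret-key rate.

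For the achievability direction $C_{\textup{rec}} \geq C_{\textup{SK}}$, I would not build a new code but reuse the sequential strategy that proves Theorem~\ref{theorem_seq3}. Fixing $U,V$ optimal in part (a) of Corollary~\ref{C_SK}, that strategy produces, in its reconciliation phase, a common sequence $S$ together with public messages $A,B$ obeying $H(A)\leq nR_1$ and $H(B)\leq nR_2$, with $\mathbb{P}[S\neq\hat S]\to 0$ and $\frac1n[H(S)-H(AB)]\to I(Y;U)+I(X;V|U)=C_{\textup{SK}}(R_1,R_2)$. Since the subsequent privacy amplification of Theorem~\ref{thMaurer} extracts a key whose rate matches this reconciliation rate, the reconciliation phase alone already certifies an achievable reconciliation rate arbitrarily close to $C_{\textup{SK}}(R_1,R_2)$.

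For the converse $C_{\textup{rec}} \leq C_{\textup{SK}}$, I would start from any reliable reconciliation protocol and write $n\textbf{R}(\mathcal{R}_n)=H(S)-H(AB)=H(S|AB)-H(AB|S)\leq H(S|AB)$. Reliability gives, via Fano's inequality, $H(S|X^nB)=0$ and $H(S|Y^nA)\leq n\epsilon_n$ with $\epsilon_n\to 0$, because $S$ and $\hat S$ are deterministic functions of $(X^n,B)$ and $(Y^n,A)$ respectively and coincide with high probability. I would then single-letterize $H(S|AB)$ under the rate constraints $H(A)\leq nR_1$ and $H(B)\leq nR_2$, introducing auxiliary variables $U$ and $V$ built from the public messages together with suitable source prefixes (e.g.\ $U_i$ from $A$ and appropriate past/future symbols, $V_i$ additionally from $B$), so as to reproduce the Markov conditions (\ref{markov1}), (\ref{markov2}) and the rate constraints (\ref{rate1}), (\ref{rate2}) that define $C_{\textup{SK}}(R_1,R_2)$; a standard time-sharing and cardinality-reduction step then yields $\textbf{R}(\mathcal{R}_n)\leq C_{\textup{SK}}(R_1,R_2)+o(1)$, so every achievable reconciliation rate is at most $C_{\textup{SK}}(R_1,R_2)$.

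The main obstacle I anticipate is this converse single-letterization in the two-way setting: choosing auxiliary variables $U,V$ that simultaneously satisfy both Markov constraints and both rate constraints while tightly accounting for the two-round interaction is delicate, and it is precisely here that the absence of the eavesdropper observation $Z$ is helpful, since no degradedness hypothesis is needed, unlike in Theorem~\ref{theorem_seq2}. A secondary technical point, were one to route the converse instead through privacy amplification, is bridging the gap between the Shannon conditional entropy appearing in $\textbf{R}(\mathcal{R}_n)$ and the conditional min-entropy required by Theorem~\ref{thMaurer}; for a memoryless source this is handled by a typicality argument, and the continuous-source case is recovered through robust typicality~\cite{Orlitsky01} as elsewhere in the paper.
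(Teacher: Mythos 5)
Your overall plan (Wyner--Ziv achievability plus a Fano/single-letterization converse) is the same as the paper's, but the converse as you have written it has a genuine gap exactly where you flag ``the main obstacle'': you never commit to auxiliary variables, and the choice you sketch is not the one that works. The paper's converse (Appendix~\ref{AppendixC_rec}) takes $U=(A,X^{J-1},Y_{J+1}^{n},J)$ with $J$ uniform on $\llbracket 1,n\rrbracket$ (via \cite[Lemma 4.1]{Ahlswede98}, giving $nR_1\geq H(A)\geq I(A;X^n)-I(A;Y^n)=nI(U;X_J|Y_J)$), and --- this is the key step --- takes $V=\hat S$ itself, i.e.\ Bob's entire reconciled sequence, not a per-letter variable built from $B$. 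Because $\hat S=\eta_b(Y^n,A)$, this choice automatically yields the Markov condition $V\rightarrow Y_JU\rightarrow X_J$ of~(\ref{markov2}), and Fano converts $H(B|A)$ into $H(\hat S|X^n)=I(\hat S;Y^n|A)-I(\hat S;X^n|A)=n[I(V;Y_J|U)-I(V;X_J|U)]=nI(UV;Y_J|X_J)$, which is exactly constraint~(\ref{rate2}). The same identification drives the main bound: $H(\hat S)-H(AB)\leq I(\hat S;X^n)+H(\hat S|X^n)-H(A)-H(B|A)\leq nI(UV;X_J)-nI(U;X_J|Y_J)+n\delta(\epsilon)=n[I(Y_J;U)+I(X_J;V|U)]+n\delta(\epsilon)$. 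A $V_i$ ``built additionally from $B$'' together with source prefixes does not obviously satisfy~(\ref{markov2}) (since $B$ depends on $Y^n$, not on $X^n$) nor reproduce the $R_2$ constraint, so the single-letterization you defer to as standard is precisely the nontrivial content of the proof.

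Two smaller points on the achievability direction. First, the logical dependency is inverted: in the paper, the Wyner--Ziv reconciliation protocol of Appendix~\ref{recach} \emph{is} the achievability proof of Proposition~\ref{C_rec2}, and Theorem~\ref{theorem_seq3} is built on top of it, so ``reuse the strategy of Theorem~\ref{theorem_seq3}'' presupposes part of what is to be shown; you still owe the computation that the protocol's reconciliation rate $\frac{1}{n}[H(S)-H(AB)]$ tends to $I(Y;U)+I(X;V|U)$, which follows from $H(S^n)\approx n[I(X;U)+I(Y;V|U)]$ and $H(AB)\approx n[I(X;U|Y)+I(V;Y|XU)]$. Second, Theorem~\ref{thMaurer} plays no role here: the reconciliation rate is defined directly through $H(S)-H(AB)$, so privacy amplification neither certifies nor is needed for this proposition; invoking it only obscures what must actually be verified.
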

\begin{IEEEproof}
See Appendix \ref{AppendixC_rec}.
\end{IEEEproof}
Hence, by Proposition \ref{C_rec2}, the auxiliary random variables that achieve the reconciliation capacity, also achieve the secret-key capacity; combined with Theorem~\ref{theorem_seq3}, we obtain the following corollary.
\begin{cor}
Let $(\mathcal{X}\mathcal{Y},p_{XY})$ be a MS and $R_1,R_2\in \mathbb{R}^+$. The two-way rate-limited SK capacity $C_{\textup{SK}}(R_1,R_2)$ is achievable by a sequential strategy, moreover, reconciliation and privacy amplification steps can be handled independently, as defined in Section~\ref{SecIndep}.
\end{cor}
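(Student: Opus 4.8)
The plan is to obtain the corollary as a direct consequence of Theorem~\ref{theorem_seq3} and Proposition~\ref{C_rec2}, the only real work being to unwind the definition of independence from Section~\ref{SecIndep} in the setting where Eve has no side information. The achievability half of the statement is immediate: Theorem~\ref{theorem_seq3} asserts that every rate $R<C_{\textup{SK}}(R_1,R_2)$ is achievable by a sequential strategy, i.e.\ by a reconciliation protocol followed by privacy amplification with extractors, so it only remains to verify independence. By the definition of Section~\ref{SecIndep}, independence requires that \emph{achieving the reconciliation capacity} in such a strategy already yields the full secret-key capacity.

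To establish this, I would fix a sequence of reconciliation protocols $\{\mathcal{R}_n\}$ attaining the reconciliation capacity, so that the reconciliation rate $\textbf{R}(\mathcal{R}_n)=\frac{1}{n}[H(S|\mathcal{R}_n)-H(AB|\mathcal{R}_n)]$ tends to $C_{\textup{rec}}(R_1,R_2)$. Since Eve has no observation $Z^n$ in the SK setting, her entire knowledge $E$ of the reconciled sequence $S$ reduces to the public transcript $(A,B)$, so the reconciliation rate is exactly the entropy rate of $S$ that remains hidden from Eve. The proof of Theorem~\ref{theorem_seq3} already furnishes the conversion from this Shannon-entropy guarantee to a conditional min-entropy bound ${H}_{\infty}(S|E=e)\geq\gamma n$ with $\gamma$ arbitrarily close to $C_{\textup{rec}}(R_1,R_2)$; feeding $S$ into the extractor of Theorem~\ref{thMaurer} then produces a key of rate arbitrarily close to $C_{\textup{rec}}(R_1,R_2)$ satisfying the strong secrecy and uniformity conditions of Definition~\ref{defsec}. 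Invoking Proposition~\ref{C_rec2}, $C_{\textup{rec}}(R_1,R_2)=C_{\textup{SK}}(R_1,R_2)$, so optimizing the reconciliation step alone—with no reference to any secrecy constraint—extracts a key at the best possible rate $C_{\textup{SK}}(R_1,R_2)$, which is precisely independence.

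I expect no genuine obstacle internal to the corollary itself: the delicate step, namely the asymptotic-equipartition passage from the Shannon-entropy reconciliation rate to the min-entropy input required by Theorem~\ref{thMaurer}, is already carried out inside the proof of Theorem~\ref{theorem_seq3}. The entire content of the corollary is thus the coincidence, guaranteed by Proposition~\ref{C_rec2}, of the auxiliary variables that maximize $\textbf{R}(\mathcal{R}_n)$ with those that maximize the secret-key rate. This is exactly what fails in the WSK setting of Example~\ref{counterexample}, where the variables maximizing the reconciliation rate $I(Y;U)$ differ from those maximizing $I(Y;U)-I(Z;U)$; the absence of $Z$ here collapses the two optimizations into one, which is what makes reconciliation and privacy amplification independent for the two-way SK capacity.
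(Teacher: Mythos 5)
Your proposal is correct and follows essentially the same route as the paper, which justifies the corollary in a single sentence: by Proposition~\ref{C_rec2} the optimizations defining $C_{\textup{rec}}(R_1,R_2)$ and $C_{\textup{SK}}(R_1,R_2)$ coincide, so the auxiliary random variables achieving the reconciliation capacity also achieve the SK capacity, and Theorem~\ref{theorem_seq3} then gives achievability by a sequential strategy. Your closing observation---that the absence of $Z$ collapses the two maximizations into one, which is exactly what fails in Example~\ref{counterexample}---is precisely the paper's intended content, so nothing further is needed.
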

\subsection{One-Way Rate-Limited WSK capacity for Degraded Binary Symmetric Sources} \label{Example_bin}
In this section, we assume a degraded DMS. We first refine Proposition~\ref{C_rec2} and Theorem~\ref{C_WSK}.\ref{C_WSK1}  in the following proposition.
\begin{prop} \label{Sufprop}
Let $(\mathcal{X}\mathcal{Y}\mathcal{Z},p_{XYZ})$ be a DMS such that $ X \text{---} Y\text{---} Z$. Assume $R_1 \in \mathbb{R}^+$ and $R_2=0$. We tighten the rate constraint in~(\ref{rate1}), (\ref{d_30}) and the range constraint in~(\ref{dddd}), (\ref{d_31}) as follows.
 \begin{enumerate} [(a)]
 \item \label{C_rec3} The one-way rate-limited reconciliation capacity is  
 \begin{align*}
& C_{\textup{rec}}(R_1,0)  = \displaystyle\max_{U} {I}(Y;U)  
\end{align*}
\vspace*{-1em}
\text{ subject to }
\vspace*{-1em}
\begin{align}
&R_1 = {I}(X;U|Y),   \label{rate1}\\
& U \text{---} X \text{---} Y, \nonumber \\ 
&|\mathcal{U}| \leq |\mathcal{X}|. \label{dddd}
\end{align}
%
\item  \label{C_s_eq} The one-way rate-limited secret-key capacity is 
\begin{align*}
& C_{\textup{WSK}}(R_1,0) = \displaystyle\max_{U} \left( {I}(Y;U) - {I}(Z;U)\right) \end{align*}
\vspace*{-1.11em}
\text{ subject to }
\vspace*{-1em}
\begin{align} 
& R_1 = {I}(X;U|Y), \label{d_30} \\ \nonumber
&  U \text{---} X \text{---} Y\text{---} Z, \label{d_31}\\ 
&|\mathcal{U}| \leq |\mathcal{X}| .
\end{align}
\end{enumerate}
\end{prop}
\begin{IEEEproof}
See Appendix \ref{AppendixC_rec1}.
\end{IEEEproof}
\begin{rem}
The expression of the WSK capacity in Proposition \ref{Sufprop}.\ref{C_s_eq} is obtained from Theorem~\ref{C_WSK}.\ref{C_WSK1} and is due to Watanabe \cite{Watanabe10a}. We refine this result by proving that equality holds in the rate constraint and by improving the range constraint of $\mathcal{U}$; The argument used to show the equality in the rate constraint of  Propositions~\ref{Sufprop}.\ref{C_rec3} and \ref{Sufprop}.\ref{C_s_eq}, is one that applies to various convex maximization problems: the maximum principle (see Appendix~\ref{AppendixC_rec1}). This refinement is critical for the analysis of binary sources, especially to solve the optimization problem for the WSK capacity in Proposition~\ref{discretegen}, and thus to determine the WSK capacity for degraded binary symmetric sources in Example~\ref{Example_binary}.
\end{rem}
\begin{rem} \label{treshold_bin}
As soon as $R_1$ is at least ${H}(X|Y)$, $C_{\textup{rec}}(R_1,0)$ (resp.  $C_{\textup{WSK}}(R_1,0)$) attains the same maximum ${I}(X;Y)$ (resp. ${I}(X;Y) - {I}(X;Z)$) as in the case $R_1=+\infty$.
\end{rem}
%
%
%
The solution of the maximization problem in Proposition~\ref{Sufprop}.\ref{C_s_eq} can be obtained explicitly, when the source has symmetry properties.

\begin{prop}\label{discretegen}
Let $(\mathcal{X}\mathcal{Y}\mathcal{Z},p_{XYZ})$ be a DMS such that $ X \text{---} Y\text{---} Z$. Assume that $|\mathcal{X}|=2$ and let $R_1 \in \mathbb{R}_+^*$.

 If the channels $p_{Y|X}$ and $p_{Z|X}$ are symmetric~\cite{Gallager68}, then the auxiliary RV $U$ achieving $C_{\textup{WSK}}(R_1,0)$ in Proposition \ref{Sufprop}.\ref{C_s_eq}, is such that the test-channel $p_{U|X}$ is a BSC with parameter $\beta_0$, with $\beta_0$, any of the two symmetric solutions of
\begin{equation*}
R_1 = {I}(U;X) - {I}(U;Y).
\end{equation*}
\end{prop}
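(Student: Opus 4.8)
The plan is to collapse the two-parameter maximization of Proposition~\ref{Sufprop}.\ref{C_s_eq} into a one-dimensional moment problem governed by the posterior bias of $X$, and then to use the symmetry of the source to force the optimizer to be symmetric. For an admissible $U$ with $U\to X\to Y\to Z$, introduce the scalar random variable $S\triangleq \mathbb{P}[X=1\mid U]$. Since the test channel is binary and the source is a binary symmetric source (so $X$, and hence $Y$ and $Z$, are uniform), the conditional laws $X\mid U{=}u$, $Y\mid U{=}u$, $Z\mid U{=}u$ depend on $u$ only through $s=\mathbb{P}[X=1\mid U{=}u]$; for binary symmetric channels $p_{Y|X}=\mathrm{BSC}(\beta_{XY})$ and $p_{Z|X}=\mathrm{BSC}(\beta_{XZ})$ with $\beta_{XZ}=\beta_{XY}\star\beta_{YZ}$, these laws are Bernoulli with parameters $s$, $s\star\beta_{XY}$ and $s\star\beta_{XZ}$. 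Writing $\phi(s)\triangleq H_b(s\star\beta_{XZ})-H_b(s\star\beta_{XY})$ and $\psi(s)\triangleq H_b(s\star\beta_{XY})-H_b(s)$, the objective becomes $I(Y;U)-I(Z;U)=\mathbb{E}[\phi(S)]$, the rate constraint becomes $R_1=I(X;U|Y)=I(X;U)-I(Y;U)=\mathbb{E}[\psi(S)]$, and the source marginal imposes $\mathbb{E}[S]=\tfrac12$. Thus I would recast the problem as maximizing the linear functional $\mu\mapsto\int\phi\,d\mu$ over laws $\mu$ of $S$ subject to the two linear constraints $\int s\,d\mu=\tfrac12$ and $\int\psi\,d\mu=R_1$.

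The structural fact that drives everything is that $s\mapsto 1-s$ sends $s\star\beta$ to $1-(s\star\beta)$ and hence leaves $H_b(s\star\beta)$ unchanged, so $\phi$ and $\psi$ are symmetric about $s=\tfrac12$. Consequently, if $\mu^{\star}$ is optimal then so is its reflection $\mu^{\star}(1-\,\cdot\,)$, and by linearity of the objective and of the two constraints in the law, the symmetrized law $\tfrac12\mu^{\star}+\tfrac12\mu^{\star}(1-\,\cdot\,)$ is feasible with the same value. I may therefore assume the optimal law is symmetric about $\tfrac12$, in which case the mean constraint is automatic and the problem folds to maximizing $\int_{[1/2,1]}\phi\,d\nu$ under the single constraint $\int_{[1/2,1]}\psi\,d\nu=R_1$. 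A symmetric law supported on the pair $\{\beta_0,1-\beta_0\}$ with equal weights is exactly a BSC test channel $p_{U|X}$ of parameter $\beta_0$; evaluating the constraint there gives $\psi(\beta_0)=R_1$, i.e. $R_1=I(U;X)-I(U;Y)=H_b(\beta_0\star\beta_{XY})-H_b(\beta_0)$, whose two solutions are symmetric about $\tfrac12$, matching the statement.

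It then remains to show that a single symmetric pair, rather than a genuine mixture, attains the maximum, and this is the crux. After folding, I must prove that the optimal $\nu$ is a single atom $\delta_{\beta_0}$ -- equivalently a two-point symmetric law of $S$, which is consistent with the range bound $|\mathcal{U}|\leq|\mathcal{X}|=2$ supplied by Proposition~\ref{Sufprop}. Since $\psi$ is monotone on $[\tfrac12,1]$, single-atom optimality is equivalent to the parametric curve $t\mapsto(\psi(t),\phi(t))$ coinciding with its own upper concave envelope, i.e. to $\phi$ being a concave function of $\psi$ along the curve, so that spreading mass at fixed $\int\psi\,d\nu$ cannot raise $\int\phi\,d\nu$. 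I would reduce this to a sign analysis of $\frac{d}{dt}\!\left(\phi'(t)/\psi'(t)\right)$ and control it using the convexity of $h\mapsto H_b(\beta\star H_b^{-1}(h))$ that underlies Mrs.\ Gerber's Lemma, applied with $\beta=\beta_{XZ}$ and $\beta=\beta_{XY}$. Establishing this concavity is the main obstacle, because the objective is a difference of conditional entropies for which the one-sided Mrs.\ Gerber bounds point in the wrong direction; it is precisely the symmetry reduction above that turns the residual estimate into a one-dimensional inequality and makes it tractable. The same scheme covers a general symmetric (not necessarily BSC) channel by replacing $H_b(s\star\beta_{XY})$ and $H_b(s\star\beta_{XZ})$ with the conditional entropies $H(Y\mid U{=}u)$ and $H(Z\mid U{=}u)$ viewed as functions of $s$, which remain symmetric about $\tfrac12$ by the channel symmetry.
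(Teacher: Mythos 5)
Your reduction to a moment problem in the posterior bias $S=\mathbb{P}[X=1\mid U]$, and the observation that channel symmetry makes $\phi$ and $\psi$ invariant under $s\mapsto 1-s$, are sound and closely mirror the paper's use of symmetry to write $H(Y|U=u_i)$ and $H(Z|U=u_i)$ as a single function $g_{Y|X}(\beta_i)$, $g_{Z|X}(\beta_i)$ of the bias. But the proof stops exactly where the work begins: you yourself label the two-point (single symmetric pair) optimality as the crux, propose to derive it from concavity of $\phi$ as a function of $\psi$ along the curve via a Mrs.\ Gerber--type estimate, and then concede that the available one-sided bounds point in the wrong direction. Nothing in the proposal establishes that a mixture of several symmetric pairs cannot beat a single pair, so the decisive step is missing rather than merely routine. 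Two further issues: (i) your symmetrization relies on the mean constraint $\mathbb{E}[S]=\tfrac12$, i.e.\ $X\sim\mathcal{B}(1/2)$, which neither the proposition nor the paper's proof assumes (the paper explicitly notes validity for $X\sim\mathcal{B}(\alpha)$); and (ii) symmetrizing a two-atom law produces a four-atom law, so you cannot invoke the cardinality bound $|\mathcal{U}|\leq|\mathcal{X}|=2$ of Proposition~\ref{Sufprop} to collapse back to a single pair --- you would have to prove two-point optimality over all laws of $S$, which is precisely what is left open.

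The paper closes this gap by a mechanism that needs no concavity of $\phi$ in $\psi$. Starting from $|\mathcal{U}|\leq 2$, it parametrizes $U$ by $(\beta_1,\beta_2)=\bigl(p(X{=}1|u_1),p(X{=}0|u_2)\bigr)$, forms the region $\mathcal{R}_2$ traced by all such pairs and the region $\mathcal{R}_1$ traced by $\beta_1=\beta_2=\beta_0$, and shows that every supporting hyperplane of $\mathcal{R}_2$ meets $\mathcal{R}_1$: for each weight $\alpha\in[0,1]$ the weighted objective decomposes as an average $\sum_i p(u_i)F_\alpha(\beta_i)$ of a one-variable function and is therefore dominated by $\max_\beta F_\alpha(\beta)$, a value attained at a point of $\mathcal{R}_1$. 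Lemma~\ref{convexcharact} then yields $\mathcal{R}_1=\mathcal{R}_2$, and the equality-in-the-rate-constraint refinement of Proposition~\ref{Sufprop} pins the optimum to a single $\beta_0$ solving $R_1=I(U;X)-I(U;Y)$. If you wish to keep your measure-theoretic formulation, the analogous repair is to dualize: for each multiplier the functional $\mu\mapsto\int(\phi-\lambda\psi)\,d\mu$ is linear in $\mu$ and hence maximized by a point mass of $S$ given $U$, and it is this Lagrangian decomposition --- not a Mrs.\ Gerber inequality --- that substitutes for the unproven concavity.
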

\begin{IEEEproof}
See Appendix \ref{Appendixdiscretegeneral}.
\end{IEEEproof}
Although the result stated in Proposition~\ref{discretegen} seems intuitive and non-surprising, the proof is not straightforward, as a crucial step is the improvements proposed in Proposition~\ref{Sufprop}. Hence, if the channels $p_{Y|X}$ and $p_{Z|X}$ are symmetric, by Proposition~\ref{discretegen}, the auxiliary RV $U$ achieving $C_{\textup{rec}}(R_1,0)$ in Proposition~\ref{Sufprop}.\ref{C_rec3} also achieves $C_{\textup{WSK}}(R_1,0)$ in Proposition~\ref{Sufprop}.\ref{C_s_eq}; combined with Theorem \ref{theorem_seq1}, we obtain the following corollary.
\begin{cor}\label{cordiscretegen}
Let $(\mathcal{X}\mathcal{Y}\mathcal{Z},p_{XYZ})$ be a DMS such that $ X \text{---} Y\text{---} Z$ and $|\mathcal{X}|=2$. Let $R_1 \in \mathbb{R}_+^*$. We assume the channels $p_{Y|X}$ and $p_{Z|X}$ to be symmetric. The one-way rate-limited WSK capacity $C_{\textup{WSK}}(R_1,0)$ is achievable by a sequential strategy, moreover, reconciliation and privacy amplification steps can be handled independently, as defined in Section~\ref{SecIndep}.
\end{cor}
The following example illustrates Proposition~\ref{discretegen} and Corollary~\ref{cordiscretegen}.
\begin{ex}\label{Example_binary}
As depicted in Figure \ref{examplebscfig}, assume that $X$ and $Y$ (respectively $Y$ and $Z$) are connected by a binary symmetric channel (BSC) with crossover probability $p$ (respectively $q$). We also assume $X \sim \mathcal{B}(1/2)$ to obtain simpler expressions; however, the application of Proposition~\ref{discretegen} remains valid for $X \sim \mathcal{B}(\alpha)$, $\alpha \in [0,1]$.
\begin{figure}
\begin{center}
  \includegraphics[width=8.2cm]{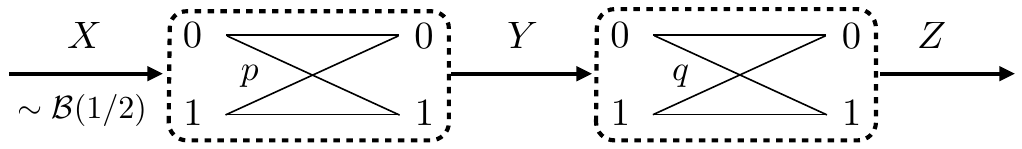}
  \caption{Binary DMS studied in Example \ref{Example_binary}}
\label{examplebscfig}
\vspace*{-1em}
\end{center}
\end{figure}
By Proposition \ref{discretegen}, the reconciliation capacity is \label{C_bin_rec}
$$
C_{\textup{rec}}(R_1,0)= 
 \begin{cases}
   1-{H}_b(p \star \beta_0),
 &\text{if } R_1 \leq {H}(X|Y), \\
   1-{H}_b(p), &\text{if } R_1 \geq {H}(X|Y),
   \end{cases}
$$
and the WSK capacity is \label{C_bin_s}
\begin{multline*}
C_{\textup{WSK}}(R_1,0) =\\
 \begin{cases}
   {H}_b\left( p\star \beta_0 \star q\right)-{H}_b(p \star \beta_0),
 & \hspace*{-0.1cm} \text{if } R_1 \leq {H}(X|Y), \\
   {H}_b(p \star q)  -{H}_b(p), &\hspace*{-0.1cm}\text{if } R_1 \geq {H}(X|Y),
   \end{cases}
\end{multline*}
with $\beta_0$, any of the two symmetric solutions of the equation ${H}_b(p \star \beta_0)-{H}_b(\beta_0)=R_1$.

Figure \ref{keycapbin} (resp. Figure \ref{reccapbin}) illustrates Remark~\ref{treshold_bin} and the fact that the reconciliation capacity $C_{\textup{rec}}(R_1,0)$ (resp. the secret key-capacity $C_{\textup{WSK}}(R_1,0)$) is monotonically increasing in the communication rate constraint.
\begin{figure}[H]
\input{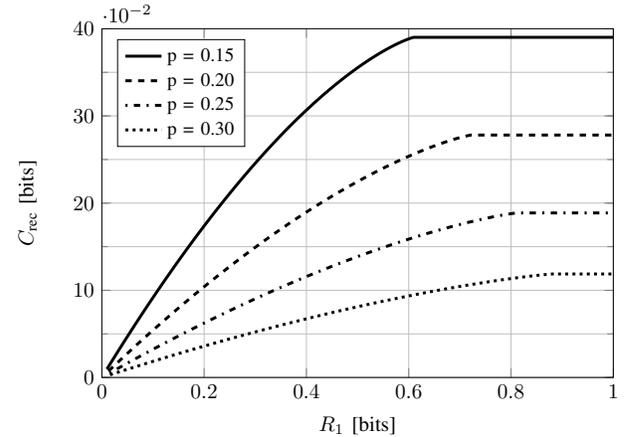}
  \caption{Reconciliation capacity $C_{\textup{rec}}(R_1,0)$} \label{reccapbin}
  \vspace*{-0.7em}
\end{figure}
\begin{figure}[H]
\input{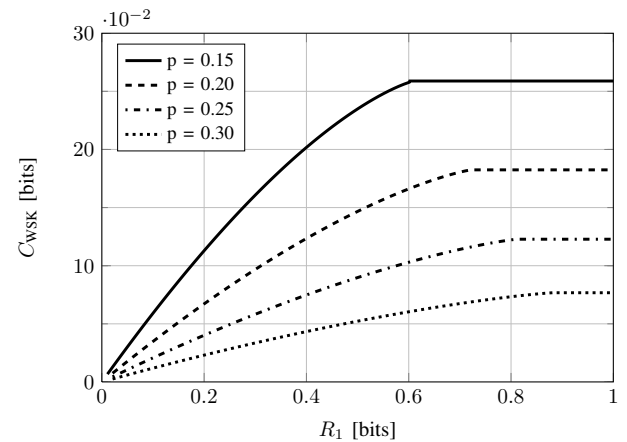}
  \caption{WSK capacity $C_{\textup{WSK}}(R_1,0)$ $(q=0.2)$}\label{keycapbin}
  \vspace*{-0.2em}
\end{figure}
Corollary \ref{cordiscretegen} states that choosing a test-channel $p_{U|X}$ as a BSC with parameter $\beta_0$, achieves $C_{\textup{rec}}(R_1,0)$ and $C_{\textup{WSK}}(R_1,0)$, so that reconciliation and privacy amplification can be designed independently. Consequently, for any other channel $p_{Z|Y}$, as long as $p_{Z|X}$ stays symmetric, the reconciliation capacity and the optimal reconciliation protocol for sequential key-generation remains the same. It is for instance the case if  we choose $p_{Z|Y}$ as a binary erasure channel (BEC), as depicted in Figure~\ref{examplebecfig}. Moreover, in this case, Proposition~\ref{discretegen} still allows us to determine the WSK capacity:
$$C_{\textup{WSK}}^{(\text{erasure})}(R_1,0) = 
 \begin{cases}
  \epsilon (1-{H}_b(p \star \beta_0)),
 & \hspace*{-0.1cm} \text{if } R_1 \leq {H}(X|Y), \\
   \epsilon (1 - {H}_b(p)), &\hspace*{-0.1cm}\text{if } R_1 \geq {H}(X|Y),
   \end{cases}
$$ 
where $\epsilon$ is the erasure probability characterizing $p_{Z|Y}$.
\end{ex}
\begin{figure}
\begin{center}
  \includegraphics[width=8.2cm]{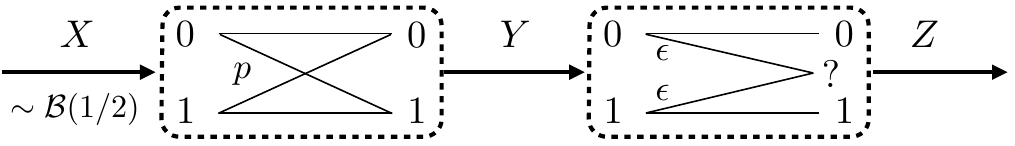}
  \caption{Binary erasure channel studied in Example \ref{Example_binary}}
\label{examplebecfig}
\end{center}
\end{figure}
\begin{rem}
We can show that the sequential strategy used in this section can also be applied to similar models for biometric secrecy~\cite{Ignatenko09}.
\end{rem}
\subsection{One-Way Rate-Limited WSK Capacity for Degraded Gaussian Sources} \label{Example_gauss}
In this section, we consider a degraded Gaussian MS with one-way rate-limited public communication. We assume that $X$, $Y$, and $Z$ are zero-mean correlated Gaussian sources on $\mathbb{R}$, and that Alice, Bob, and Eve know the covariance matrix of $(X,Y,Z)$.
We first refine the reconciliation capacity and the secret-key capacity to give the counterpart of Proposition~\ref{Sufprop}. We then provide the reconciliation capacity and the secret-key capacity, and show that reconciliation and privacy amplification can be treated independently. We also briefly discuss the performance of vector quantization compared to scalar quantization for the reconciliation step, thereby providing a counterpart of Remark~\ref{treshold_bin}.
\begin{prop}  \label{Prop_gauseq}
Let $(\mathcal{X}\mathcal{Y}\mathcal{Z},p_{XYZ})$ be a zero-mean Gaussian MS such that $ X \text{---} Y\text{---} Z$. Assume $R_1 \in \mathbb{R}^+$ and $R_2=0$. 
  \begin{enumerate} [(a)]
 \item \label{C_recg} The one-way rate-limited reconciliation capacity is 
\begin{align*}
& C_{\textup{rec}}(R_1,0)  =  \displaystyle\max_{U} {I}(Y;U)  
\end{align*}
\vspace*{-1em}
\text{ subject to }
\vspace*{-1em}
\begin{align} \label{ratecstrsigma}
&R_1 = {I}(X;U|Y),   \\
& U \text{---} X \text{---} Y, \nonumber.
\end{align}
\item  \label{C_s_eq_C}
The one-way rate-limited WSK capacity is 
\begin{align*}
& C_{\textup{WSK}}(R_1,0) = \displaystyle\max_{U} \left( {I}(Y;U) - {I}(Z;U)\right) 
\end{align*}
\vspace*{-1em}
\text{ subject to }
\vspace*{-1em}
\begin{align*} 
& R_1 = {I}(X;U|Y), \label{eqeq} \\ \nonumber
&  U \text{---} X \text{---} Y \text{---} Z, 
\end{align*}
\end{enumerate}
\end{prop}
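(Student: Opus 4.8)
The plan is to obtain both formulas from the already-established capacity expressions and then tighten the rate constraint to an equality. For part~(\ref{C_s_eq_C}) I would start from the continuous-source one-way WSK capacity given by part~(b) of Corollary~\ref{cor_extend} (the CMS extension of Theorem~\ref{C_WSK}.\ref{C_WSK1}); invoking the degradedness $X\to Y\to Z$ exactly as Watanabe~\cite{Watanabe10a} does in the discrete case, the auxiliary variable $V$ is dropped, leaving $\max_U\bigl(I(Y;U)-I(Z;U)\bigr)$ subject to $R_1\geq I(X;U|Y)$ and $U\to X\to Y\to Z$. Part~(\ref{C_recg}) is the special case obtained by discarding $Z$, so that $I(Z;U)=0$, together with the CMS analogue of Proposition~\ref{prop_onerec} to identify $C_{\textup{rec}}(R_1,0)$ with $C_{\textup{SK}}(R_1,0)$. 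In both cases the only discrepancy with the claimed statement is that the rate constraint appears as an inequality, so the entire content of the proposition is to show that equality is optimal.

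To prove tightness I would use the maximum-principle argument of Appendix~\ref{AppendixC_rec1}, recast for a continuous alphabet. Suppose an optimal test channel $U^\star$ had slack, $I(X;U^\star|Y)<R_1$. Introduce an independent binary switch $T$ with $\mathbb{P}[T=1]=\lambda$ and set $U'=(T,\tilde U)$, where $\tilde U=U^\star$ on $\{T=0\}$ and $\tilde U$ equals a second admissible test channel $\tilde U_1$ on $\{T=1\}$. Since $T$ is independent of the source, both the objective and the rate become exact convex combinations, $I(Y;U')-I(Z;U')=(1-\lambda)\bigl(I(Y;U^\star)-I(Z;U^\star)\bigr)+\lambda\bigl(I(Y;\tilde U_1)-I(Z;\tilde U_1)\bigr)$ and likewise $I(X;U'|Y)=(1-\lambda)I(X;U^\star|Y)+\lambda I(X;\tilde U_1|Y)$. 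Choosing $\tilde U_1$ more informative than $U^\star$, with objective value closer to the unconstrained supremum $I(X;Y)-I(X;Z)=I(X;Y|Z)$, and then $\lambda$ small enough to keep $I(X;U'|Y)\leq R_1$, yields a strictly larger objective, contradicting optimality. Hence the optimum satisfies $R_1=I(X;U|Y)$, and deleting $Z$ gives the corresponding statement for part~(\ref{C_recg}).

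The genuine difficulty is carrying this through for a continuous source, which is exactly what the regularity hypothesis $f_{U|X}\in\mathcal{B}_c(K)$ is designed to enable, through the robust-typicality analysis~\cite{Orlitsky01} used elsewhere in the paper: it guarantees that $I(Y;U)$, $I(Z;U)$ and $I(X;U|Y)$ are finite and continuous in the test channel, and it legitimizes both Watanabe's elimination of $V$ and the limiting arguments in the CMS setting. The step I expect to be the main obstacle is the mixing construction, because, unlike the discrete case, the maximally informative choice $U=X$ is inadmissible (its conditional density is a Dirac mass, not an element of $\mathcal{B}_c(K)$) and $I(X;U|Y)$ is unbounded, so there is no finite saturation threshold playing the role of $H(X|Y)$. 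I would handle this by taking $\tilde U_1$ from a sequence of admissible, increasingly informative channels whose objective values approach the supremum, and verifying via the $\mathcal{B}_c(K)$ bound that this supremum is not attained at any finite rate; this is precisely what makes the constraint active for every finite $R_1$ and forces equality.
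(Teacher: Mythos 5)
Your reduction to the inequality-constrained CMS formulas (Corollary~\ref{cor_extend} for part~(b), the one-way specialization of Proposition~\ref{C_rec2} for part~(a)) matches the paper, but your argument for tightening the rate constraint to an equality takes a genuinely different route from Appendix~\ref{AppendixCMS}. The paper establishes convexity of $U\mapsto I(Y;U)-I(Z;U)$ and $U\mapsto I(X;U|Y)$ in $p_{U|X}$ (Lemma~\ref{fconvex}), shows that the extreme points of the sublevel set $\{I(X;U|Y)\leq R_1\}$ lie on the level set, and invokes the maximum principle --- Krein--Milman in the discrete case, replaced for the continuous case by a theorem of Oates applicable because $\mathcal{B}_c(K)$ has the positive binary intersection property. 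Your time-sharing perturbation (augmenting $U$ with the switch $T$ so that objective and rate become exact convex combinations) exploits the same convexity structure but bypasses the compactness and extreme-point machinery altogether; in particular it avoids the functional-analytic substitute for Krein--Milman, which is the least transparent ingredient of the paper's continuous-case proof. What you pay for this is twofold. First, your contradiction presupposes that a maximizer $U^\star$ exists; the paper's compactness route is precisely what guarantees attainment, so you should either recast the claim in terms of suprema or supply an attainment argument. Second, and more substantively, the perturbation only bites when the value at the slack optimum is strictly below the supremum over \emph{all} admissible test channels. You correctly isolate this residual case and assert that the unconstrained supremum $I(X;Y)-I(X;Z)$ is never attained at finite rate, but this must actually be proved for a general CMS (equality forces $I(X;Y|U)=I(X;Z|U)$, which for non-degenerate degraded continuous sources is incompatible with $f_{U|X}\in\mathcal{B}_c(K)$ and finite $I(X;U|Y)$). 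If that case could occur, the equality-constrained formula would genuinely understate the capacity, so this verification is the crux of the statement rather than a removable technicality.
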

Proposition \ref{Prop_gauseq} follows from Proposition~\ref{gaus_s_rec}.
%
%
%
 
%
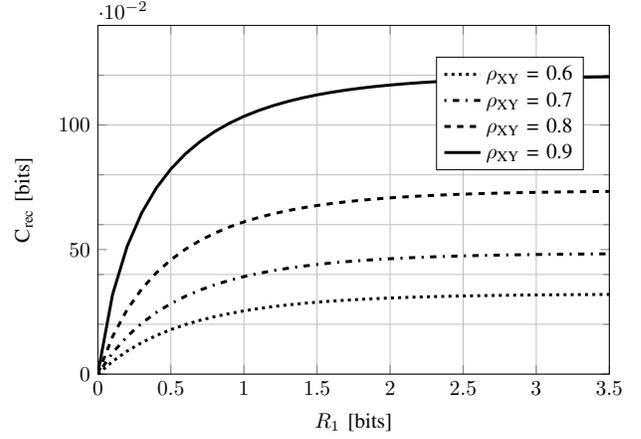
\begin{figure}
%
%
%
\definecolor{mycolor1}{rgb}{0,1,1}%
\begin{tikzpicture}[scale=0.8]

\begin{axis}[%
width=8.5cm,
height=5.8cm,
scale only axis,
xmin=0,
xmax=3.5,
xlabel={$R_1$ [bits]},
xmajorgrids,
ymin=0,
ymax=1.4,minor ytick={0,0.2,...,1.4},scaled y ticks={base 10:2},
ylabel={$\text{C}_{\textup{rec}}$ [bits]},
yminorgrids,
legend style={at={(0.95,0.91)}},
legend style={draw=black,fill=white,legend cell align=left}
]
\addplot [
color=black,
dotted,
line width=1.4pt
]
table[row sep=crcr]{
0.001 0.000561891386219815\\
0.101 0.0511568521334921\\
0.201 0.0924862328834469\\
0.301 0.126633120864082\\
0.401 0.155098008887653\\
0.501 0.1789953290735\\
0.601 0.199173457042643\\
0.701 0.216291401209003\\
0.801 0.230869701287935\\
0.901 0.24332530738204\\
1.001 0.25399614590694\\
1.101 0.263158838186715\\
1.201 0.271041749182474\\
1.301 0.277834775277374\\
1.401 0.283696806693318\\
1.501 0.288761500253605\\
1.601 0.293141803441969\\
1.701 0.296933541349791\\
1.801 0.300218290441529\\
1.901 0.303065702572071\\
2.001 0.305535400242336\\
2.101 0.307678533840003\\
2.201 0.309539069765511\\
2.301 0.31115486235363\\
2.401 0.312558550655065\\
2.501 0.313778312266328\\
2.601 0.314838499673627\\
2.701 0.31576017943336\\
2.801 0.31656159053904\\
2.901 0.317258535227678\\
3.001 0.317864713043787\\
3.101 0.318392007048771\\
3.201 0.3188507295212\\
3.301 0.319249833252035\\
3.401 0.319597093532653\\
3.501 0.319899265112544\\
3.601 0.320162217729636\\
3.701 0.320391053259762\\
3.801 0.320590207069963\\
3.901 0.320763535775064\\
};
\addlegendentry{$\rho{}_{\text{XY}}\text{ = 0.6}$};

\addplot [
color=black,
dash pattern=on 1pt off 3pt on 3pt off 3pt,
line width=1.4pt
]
table[row sep=crcr]{
0.001 0.000959480260467604\\
0.101 0.0853033432355531\\
0.201 0.151469307365225\\
0.301 0.204508714842193\\
0.401 0.247705905083769\\
0.501 0.283313904296278\\
0.601 0.312942894209072\\
0.701 0.337781293428929\\
0.801 0.358729205653238\\
0.901 0.376482898475939\\
1.001 0.391590430098763\\
1.101 0.404489512964862\\
1.201 0.415534027645105\\
1.301 0.425013050157126\\
1.401 0.433164802505676\\
1.501 0.440187075958658\\
1.601 0.446245150316838\\
1.701 0.451477901054124\\
1.801 0.456002572116561\\
1.901 0.459918550654126\\
2.001 0.463310384474258\\
2.101 0.466250217368815\\
2.201 0.468799771576564\\
2.301 0.471011974059112\\
2.401 0.472932299799076\\
2.501 0.474599888200601\\
2.601 0.476048476016199\\
2.701 0.47730718076417\\
2.801 0.478401161453135\\
2.901 0.479352177973822\\
3.001 0.480179066312462\\
3.101 0.480898143468188\\
3.201 0.481523553389148\\
3.301 0.482067563210035\\
3.401 0.482540817453043\\
3.501 0.482952556551809\\
3.601 0.483310805003855\\
3.701 0.483622533598259\\
3.801 0.483893799461278\\
3.901 0.484129867082093\\
};
\addlegendentry{$\rho{}_{\text{XY}}\text{ = 0.7}$};

\addplot [
color=black,
dashed,
line width=1.4pt
]
table[row sep=crcr]{
0.001 0.00177436201768549\\
0.101 0.150662908132666\\
0.201 0.259187312142113\\
0.301 0.34196350311635\\
0.401 0.407019681109543\\
0.501 0.459239028882568\\
0.601 0.501810852939345\\
0.701 0.536930404259429\\
0.801 0.566170707448617\\
0.901 0.590695191300222\\
1.001 0.611386529782709\\
1.101 0.628928461349317\\
1.201 0.643859806377652\\
1.301 0.656611317686163\\
1.401 0.6675315346869\\
1.501 0.676905368349756\\
1.601 0.684967747460972\\
1.701 0.691913827811611\\
1.801 0.697906757775453\\
1.901 0.703083673101892\\
2.001 0.70756038623663\\
2.101 0.711435098097894\\
2.201 0.714791367403945\\
2.301 0.717700508751322\\
2.401 0.720223545917832\\
2.501 0.722412815072101\\
2.601 0.724313289649855\\
2.701 0.725963681912187\\
2.801 0.727397363817899\\
2.901 0.728643140578519\\
3.001 0.729725903259874\\
3.101 0.730667181442907\\
3.201 0.731485612828849\\
3.301 0.732197343460678\\
3.401 0.732816369709587\\
3.501 0.733354831177275\\
3.601 0.733823262070374\\
3.701 0.734230807321343\\
3.801 0.734585408692064\\
3.901 0.734893965250269\\
};
\addlegendentry{$\rho{}_{\text{XY}}\text{ = 0.8}$};

\addplot [
color=black,
solid,
line width=1.4pt
]
table[row sep=crcr]{
0.001 0.00424767341285309\\
0.101 0.319386682179716\\
0.201 0.513139484813807\\
0.301 0.647690792818793\\
0.401 0.747338035781948\\
0.501 0.824125005113968\\
0.601 0.884903074391308\\
0.701 0.933940534208474\\
0.801 0.974073980860522\\
0.901 1.00728178854104\\
1.001 1.03499624811633\\
1.101 1.05828545140714\\
1.201 1.077965179294\\
1.301 1.09467084392497\\
1.401 1.10890548296324\\
1.501 1.12107279312977\\
1.601 1.13150048461712\\
1.701 1.14045718164321\\
1.801 1.14816490498898\\
1.901 1.15480845922836\\
2.001 1.16054260617204\\
2.101 1.16549762545409\\
2.201 1.16978368030062\\
2.301 1.17349428466286\\
2.401 1.1767090850837\\
2.501 1.17949611336797\\
2.601 1.18191362582754\\
2.701 1.1840116161006\\
2.801 1.18583306771635\\
2.901 1.18741499730308\\
3.001 1.18878932800296\\
3.101 1.18998362414974\\
3.201 1.19102171181241\\
3.301 1.19192420486345\\
3.401 1.19270895240602\\
3.501 1.19339142040897\\
3.601 1.19398501804861\\
3.701 1.19450137739079\\
3.801 1.19495059355463\\
3.901 1.19534143129698\\
};
\addlegendentry{$\rho{}_{\text{XY}}\text{ = 0.9}$};

\end{axis}
\end{tikzpicture}%
  \caption{ Reconciliation capacity $C_{\textup{rec}}(R_1,0)$ for different correlation coefficients $\rho_{XY}$}
\label{examplegauss}
\end{figure}

\begin{figure}
%
%
%
\definecolor{mycolor1}{rgb}{0,1,1}%
\begin{tikzpicture}[scale=0.8]

\begin{axis}[%
width=8.5cm,
height=5.8cm,
scale only axis,
xmin=0,
xmax=5,
xlabel={$R_1$ [bits]},
xmajorgrids,
ymin=0,
ymax=0.175,minor ytick={0,0.025,...,0.17}, scaled y ticks={base 10:2},
ylabel={$C_{\textup{WSK}}$ [bits]},
yminorgrids,
legend style={at={(0.95,0.83)}},
legend style={draw=black,fill=white,legend cell align=left}
]
\addplot [
color=black,
dotted,
line width=1.4pt
]
table[row sep=crcr]{
0.001 1.21628377342531e-05\\
0.101 0.00114621613626387\\
0.201 0.0021320173263919\\
0.301 0.00298911157419163\\
0.401 0.00373442717878908\\
0.501 0.00438263568693176\\
0.601 0.00494646009583849\\
0.701 0.00543693918865962\\
0.801 0.00586365467022371\\
0.901 0.00623492665858888\\
1.001 0.00655798218333686\\
1.101 0.00683910060111542\\
1.201 0.0070837392292698\\
1.301 0.00729664199361979\\
1.401 0.00748193346637377\\
1.501 0.00764320031894172\\
1.601 0.00778356191948026\\
1.701 0.0079057315563687\\
1.801 0.00801206955846129\\
1.901 0.00810462940441575\\
2.001 0.0081851977614123\\
2.101 0.00825532926386668\\
2.201 0.008316376731764\\
2.301 0.00836951743312303\\
2.401 0.00841577591338877\\
2.501 0.00845604384427973\\
2.601 0.00849109728405517\\
2.701 0.00852161168894255\\
2.801 0.0085481749703562\\
2.901 0.00857129885354191\\
3.001 0.00859142875954644\\
3.101 0.00860895240319454\\
3.201 0.00862420727444352\\
3.301 0.00863748714854035\\
3.401 0.00864904775136238\\
3.501 0.00865911168981165\\
3.601 0.00866787274277864\\
3.701 0.00867549959573838\\
3.801 0.00868213909122229\\
3.901 0.00868791905799893\\
4.001 0.00869295077362381\\
4.101 0.00869733110791832\\
4.201 0.00870114438874858\\
4.301 0.00870446402611064\\
4.401 0.00870735392584759\\
4.501 0.00870986972026488\\
4.601 0.00871205983936807\\
4.701 0.00871396644337302\\
4.801 0.00871562623446015\\
4.901 0.00871707116341686\\
5.001 0.00871832904477726\\
5.101 0.0087194240923183\\
5.201 0.00872037738521841\\
5.301 0.00872120727386377\\
5.401 0.0087219297331142\\
5.501 0.00872255866983251\\
5.601 0.00872310619060029\\
5.701 0.00872358283477498\\
5.801 0.00872399777737315\\
5.901 0.00872435900569141\\
};
\addlegendentry{$\rho{}_{\text{XY}}\text{ = 0.6}$};

\addplot [
color=black,
dash pattern=on 1pt off 3pt on 3pt off 3pt,
line width=1.4pt
]
table[row sep=crcr]{
0.001 5.04937947464756e-05\\
0.101 0.00474675205725322\\
0.201 0.00881033557977212\\
0.301 0.0123293475684313\\
0.401 0.0153789099144062\\
0.501 0.0180232515422534\\
0.601 0.0203174183057139\\
0.701 0.0223086829662225\\
0.801 0.0240377152411711\\
0.901 0.0255395582462007\\
1.001 0.0268444474730756\\
1.101 0.0279785007716052\\
1.201 0.0289643019617334\\
1.301 0.0298213962095331\\
1.401 0.0305667118141305\\
1.501 0.0312149203222731\\
1.601 0.03177874473118\\
1.701 0.0322692238240009\\
1.801 0.032695939305565\\
1.901 0.0330672112939304\\
2.001 0.0333902668186782\\
2.101 0.0336713852364569\\
2.201 0.0339160238646113\\
2.301 0.0341289266289613\\
2.401 0.0343142181017152\\
2.501 0.0344754849542832\\
2.601 0.0346158465548216\\
2.701 0.0347380161917102\\
2.801 0.0348443541938028\\
2.901 0.034936914039757\\
3.001 0.0350174823967538\\
3.101 0.0350876138992081\\
3.201 0.0351486613671054\\
3.301 0.0352018020684643\\
3.401 0.0352480605487303\\
3.501 0.0352883284796212\\
3.601 0.0353233819193967\\
3.701 0.035353896324284\\
3.801 0.0353804596056976\\
3.901 0.0354035834888834\\
4.001 0.0354237133948878\\
4.101 0.0354412370385359\\
4.201 0.0354564919097851\\
4.301 0.0354697717838818\\
4.401 0.0354813323867039\\
4.501 0.0354913963251532\\
4.601 0.03550015737812\\
4.701 0.0355077842310798\\
4.801 0.0355144237265638\\
4.901 0.0355202036933404\\
5.001 0.0355252354089651\\
5.101 0.0355296157432597\\
5.201 0.0355334290240901\\
5.301 0.0355367486614521\\
5.401 0.035539638561189\\
5.501 0.0355421543556063\\
5.601 0.0355443444747095\\
5.701 0.0355462510787144\\
5.801 0.0355479108698017\\
5.901 0.0355493557987582\\
};
\addlegendentry{$\rho{}_{\text{XY}}\text{ = 0.7}$};

\addplot [
color=black,
dashed,
line width=1.4pt
]
table[row sep=crcr]{
0.001 0.000121264948924893\\
0.101 0.0113481279216468\\
0.201 0.0209814049803934\\
0.301 0.0292641754344202\\
0.401 0.0363981137632949\\
0.501 0.0425516374960069\\
0.601 0.0478661812033438\\
0.701 0.0524610930997657\\
0.801 0.0564375030221733\\
0.901 0.0598814110820875\\
1.001 0.0628661780400499\\
1.101 0.0654545508186472\\
1.201 0.0677003228024527\\
1.301 0.0696497042869144\\
1.401 0.0713424607371724\\
1.501 0.072812863455395\\
1.601 0.0740904875034142\\
1.701 0.0752008843672963\\
1.801 0.076166151237924\\
1.901 0.0770054144597983\\
2.001 0.0777352413412968\\
2.101 0.0783699918859649\\
2.201 0.0789221199220806\\
2.301 0.0794024314481934\\
2.401 0.0798203066797544\\
2.501 0.0801838912042623\\
2.601 0.0805002607749556\\
2.701 0.0807755635542323\\
2.801 0.0810151430255812\\
2.901 0.0812236443019352\\
3.001 0.0814051061495654\\
3.101 0.0815630407046008\\
3.201 0.0817005025718789\\
3.301 0.0818201487534224\\
3.401 0.0819242906486469\\
3.501 0.0820149391941646\\
3.601 0.0820938440626627\\
3.701 0.0821625277136532\\
3.801 0.0822223149804688\\
3.901 0.0822743587849248\\
4.001 0.0823196624911893\\
4.101 0.0823590993417018\\
4.201 0.0823934293587479\\
4.301 0.082423314044222\\
4.401 0.0824493291659737\\
4.501 0.0824719758809867\\
4.601 0.0824916904126204\\
4.701 0.0825088524705561\\
4.801 0.0825237925773154\\
4.901 0.0825367984437357\\
5.001 0.0825481205171503\\
5.101 0.0825579768098501\\
5.201 0.0825665571013616\\
5.301 0.0825740265958726\\
5.401 0.0825805291055518\\
5.501 0.0825861898212907\\
5.601 0.0825911177243975\\
5.701 0.0825954076858129\\
5.801 0.0825991422933669\\
5.901 0.0826023934423341\\
};
\addlegendentry{$\rho{}_{\text{XY}}\text{ = 0.8}$};

\addplot [
color=black,
solid,
line width=1.4pt
]
table[row sep=crcr]{
0.001 0.000238032711062819\\
0.101 0.0221109936971877\\
0.201 0.0406270287763458\\
0.301 0.0563680498463351\\
0.401 0.0697971883594569\\
0.501 0.0812877762515332\\
0.601 0.0911440198044849\\
0.701 0.0996160782095921\\
0.801 0.106911277654904\\
0.901 0.113202596492305\\
1.001 0.118635184849988\\
1.101 0.123331443112494\\
1.201 0.127395026635013\\
1.301 0.130914038623672\\
1.401 0.133963600969647\\
1.501 0.136607942597494\\
1.601 0.138902109360955\\
1.701 0.140893374021464\\
1.801 0.142622406296412\\
1.901 0.144124249301442\\
2.001 0.145429138528317\\
2.101 0.146563191826846\\
2.201 0.147548993016974\\
2.301 0.148406087264774\\
2.401 0.149151402869372\\
2.501 0.149799611377514\\
2.601 0.150363435786421\\
2.701 0.150853914879242\\
2.801 0.151280630360806\\
2.901 0.151651902349171\\
3.001 0.151974957873919\\
3.101 0.152256076291698\\
3.201 0.152500714919852\\
3.301 0.152713617684202\\
3.401 0.152898909156956\\
3.501 0.153060176009524\\
3.601 0.153200537610063\\
3.701 0.153322707246951\\
3.801 0.153429045249044\\
3.901 0.153521605094998\\
4.001 0.153602173451995\\
4.101 0.153672304954449\\
4.201 0.153733352422347\\
4.301 0.153786493123706\\
4.401 0.153832751603971\\
4.501 0.153873019534862\\
4.601 0.153908072974638\\
4.701 0.153938587379525\\
4.801 0.153965150660939\\
4.901 0.153988274544124\\
5.001 0.154008404450129\\
5.101 0.154025928093777\\
5.201 0.154041182965026\\
5.301 0.154054462839123\\
5.401 0.154066023441945\\
5.501 0.154076087380394\\
5.601 0.154084848433361\\
5.701 0.154092475286321\\
5.801 0.154099114781805\\
5.901 0.154104894748581\\
};
\addlegendentry{$\rho{}_{\text{XY}}\text{ = 0.9}$};

\end{axis}
\end{tikzpicture}%
    \caption{WSK capacity $C_{\textup{WSK}}(R_1,0)$, for different correlation coefficients $\rho_{XY}$ ($\rho_{XZ}=0.1$, $\rho_{YZ}=0.4$)}
\label{keycapgauss}
\end{figure}
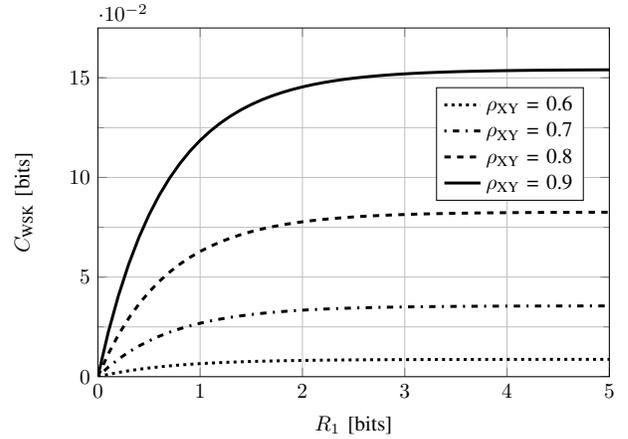
\begin{prop} \label{gaus_s_rec}
Assume that $(\mathcal{X}\mathcal{Y}\mathcal{Z},p_{XYZ})$ is a degraded zero-mean Gaussian source.  Let $R_1 \in \mathbb{R}_+$. 

The auxiliary RV $U$ achieving $C_{\textup{rec}}(R_1,0)$ in Proposition~\ref{Prop_gauseq}.\ref{C_recg} is a zero-mean Gaussian with variance $$\sigma_0 \triangleq \sigma_x \left( 1+ (1-\rho_{XY})(e^{2R_1}-1)^{-1}\right)$$ that satisfies the rate-constraint (\ref{ratecstrsigma}), where $\rho_{XY}$ is the correlation coefficient between $X$ and $Y$. Moreover, the same auxiliary RV $U$ achieves $C_{\textup{WSK}}(R_1,0)$ in Proposition~\ref{Prop_gauseq}.\ref{C_s_eq_C}.
\begin{enumerate}[(a)]
\item \label{C_gauss_rec} The one-way rate-limited reconciliation capacity is given by%
\begin{equation*}\label{region1}
C_{\textup{rec}}(R_1,0) = \frac{1}{2} \log_2  \frac{1 - \left( \rho_{XY} e^{-R_1}\right)^2}{1 - \rho_{XY}^2} .
\end{equation*}
\item \label{C_gauss_s} The one-way rate-limited WSK capacity is 
\begin{multline*}
\!\! \! C_{\textup{WSK}}(R_1,0)= \\   \!\! \!\!  \frac{1}{2} \log_2 \frac{  (1 - \rho_{YZ}^2)(1 - \rho_{XZ}^2) - \left(  \rho_{XY} - \rho_{YZ} \rho_{XZ} \right)^2e^{-2R_1} }{ (1 - \rho_{YZ}^2)(1 - \rho_{XZ}^2) - \left(  \rho_{XY} - \rho_{YZ} \rho_{XZ} \right)^2 } .
\end{multline*}
\end{enumerate}
\end{prop}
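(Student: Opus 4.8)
The plan is to pair an explicit achievability computation using Gaussian test channels with an entropy-power-inequality (EPI) converse, both starting from the single-letter characterizations of Proposition~\ref{Prop_gauseq}. Throughout I would work with the degraded representation $Y = aX + N_Y$ and $Z = bY + N_Z$, where $N_Y \perp (X,U)$, $N_Z \perp (X,Y,U)$, and the coefficients $a,b$ and the noise variances are fixed by the covariance matrix of $(X,Y,Z)$; for a degraded Gaussian source one also has $\rho_{XZ} = \rho_{XY}\rho_{YZ}$, a relation I will use to reconcile the two forms of the $C_{\textup{WSK}}$ formula.

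For achievability I would choose the forward test channel $U = X + V$ with $V \sim \mathcal{N}(0,N)$ independent of $(X,Y,Z)$, so that $U \to X \to Y \to Z$ holds. The rate constraint $R_1 = I(X;U|Y)$ then becomes $R_1 = \tfrac12\log\big((\sigma_x^2(1-\rho_{XY}^2)+N)/N\big)$, which is strictly monotone in $N$ and can be solved for the unique admissible noise level, equivalently for the variance $\sigma_0$ of $U$. A key point is that this equation involves only $(X,Y)$, so the \emph{same} $U$ is admissible for the reconciliation and the WSK problems. I would then compute $I(Y;U) = \tfrac12\log(\sigma_y^2/\mathrm{Var}(Y|U))$ and $I(Z;U) = \tfrac12\log(\sigma_z^2/\mathrm{Var}(Z|U))$ using $\mathrm{Cov}(Y,U) = \mathrm{Cov}(Y,X)$ and $\mathrm{Cov}(Z,U) = \mathrm{Cov}(Z,X)$; substituting the solved $N$ and $\rho_{XZ} = \rho_{XY}\rho_{YZ}$ collapses these to the stated closed forms. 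This part is routine algebra and produces matching lower bounds on both capacities.

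For the converse the reconciliation case is clean. Since $U \to X \to Y$, one has $I(U;X) = I(U;Y) + I(U;X|Y)$, so with the constraint fixing $I(U;X|Y) = R_1$, maximizing $I(Y;U)$ is equivalent to minimizing $h(Y|U)$; and the identity $h(Y|U) = h(X|U) - h(X|Y,U) + h(N_Y)$, with $h(X|Y,U)$ pinned by the constraint, turns this into minimizing $h(X|U)$. Applying the conditional EPI to $Y = aX + N_Y$ gives $2^{2h(Y|U)} \ge a^2\, 2^{2h(X|U)} + 2^{2h(N_Y)}$, and combining it with the fixed value of $h(X|Y,U)$ yields a lower bound on $h(X|U)$ that is attained with equality if and only if $X\vert U$ is Gaussian. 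Hence the Gaussian $U$ is optimal and meets the achievability value for $C_{\textup{rec}}$.

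The hard part, and what I expect to be the main obstacle, is the converse for the secrecy difference $I(Y;U) - I(Z;U)$. Using degradedness, $U \to Y \to Z$ is Markov, so $I(Y;U) - I(Z;U) = I(Y;U\vert Z) = h(Y\vert Z) - h(Y\vert Z,U)$ and it suffices to minimize $h(Y\vert Z,U)$. The subtlety is that a naive EPI bounds the relevant entropies in the unhelpful direction: for a fixed $h(Y\vert U)$ the Gaussian choice \emph{minimizes} $h(Z\vert U)$, hence \emph{maximizes} $I(Z;U)$, so the Gaussian law simultaneously maximizes both $I(Y;U)$ and $I(Z;U)$ and the extremization cannot be argued term by term. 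The resolution I would pursue is to use the constraint more forcefully: the reconciliation EPI above pins the conditional entropy power of $X$ given $U$ to its minimal, Gaussian value, and I would feed this together with the two nested channels $X \to Y \to Z$ into a single EPI/extremal-inequality lower bound on $h(Y\vert Z,U)$ whose equality case is a Gaussian $U$, the relation $\rho_{XZ} = \rho_{XY}\rho_{YZ}$ being exactly what makes the equality conditions of the nested EPIs compatible. Finally, because the optimizing $U$ is Gaussian and the rate constraint has a unique Gaussian solution, the very same $U$ attains $C_{\textup{rec}}(R_1,0)$ and $C_{\textup{WSK}}(R_1,0)$, which yields the asserted independence of reconciliation and privacy amplification.
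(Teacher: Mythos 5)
Your achievability computation and your converse for part~(a) are sound, and in fact go beyond what the paper writes down: the paper's entire proof of this proposition is the sentence ``b) is due to Watanabe \cite{Watanabe10a}, and the proof of a) is similar to the one of b)'', i.e.\ it outsources the whole argument to prior work. For part~(a) your chain $h(Y|U)=h(X|U)+h(N_Y)-h(X|Y,U)$ with $h(X|Y,U)$ pinned by the equality constraint of Proposition~\ref{Prop_gauseq}.\ref{C_recg}, closed by the conditional EPI on $Y=aX+N_Y$, is a correct and self-contained derivation.

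The genuine gap is in the converse for part~(b), and you have in fact diagnosed it yourself before papering over it. Your proposed fix has two problems. First, the premise is wrong: the constraint pins $h(X|Y,U)$ exactly, but the EPI only yields a one-sided bound on $h(X|U)$ (it forces $h(X|U)$ to be at least its Gaussian value); nothing ``pins the conditional entropy power of $X$ given $U$ to its minimal Gaussian value''. Second, and more fundamentally, even exact knowledge of $h(X|U)$ would not suffice: writing $h(Y|Z,U)=h(Y|U)+h(N_Z)-h(Z|U)$, the required lower bound on $h(Y|Z,U)$ is an \emph{upper} bound on $h(Z|U)-h(Y|U)$, whereas every forward application of the EPI along $X\to Y\to Z$ (conditioned on $U$ or not) produces \emph{lower} bounds on $h(Z|U)$ in terms of $h(Y|U)$ --- i.e.\ bounds of the opposite sign. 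The relation $\rho_{XZ}=\rho_{XY}\rho_{YZ}$ does hold for the degraded Gaussian source, but it does not reverse this direction. What is actually needed is a genuine extremal inequality (the argument of Watanabe--Oohama \cite{Watanabe10a}, or a Liu--Viswanath-type extremal inequality, or Oohama's technique for degraded Gaussian side-information problems) that treats the weighted difference $h(Z|U)-\mu h(Y|U)$ as a single object and identifies its Gaussian extremizer; ``feed the nested channels into a single EPI'' names this step without supplying it. Since part~(b) is precisely the content the proposition is cited for, the proof as proposed is incomplete there; you would either need to import the extremal inequality explicitly or reproduce Watanabe--Oohama's converse.
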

\begin{IEEEproof}
$(b)$ is due to Watanabe~\cite{Watanabe10a}, and the proof of $(a)$ is similar to the one of $(b)$.
\end{IEEEproof}
Proposition \ref{gaus_s_rec} states that both arguments of the maximum for the auxiliary RV $U$, in $(a)$ and $(b)$ of Proposition \ref{Prop_gauseq} are identical; combined with Theorem~\ref{theorem_seq1}, we deduce the following corollary.
\begin{cor}
Assume that $(\mathcal{X}\mathcal{Y}\mathcal{Z},p_{XYZ})$ is a degraded zero-mean Gaussian source. Let $R_1 \in \mathbb{R}_+$. 
The one-way rate-limited WSK capacity $C_{\textup{WSK}}(R_1,0)$ is achievable by a sequential strategy, moreover, reconciliation and privacy amplification steps can be handled independently, as defined in Section~\ref{SecIndep}.
\end{cor}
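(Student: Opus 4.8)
The plan is to treat this as a direct deduction from Theorem~\ref{theorem_seq1} and Proposition~\ref{gaus_s_rec}, together with the definition of independence given in Section~\ref{SecIndep}. First I would dispose of sequential achievability: a zero-mean Gaussian source is a CMS and hence an MS, so Theorem~\ref{theorem_seq1} applies without modification and guarantees that every rate $R < C_{\textup{WSK}}(R_1,0)$ is achievable by a sequential key-distillation strategy. This half requires nothing beyond invoking Theorem~\ref{theorem_seq1} with the Gaussian source in place of the generic MS.

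The substantive part is the independence claim. I would recall from Section~\ref{SecIndep} that reconciliation and privacy amplification are independent precisely when achieving the reconciliation capacity in a sequential strategy also achieves the secret-key capacity. The decisive input, supplied by Proposition~\ref{gaus_s_rec}, is that the single zero-mean Gaussian auxiliary variable $U$ of variance $\sigma_0$ simultaneously solves the reconciliation optimization in Proposition~\ref{Prop_gauseq}.\ref{C_recg} and the secret-key optimization in Proposition~\ref{Prop_gauseq}.\ref{C_s_eq_C}. I would then trace the sequential construction underlying Theorem~\ref{theorem_seq1}: for a reconciliation protocol built on a given test channel $f_{U|X}$, the reconciliation rate equals $I(Y;U)$, and after privacy amplification via the extractor of Theorem~\ref{thMaurer} the extracted secret-key rate equals $I(Y;U) - I(Z;U)$. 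Thus fixing $U$ in the reconciliation step simultaneously fixes the reconciliation rate and the final WSK rate.

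Combining these facts closes the argument. Designing the reconciliation step to achieve $C_{\textup{rec}}(R_1,0)$ forces the use of the optimal Gaussian $U$ identified in Proposition~\ref{gaus_s_rec}; since that same $U$ is also optimal for the secret-key optimization, the subsequent privacy amplification extracts a key of rate $I(Y;U) - I(Z;U) = C_{\textup{WSK}}(R_1,0)$. Hence achieving the reconciliation capacity leads to achieving the secret-key capacity, which is exactly the notion of independence defined in Section~\ref{SecIndep}.

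I do not anticipate a deep obstacle, since the corollary packages earlier results rather than proving a new theorem. The one point deserving care is the bookkeeping that links the auxiliary variable chosen in reconciliation to the rate delivered by privacy amplification: I must confirm that the min-entropy estimate feeding Theorem~\ref{thMaurer} is governed by $I(Y;U) - I(Z;U)$ for the reconciliation protocol associated with $U$, so that the coincidence of optimizers from Proposition~\ref{gaus_s_rec} transfers cleanly into equality of the achieved WSK rate with $C_{\textup{WSK}}(R_1,0)$. This is exactly the step that fails in Example~\ref{counterexample}, where the two optimizing auxiliary variables differ and independence breaks down; Proposition~\ref{gaus_s_rec} is precisely what excludes such a gap in the Gaussian case.
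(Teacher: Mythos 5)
Your proposal is correct and follows exactly the paper's own route: the corollary is indeed proved by combining Theorem~\ref{theorem_seq1} (sequential achievability for a general MS, hence for the Gaussian CMS) with Proposition~\ref{gaus_s_rec}, whose coincidence of the optimizing Gaussian auxiliary variable $U$ for both $C_{\textup{rec}}(R_1,0)$ and $C_{\textup{WSK}}(R_1,0)$ is precisely what yields independence in the sense of Section~\ref{SecIndep}. Your additional bookkeeping --- that a reconciliation protocol built on $f_{U|X}$ yields reconciliation rate $I(Y;U)$ and, after privacy amplification, key rate $I(Y;U)-I(Z;U)$ --- is a faithful unpacking of the sequential construction in the proof of Theorem~\ref{theorem_seq1}, not a different argument.
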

As shown by Proposition \ref{gaus_s_rec}.\ref{C_gauss_rec} (resp. Proposition \ref{gaus_s_rec}.\ref{C_gauss_s}), and as illustrated in Figure \ref{examplegauss} (resp. Figure \ref{keycapgauss}), the reconciliation  capacity (resp. the WSK capacity) does not reach ${I}(X;Y)$ (resp. ${I}(X;Y)-{I}(X;Z)$) when $R_1$ exceed a certain value. As mentioned in~\cite{Watanabe10a} and Remark \ref{treshold_bin}, unlike the case of discrete random variables, $C_{\textup{rec}}(R_1,0)$ (resp. $C_{\textup{WSK}}(R_1,0)$) can only approach ${I}(X;Y)$ (resp. ${I}(X;Y)-{I}(X;Z)$) asymptotically. Nevertheless, we show in the following proposition a continuous counterpart of Remark~\ref{treshold_bin}.
The achievability of $C_{\textup{WSK}}(R_1,0)$ with our sequential strategy is based on Wyner-Ziv coding. For a practical implementation, additional structure needs to be introduced, for instance with vector quantization. Since scalar quantization is the simplest and often the most computationally efficient type of quantization, it is natural to ask how scalar quantization performs compared to vector quantization. %
We answer this question in the following proposition.
\begin{prop} \label{Prop_Quant}
Let $n\in\mathbb{N}^*$, and $a>0$. Define $U$ as a uniformly quantized version of $X$. Specifically,
\begin{equation*}
\forall k\in \llbracket 1, n\rrbracket, p_{U}(u_k)  \triangleq \int_{t_k}^{t_{k+1}} \! \!\!\!\!\!\! p_{X}(x) dx, \text{ with }t_k \triangleq a( 2\tfrac{ k -1}{n-1} - 1).
\end{equation*} 
%
%
%
If $n$ is large enough, then
\begin{equation*}
| {I}(X;Y) -  {I}(Y;U)|  \leq  \epsilon(a) +  a \cdot K e^{h(X|Y)- R_1},
\end{equation*}
where  $R_1$ is the communication rate constraint, $K$ is a constant, and $\epsilon(a)$ decreases exponentially fast to zero as $a$ goes to infinity.
\end{prop}
\begin{IEEEproof}
See Appendix \ref{AppendixQuant}.
\end{IEEEproof}
Proposition \ref{Prop_Quant} gives a continuous counterpart of Remark~\ref{treshold_bin}. Indeed, when $R_1 > h(X|Y)$, by Proposition~\ref{Prop_Quant}, if $X$ is quantized finely enough, then ${I}(Y;U)$ approach ${I}(X;Y)$ exponentially fast as $R_1$ increases.

Hence, the improvement of vector quantization compared to scalar quantization decays rapidly as the communication rate increases beyond $h(X|Y)$. Note that, in practice, we can optimize the scalar quantization, so that the loss could be even smaller than predicted by Proposition \ref{Prop_Quant}. Figure \ref{Quant_fig} illustrates this point by comparing the reconciliation capacity with numerical values of achievable rates obtained when $X$ is scalar-quantized.\footnote{We have increased the number of interval of quantization of $X$ from $2$ to $15$ and chosen their bounds by a standard gradient method to maximize ${I}(X_Q;Y)$.} Nevertheless, for low communication rates, Figure \ref{Quant_fig} shows that vector quantization improves the performance; in this case, we could implement, for instance, trellis coded vector quantization (TCVQ)~\cite{SWCTCQ}.

\begin{figure}
\begin{center}
%
%
\begin{tikzpicture}[scale=0.8]

\begin{axis}[%
width=8.5cm,
height=5.8cm,
scale only axis,
xmin=0,
xmax=3,
xlabel={$R_1$ [bits]},
xmajorgrids,
ymin=0,
ymax=0.6, minor ytick={0,0.1,...,0.6},
ylabel={[bits]},
yminorgrids,
legend style={at={(0.95,0.83)}},
legend style={draw=black,fill=white,legend cell align=left}
]
\addplot [
color=black,
solid,
mark=+,
mark options={solid}
]
table[row sep=crcr]{
0.67638550506004 0.32361449493996\\
1.1046148958451 0.441784483490038\\
1.42795633648895 0.497070464467555\\
1.69011827409279 0.527247051263967\\
1.91124029668666 0.545495028970719\\
2.102688075949 0.55736251387163\\
2.27162861985128 0.565511295065019\\
2.42274182008936 0.571347231921303\\
2.55953439526967 0.575669561768667\\
2.68440381119964 0.578959886821946\\
2.79936781294508 0.581522441246926\\
2.90578633449379 0.583557073915484\\
3.00492906059734 0.585199456618566\\
3.09770539740512 0.586544316486328\\
};
\addlegendentry{$I$($X_Q$;$Y$)};
\addplot [
color=black,
solid
]
table[row sep=crcr]{
0 0\\
0.1 0.111052951282501\\
0.2 0.195512679792328\\
0.3 0.261760504122654\\
0.4 0.314860314138768\\
0.5 0.358103516999704\\
0.6 0.393747425008355\\
0.7 0.423404971018863\\
0.8 0.448266420969012\\
0.9 0.469233165346404\\
1 0.487002395733528\\
1.1 0.502122850969428\\
1.2 0.515032755875599\\
1.3 0.526086383981502\\
1.4 0.535573117555448\\
1.5 0.54373142062517\\
1.6 0.550759278057308\\
1.7 0.556822126163793\\
1.8 0.562058968123544\\
1.9 0.566587152931087\\
2 0.570506154763537\\
2.1 0.573900593975663\\
2.2 0.576842675166649\\
2.3 0.579394171786477\\
2.4 0.581608054107086\\
2.5 0.583529833873962\\
2.6 0.585198681796708\\
2.7 0.586648361360939\\
2.8 0.587908012969612\\
2.9 0.589002815263631\\
3 0.589954545007467\\
3.1 0.590782052713909\\
3.2 0.591501667905855\\
3.3 0.592127545342031\\
3.4 0.592671961499035\\
3.5 0.593145568979496\\
3.6 0.593557615212143\\
3.7 0.593916130754404\\
3.8 0.594228091648408\\
3.9 0.594499559576537\\
4 0.594735802981527\\
};
\addlegendentry{$C_{\textup{rec}}$};

\end{axis}
\end{tikzpicture}%
  \caption{Reconciliation capacity obtained for a scalar quantization of $X$ with $\rho_{XY} = 0.75$, $h(X|Y) \approx 1$}
\label{Quant_fig}
\end{center}
\end{figure}
\section{Concluding remarks}
We have shown that the the best known bounds for the one-way rate-limited capacity are achievable by a sequential strategy that separates reliability and secrecy thanks to a reconciliation step followed by a privacy amplification step with extractors; in the case of two-way communication, the sequential design seems to suffer a loss of performance compared to the joint design and similar secret key rates have only been established for degraded sources or when there is no side information at the eavesdropper (SK capacity). We have also qualified robustness of sequential strategy to rate-limited communication, by showing that achieving the reconciliation capacity in a sequential strategy is, unlike the case of rate-unlimited communication, not necessarily optimal. We further provide scenarios for which it stays optimal. 
As a side result, we have extended known bounds of the WSK capacity for a discrete source model to the case of a continuous source model, and derive a closed-form expression of the one-way rate-limited capacity for degraded binary symmetric sources.

A strength of sequential key-generation is to easily translate into practical designs. Even more interestingly, the proposed scheme can be made very flexible with the following modifications.
\subsubsection{Rate-compatible reconciliation}
we can adapt to the characteristics of the legitimate users by the use of rate-compatible LDPC codes, to perform the reconciliation phase, as demonstrated in \cite{Elkouss10,Kasai10}. Note, however, that vector quantization might be required, which could complexify the reconciliation~phase.
\subsubsection{Rate-compatible privacy amplification}
Privacy amplification can also be performed with universal families of hash functions, in which case the counterpart of Theorem~\ref{thMaurer} is found in~\cite{Bennett95}.\footnote{ However, it requires more random bits than extractors, on the order of $n$ random bits, since functions must be chosen at random in universal families. Consequently, our scheme needs to be adapted to account for it.} Hence, one can design privacy amplification methods easily adjustable to the characteristics of the eavesdropper's observations, if we make $k$ vary in the following universal family of hash functions $\mathcal{H} =\{ \text{GF}(2^n) \rightarrow \{ 0,1\}^k, x \mapsto (k \text{ bits of the product } xy) | y \in  \text{GF}(2^n) \}$, where the $k$ bits are fixed but their position can be chosen arbitrarily~\cite{Carter79}. 
\appendices
\section{Proof of Theorem \ref{theorem_seq2}}\label{AppendixThcont}
\subsubsection{Discrete case}
 Let $\epsilon > 0$. Let $R_1, R_2 \in \mathbb{R}^+$. Let $m,n \in \mathbb{N}$, and define $N \triangleq nm$. Let $k \in \mathbb{N}$ to be determined later. Consider a sequential key-distillation strategy $\mathcal{S}_{N}$ that consists~of 
\begin{itemize}
\item $m$ repetitions of a reconciliation protocol $\mathcal{R}_{n}$ based on Wyner-Ziv coding. 
The protocol $\mathcal{R}_{n}$ operates as described in Appendix \ref{recach}. 
Hence, after one repetition of the protocol, Alice obtains $S^n \triangleq U^n\hat{V}^{n}$, whereas Bob has $\hat{S}^n \triangleq \hat{U}^nV^n$ and $\mathbb{P}[\hat{S}^n \neq S^{n}|\mathcal{R}_{n}] \leq P_e (\epsilon,n)$.\footnote{By Appendix~\ref{recach}, $P_e (\epsilon,n)$ decreases exponentially to zero as $n\epsilon^2$ goes to infinity.} In addition, the information disclosed over the public channel during the $m$ repetitions of the reconciliation protocol is upper bounded by $\log  |\mathcal{A}|^m + \log |\mathcal{B}|^m = N ( {I}(U;X) - {I}(U;Y) + {I}(V;Y|U) - {I}(V;X|U) +  r_0(\epsilon))$, with $\lim_{\epsilon \to 0} r_0(\epsilon) = 0$.\footnote{$r_0(\epsilon) \triangleq 6 \epsilon H(U) +12 \epsilon H(V|U)$ by Appendix~\ref{recach}.} An additional round of reconciliation is then performed to ensure $\mathbb{P}[ (\hat{S}^n)^m \neq (S^{n})^m|\mathcal{R}_{n}] \leq \delta_e (m)$, where $\lim_{m\rightarrow \infty}\delta_e (m) = 0$, for any fixed $n$. We note $\log |\mathcal{C}|^m$ the information communicated to perform this last step. Hence, the overall information disclosed is upper bounded by $l_{rec}\triangleq \log (|\mathcal{A}|^N |\mathcal{B}|^N|\mathcal{C}|^m)$, that is
\begin{align}
l_{rec} 
& = N ( {I}(U;X) - {I}(U;Y)  \nonumber \\
& \phantom{mllll}+ {I}(V;Y|U)\! - {I}(V;X|U)+  r_1(\epsilon,n)), \label{lrec} \\
&\text{with }r_1(\epsilon,n) \triangleq \frac{1+ \epsilon }{n} H(S^{n}|\hat{S}^{n} ) + r_0(\epsilon) \label{rester1}
\end{align}
arbitrarily small for $n$ large enough by Fano's inequality, so that the communication rates $R_1$ and $R_2$ remain asymptotically unchanged.
\item privacy amplification, based on extractors with output size $k$, at the end of which Alice computes her key $K\triangleq g(S^{N},U_d)$, while Bob computes $\hat{K}\triangleq g(\hat{S}^{N},U_d)$, where $U_d$ is a sequence of $d$ uniformly distributed random bits.
\end{itemize}
The total information available to Eve after reconciliation consists of her observation $Z^{N}$, the public messages $A^m$ and $B^m$, respectively sent by Alice and Bob, the public message $C^m$, and $U_d$. The strategy $\mathcal{S}_{N}$ is also known to Eve, but we omit the conditioning on $\mathcal{S}_{N}$ for convenience. 

We first show that, for a suitable choice of the output size $k$, the quantity $ k -{H}(K|U_dZ^{N}A^mB^mC^m)$ vanishes to zero for $N$ large enough. Then, we show that the corresponding WSK rate achieves the lower bound on the WSK capacity of Theorem~\ref{C_WSK}.
We first state Lemma \ref{lem1}, a refined version of the results in \cite{Maurer00,Bloch11}, that is obtained by using the notion of robust typicality developed in the appendix of~\cite{Orlitsky01}, to later extend our result to the continuous case.
\begin{lem}[\cite{Maurer00,Bloch11}, Refined version] \label{lem1}
Consider a DMS $(\mathcal{X}\mathcal{Z},p_{XZ})$ and define the RV $\Theta$ as
\begin{align*}
\Theta & \triangleq
 \begin{cases}
   1 & \text{if } (X^{n} , Z^{n}) \in  \mathcal{T}_{2\epsilon}^{n}(XZ) \text{ and }  Z^{n} \in  \mathcal{T}_{\epsilon}^{n}(Z),\\
  0 & \text{otherwise.}  
   \end{cases}
\end{align*}
Then, $\mathbb{P}[\Theta=1]\geq1- \delta_{\epsilon}^0(n)$, with  $\delta_{\epsilon}^0(n) \triangleq 2|S_X|e^{-\epsilon^2 n \mu_X /3}+2|S_{XZ}|e^{-\epsilon^2 n \mu_{XZ} /3}$, where $S_X \triangleq \{ x \in \mathcal{X} : p(x)>0 \}$ and $\mu_X \triangleq \min_{x \in S_X} p(x)$ . Moreover, if $z^n \in \mathcal{T}_{\epsilon}^{n}(Z)$,
\begin{multline*}
{H}_{\infty}(X^n|Z^n=z^n,\Theta=1) \\ \geq n({H}(X|Z)-\delta^0(\epsilon)) + \log( 1-\delta_{\epsilon}^1(n)), 
\end{multline*}
where $\delta^0(\epsilon) \triangleq \epsilon {H}(X|Z)$ and $\delta_{\epsilon}^1(n) \triangleq 2|S_{X,Z}|e^{-\epsilon^2 n\mu_{X,Z}/6}$.
\end{lem}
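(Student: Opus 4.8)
The plan is to prove the two assertions separately, in each case replacing the soft concentration arguments of \cite{Maurer00,Bloch11} by the explicit exponential estimates that robust (strong) typicality \cite{Orlitsky01} provides; it is precisely these quantitative rates that will later make the passage to a continuous source possible. Throughout I write $N(a|x^n)$ and $N(a,c|x^n,z^n)$ for the (marginal and joint) type counts, and recall that $x^n\in\mathcal{T}_{\epsilon}^{n}(X)$ means $|\tfrac1n N(a|x^n)-p(a)|\leq \epsilon\, p(a)$ for every $a$, with the analogous definition for joint types; in particular the robust-typicality constraint forces $N(a,c)=0$ whenever $p(a,c)=0$.

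For the first assertion I would union-bound the complementary event. Since $\{\Theta=0\}\subseteq\{(X^n,Z^n)\notin\mathcal{T}_{2\epsilon}^{n}(XZ)\}\cup\{Z^n\notin\mathcal{T}_{\epsilon}^{n}(Z)\}$, it suffices to control each robust-typicality failure probability. Each of these is governed by a Chernoff bound applied to the i.i.d. indicator counts $\sum_i \mathds{1}[(X_i,Z_i)=(a,c)]$ (respectively $\sum_i\mathds{1}[Z_i=c]$), which yields a bound of the form $2|S|e^{-c\epsilon^2 n\mu}$ for the relevant support $S$ and minimum mass $\mu$. Summing the two contributions and absorbing constants gives $\mathbb{P}[\Theta=1]\geq 1-\delta_{\epsilon}^0(n)$ with $\delta_\epsilon^0$ of the stated exponential form.

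For the second assertion, fix $z^n\in\mathcal{T}_{\epsilon}^{n}(Z)$ and use $\mathrm{H}_{\infty}(X^n|Z^n=z^n,\Theta=1)=-\log\max_{x^n}\mathbb{P}[X^n=x^n|Z^n=z^n,\Theta=1]$. Since conditioning on $\Theta=1$ restricts the support to jointly $2\epsilon$-typical $x^n$ (the event $z^n\in\mathcal{T}_{\epsilon}^{n}(Z)$ already holding by hypothesis) and renormalizes,
$$\max_{x^n}\mathbb{P}[X^n=x^n\,|\,Z^n=z^n,\Theta=1]=\frac{\max_{(x^n,z^n)\in\mathcal{T}_{2\epsilon}^{n}(XZ)} p(x^n|z^n)}{\mathbb{P}[\Theta=1\,|\,Z^n=z^n]}.$$
I would then bound numerator and denominator separately. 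For the numerator, joint $2\epsilon$-typicality gives $\tfrac1n N(a,c)\geq(1-2\epsilon)p(a,c)$, whence $-\tfrac1n\log p(x^n|z^n)=\sum_{a,c}\tfrac{N(a,c)}{n}\log\tfrac{1}{p(a|c)}\geq(1-2\epsilon)\mathrm{H}(X|Z)$, i.e. $p(x^n|z^n)\leq 2^{-n(\mathrm{H}(X|Z)-\delta(\epsilon))}$ with $\delta(\epsilon)=2\epsilon\,\mathrm{H}(X|Z)\to0$. For the denominator I would invoke a conditional typicality estimate: conditioned on $Z^n=z^n$, the count $N(a,c|X^n,z^n)$ is a sum of $N(c|z^n)$ i.i.d. $\mathcal{B}(p(a|c))$ variables, so a multiplicative Chernoff bound with parameter $\epsilon_1=\epsilon/(1+\epsilon)$ places $\tfrac{N(a,c)}{N(c|z^n)}$ in $[(1-\epsilon_1)p(a|c),(1+\epsilon_1)p(a|c)]$ with high probability; combined with $\tfrac{N(c|z^n)}{n}\in[(1-\epsilon)p(c),(1+\epsilon)p(c)]$ this forces $\tfrac1n N(a,c)$ into $[(1-2\epsilon)p(a,c),(1+2\epsilon)p(a,c)]$, giving $\mathbb{P}[\Theta=1|Z^n=z^n]\geq 1-\delta_\epsilon^1(n)$. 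Substituting both bounds and taking $-\log$ yields $\mathrm{H}_{\infty}(X^n|Z^n=z^n,\Theta=1)\geq n(\mathrm{H}(X|Z)-\delta(\epsilon))+\log(1-\delta_\epsilon^1(n))$, and using $-\log_2(1-x)=O(x)$ for small $x$ to absorb the normalization into the (generously chosen) $\delta_\epsilon^1(n)$ gives the claimed additive form.

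The delicate point, and the only place where the specific constants (the inflation from $\mathcal{T}_{\epsilon}^{n}$ to $\mathcal{T}_{2\epsilon}^{n}(XZ)$ and the degradation of the exponent from $\epsilon^2 n\mu/3$ to $\epsilon^2 n\mu/6$) genuinely matter, is this conditional typicality estimate for the denominator. Because $z^n$ is only $\epsilon$-typical, the per-symbol Chernoff parameter must be taken strictly below $\epsilon$ so that the \emph{product} of the two multiplicative deviations still lands inside the wider $2\epsilon$-band; this is exactly why the joint set is inflated to $\mathcal{T}_{2\epsilon}^{n}(XZ)$ and why the exponent loses the factor two. Everything else in the argument is routine bookkeeping of constants.
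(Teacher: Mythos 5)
Your proof is correct and follows essentially the same route the paper relies on: the paper states this lemma without a written proof, citing \cite{Maurer00,Bloch11} with the refinement coming from the robust-typicality estimates of \cite{Orlitsky01}, and your argument (union bound with multiplicative Chernoff rates for $\mathbb{P}[\Theta=1]$, then $\mathrm{H}_\infty = -\log$ of the renormalized maximum conditional probability, bounded via joint typicality in the numerator and conditional/Markov-type typicality in the denominator) is exactly that standard argument made explicit. Your observation about why the joint set must be inflated to $\mathcal{T}_{2\epsilon}^n(XZ)$ and why the exponent degrades from $\epsilon^2 n\mu/3$ to $\epsilon^2 n\mu/6$ matches the constants the paper uses elsewhere for its Markov-lemma bounds.
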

Let us start by defining the following RVs
\begin{align*}
\Theta & \triangleq 
   \begin{cases}
   1 & \text{if } (S^{N} , Z^{N}) \in  \mathcal{T}_{2\epsilon}^{m}(U^nV^nZ^n) \\
   & \phantom{mmmmmmmm}\text{ and }  Z^{N} \in  \mathcal{T}_{\epsilon}^{m}(Z^n), \\
   0 & \text{otherwise.}  
   \end{cases}\\
\Upsilon &  \triangleq 
   \begin{cases}
   1 & \text{if } {H}_{\infty}(S^{N} | z^{N},a^m, b^m,c^m,\Theta =1) \\
   & \phantom{m} \geq {H}_{\infty}(S^{N} | z^{N}, \Theta =1)- l_{rec} - \sqrt{N},\\
   0 & \text{otherwise.}  
   \end{cases}
\end{align*}
By Lemma \ref{lem1} applied to the DMS $(\mathcal{U}^n \mathcal{V}^n \mathcal{Z}^n,p_{U^nV^nZ^n})$,  $\mathbb{P}[\Theta=1] \geq 1-\delta_{\epsilon}^0(m)$, and by \cite[Lemma 10]{Maurer00}, $\mathbb{P}[\Upsilon=1] \geq 1-2^{-\sqrt{N}}$. Hence, $\mathbb{P}[\Upsilon=1,\Theta=1] \geq 1 - \delta_{\epsilon}^0(m) - 2^{-\sqrt{N}}$, 
and
\begin{multline}
{H}(K|U_dZ^{N}A^mB^mC^m)  
 \geq \left(1 - \delta_{\epsilon}^0(m) - 2^{-\sqrt{N}}\right) \\ \times {H}(K |U_d Z^{N}A^mB^mC^m, \Upsilon = 1, \Theta=1). 
  \label{eqdeb3}
\end{multline}
To lower bound ${H}(K |U_d Z^{N}A^mB^mC^m, \Upsilon=1, \Theta=1)$, we first lower bound ${H}_{\infty}(S^{N} | z^{N}, a^m, b^m,c^m, \Theta =1, \Upsilon =1)$ to be able to use Theorem \ref{thMaurer}. By definition of $\Upsilon$,
\begin{align}
 & {H}_{\infty}(S^{N} | z^{N},a^m,b^m,c^m,\Theta =1, \Upsilon =1)   \nonumber \\   \nonumber
& \stackrel{\phantom{}}{\geq} {H}_{\infty}(S^{N} | Z^{N}=z^{N}, \Theta =1) - l_{rec}- \sqrt{N}  \\   
&\stackrel{(a)}{\geq}  m({H}(S^n|Z^n)-n r_2(\epsilon,n,m) ) -  l_{rec},\label{eqdeb2}
\end{align}
where $(a)$ follows from Lemma \ref{lem1} with 
\begin{equation}\label{rester2}
r_2(\epsilon,n,m) \triangleq \epsilon \frac{H(S^n|Z^n)}{n}- N^{-1}\log( 1-\delta_{\epsilon}^1(m)) + N^{-1/2}.\footnote{The $m$ repetitions of the protocol $\mathcal{R}_n$ allow us to link ${H}_{\infty}(\cdot)$ to ${H}(\cdot)$.}
\end{equation}
We now lower bound ${H}(S^n|Z^n)$. 
\begin{align}
& \nonumber  {H}(S^n|Z^n) \\ \nonumber
& =  {H}(\hat{S}^n|Z^n) + {H}(S^n|\hat{S}^nZ^n) - {H}(\hat{S}^n|S^nZ^n) \\ \nonumber
& \stackrel{(b)}{\geq}    {H}(\hat{S}^n|Z^n) -  \delta_{\epsilon}(n)\\ \nonumber
& =  {I}(Y^n;\hat{S}^n|Z^n)  + {H}(\hat{S}^n|Y^nZ^n) -  \delta_{\epsilon}(n)\\ \nonumber
& = {H}(Y^n|Z^n) - {H}(Y^n|Z^n\hat{S}^n)  + {H}(\hat{U}^n|Y^nZ^n) \\ \nonumber
& \phantom{mmmmmmmmmm} + {H}(V^n|Y^n\hat{U}^nZ^n) -  \delta_{\epsilon}(n)\\ 
& \stackrel{(c)}{=}  n {H}(Y|Z)   - {H}(Y^n|Z^n\hat{S}^n) + {H}(\hat{U}^n|Y^nZ^n) -  \delta_{\epsilon}(n),\label{eqcomb0}
\end{align}
where $(b)$ follows from Fano's inequality where $\lim_{n \rightarrow \infty}\delta_{\epsilon}(n) =0 $ by the exponential decrease of $P_e (\epsilon,n)$ with $\epsilon^2 n$, and $(c)$ holds because $V^n$ is a function of $(Y^n\hat{U}^n)$, and the $Y_i$'s and $Z_i$'s are i.i.d..
We first lower bound ${H}(\hat{U}^n|Y^nZ^n)$. 
\begin{align}
&{H}(\hat{U}^n|Y^nZ^n)  \nonumber\\
&= {H}(U^n|Y^nZ^n)   + {H}(\hat{U}^n|U^nY^nZ^n) - {H}(U^n|\hat{U}^nY^nZ^n)  \nonumber\\
&\stackrel{(d)}{\geq}  {H}(U^n|Y^nZ^n) - \delta_{\epsilon}(n) \nonumber\\
&\geq {I}(X^n;U^n|Y^nZ^n)   - \delta_{\epsilon}(n) \nonumber\\ 
&\stackrel{\mathclap{\scriptstyle{(e)}}}{=} n {H}(X|YZ) - {H}(X^n|Y^nZ^nU^n) - \delta_{\epsilon}(n), \label{eqdeb}
\end{align}
\vspace*{-0.cm}where $(d)$ follows from Fano's inequality where $\lim_{n \rightarrow \infty}\delta_{\epsilon}(n) =0 $ by the exponential decrease of $P_e (\epsilon,n)$ with $\epsilon^2 n$, 
and $(e)$ holds since the $X_i$'s, $Y_i$'s , and $Z_i$'s are i.i.d.. Then, define 
\begin{align*}
\Gamma & \triangleq 
 \begin{cases}
   1 & \text{if } (X^{n} , U^{n}, Y^n,Z^n) \in  \mathcal{T}_{2\epsilon}^{n}(XUYZ),\\
  0 & \text{otherwise.}  
   \end{cases}\\
\Delta  & \triangleq 
 \begin{cases}
   1 & \text{if } (X^{n} , U^{n}) \in  \mathcal{T}_{\epsilon}^{n}(XU),\\
  0 & \text{otherwise.}  
   \end{cases}
\end{align*}
 so that,
\begin{align}
&{H}(X^n|Y^nZ^nU^n) \nonumber \\ \nonumber
& \leq {H}(X^n \Gamma \Delta |Y^nZ^nU^n) \\ \nonumber
& = {H}(\Gamma \Delta | Y^n Z^n U^n) + {H}(X^n | Y^nZ^n U^n \Gamma \Delta)\\ \nonumber
& \leq 2 + \smash{ \sum_{\mathclap{\delta, \gamma \in \left\{0,1 \right\} } }} \ \mathbb{P}[\Gamma=\gamma | \Delta=\delta] \mathbb{P}[\Delta=\delta] \nonumber \\ \nonumber
&\phantom{mmmmmmm} \times {H}(X^n | Y^nZ^n U^n, \Gamma=\gamma, \Delta=\delta) \\ \nonumber
&  \smash{\stackrel{(f)}{\leq}} 2+  {H}(X^n | Y^nZ^n U^n, \Gamma=1, \Delta=1) \\ 
& \phantom{mmmmmmmm}+ n ( 2\delta_{\epsilon}^2(n)+ \delta_{\epsilon}^4(n) )\log |\mathcal{X}|, \label{equation1}
\end{align}
where $(f)$ holds since $\mathbb{P}[\Delta =0]\triangleq \delta_{\epsilon}^2(n)$,%
\footnote{We have $\delta_{\epsilon}^2(n) \leq P_e(\epsilon,n)$ by Appendix \ref{recach}.} and $\mathbb{P}[\Gamma=0 | \Delta=1]\leq \delta_{\epsilon}^4(n)$.\footnote{By Markov Lemma, we have $\delta_{\epsilon}^4(n) \triangleq 2|S_{UXYZ}|e^{-\epsilon^2 n\mu_{UXYZ}/6} $.} Indeed, we can apply Markov Lemma~\cite{Berger78} (see the version given in~\cite{Orlitsky01}), since we have $U^n \text{---} X^n \text{---} Y^nZ^n$ and for every $(x^n,y^n,z^n)$, $p(y^nz^n|x^n) = \displaystyle\prod_{i=1}^n p_{YZ|X}(y_iz_i|x_i)$. 
Then,
\begin{align}
& \nonumber   {H}(X^n | Y^n Z^n U^n, \Gamma=1, \Delta=1) \\ \nonumber
& = {\sum_{y^n, z^n, u^n}} p(y^n,z^n,u^n|1,1)  {H}(X^n | y^n, z^n, u^n, \Gamma \! = \! 1,  \Delta \! = \! 1) \\  \nonumber
& \leq \sum_{y^n, z^n, u^n} p(y^n,z^n,u^n|1,1) \log |\mathcal{T}_{2\epsilon}^n(X| y^n, z^n, u^n)| \\  \nonumber
& \leq  \sum_{y^n, z^n, u^n} p(y^n,z^n,u^n|1,1) (n {H}(X|YZU) (1+ 2\epsilon))\\
& \leq n {H}(X|YZU) (1+ 2\epsilon). \label{equation2}
\end{align}
Hence, combining (\ref{eqdeb}), (\ref{equation1}), and (\ref{equation2}), we obtain
\begin{equation} \label{eqcomb1}
{H}(\hat{U}^n|Y^nZ^n) \geq n ({H}(X|YZ) - {H}(X|YZU) - r_3(\epsilon,n)),
\end{equation}
where 
\begin{multline} \label{reste1}
r_3(\epsilon,n) \triangleq 2{H}(X|YZU) \epsilon + ( 2\delta_{\epsilon}^2(n)+ \delta_{\epsilon}^4(n) )\log |\mathcal{X}| \\+ 2/n +  \delta_{\epsilon}(n)/n.
\end{multline}
We now lower bound  the term $-{H}(Y^n|Z^n\hat{S}^n)$ in (\ref{eqcomb0}). Define 
\begin{align*}
\Gamma_1 & \triangleq 
 \begin{cases}
   1 & \text{if } (Y^n,\hat{U}^{n} , V^{n}, Z^n) \in  \mathcal{T}_{2\epsilon}^{n}(YUVZ),\\
  0 & \text{otherwise.}  
   \end{cases}\\
\Delta_1 \vspace*{-1.4em} & \triangleq 
 \begin{cases}
   1 & \text{if } (Y^n, \hat{U}^{n} , V^{n}) \in  \mathcal{T}_{\epsilon}^{n}(YUV),\\
  0 & \text{otherwise.\vspace*{-1.4em}}  
   \end{cases}\vspace*{-1.4em}
\end{align*}
We can write
\begin{align}
& {H}(Y^n|Z^n\hat{S}^n) \nonumber \\ \nonumber
& \leq {H}(Y^n \Gamma_1 \Delta_1 |Z^n\hat{S}^n)\\ \nonumber
& = {H}(\Gamma_1 \Delta_1 | Z^n \hat{S}^n) + {H}(Y^n | Z^n \hat{S}^n \Gamma_1 \Delta_1)\\ \nonumber
& \leq 2 +  \smash{\sum_{\delta_1, \gamma_1 \in \left\{0,1 \right\} }}\mathbb{P}[\Gamma_1=\gamma_1 | \Delta_1=\delta_1] \mathbb{P}[\Delta_1=\delta_1]  \\ \nonumber
& \phantom{pyjamammmmmm} \times {H}(Y^n | Z^n \hat{S}^n, \Gamma_1=\gamma_1, \Delta_1=\delta_1)\\ 
& \smash{\stackrel{(g)}{\leq}} 2+  {H}(Y^n|Z^n\hat{S}^n, \Gamma_1=1, \Delta_1=1)  \nonumber \\
& \phantom{pyjamammmmmmm} + n ( 2\delta_{\epsilon}^3(n)+ \delta_{\epsilon}^5(n) )\log |\mathcal{Y}|, \label{eqaux1}
\end{align}
where $(g)$ holds since $\mathbb{P}[\Delta_1 =0]\triangleq \delta_{\epsilon}^3(n)$,%
\footnote{We have $\delta_{\epsilon}^3(n) \leq P_e(\epsilon,n)$ by Appendix \ref{recach}.}
  and $\mathbb{P}[\Gamma_1=0 | \Delta_1=1]\leq \delta_{\epsilon}^5(n)$.\footnote{By Markov Lemma, we have $\delta_{\epsilon}^5(n) \triangleq 2|S_{UVYZ}|e^{-\epsilon^2 n\mu_{UVYZ}/6} $.} Indeed, we can apply Markov Lemma, since we have for every $(y^n,z^n)$, $p(z^n|y^n) = \displaystyle\prod_{i=1}^n p_{Z|Y}(z_i|y_i)$, and $(\hat{U}^nV^n) \text{---} Y^n \text{---} Z^n$, which follows from the assumption $X\text{---} Y \text{---} Z$.\footnote{Note that the assumption of degraded sources is only necessary here. The use of this hypothesis is the weakness, at least for two-way communication (for one-way communication this assumption is not necessary), of a proof that consists of a successive design of reconciliation and privacy amplification, rather than a joint design as in \cite{Csiszar00}, where the joint design is exploited to get the joint typicality of $(V^n,Y^n,\hat{U}^n,Z^n)$. } 
\begin{align}
&  {H}(Y^n|Z^n\hat{S}^n, \Gamma_1=1, \Delta_1=1) \nonumber \\ \nonumber
& = \displaystyle\sum_{z^n, \hat{s}^n} p(z^n, \hat{s}^n|1,1)  {H}(Y^n | z^n, \hat{s}^n, \Gamma_1 = 1,  \Delta_1 = 1) \\ \nonumber
& \leq \displaystyle\sum_{z^n, \hat{s}^n} p(z^n, \hat{s}^n|1,1)  \log |\mathcal{T}_{2\epsilon}^n(Y| z^n,\hat{s}^n)| \\ \nonumber
& \leq \displaystyle\sum_{z^n, \hat{s}^n} p(z^n, \hat{s}^n|1,1)  (n {H}(Y|ZUV) (1+ 2\epsilon))\\
& \leq n {H}(Y|ZUV) (1+ 2\epsilon).  \label{eqaux2}
\end{align}
Hence by (\ref{eqaux1}), (\ref{eqaux2}), 
\begin{align} \label{eqcomb2}
 {H}(Y^n|Z^nU^nV^n) \leq n ({H}(Y|ZUV) + r_4(\epsilon,n)),
\end{align}
where
\begin{equation} \label{reste2}
r_4(\epsilon,n) \triangleq  2{H}(Y|ZUV) \epsilon+ ( 2\delta_{\epsilon}^3(n)+ \delta_{\epsilon}^5(n) )\log |\mathcal{Y}| +2/n.
\end{equation}
Combining (\ref{eqcomb0}), (\ref{eqcomb1}), (\ref{eqcomb2}),
\begin{multline}
  {H}(S^n|Z^n)  \geq n[{H}(Y|Z) +{H}(X|YZ)   - {H}(X|YZU) \\ -{H}(Y|ZUV) - r_3(\epsilon,n) - r_4(\epsilon,n)] - \delta_{\epsilon}(n). \label{eqdebcomb}
\end{multline}
Then, remark that
\begin{align}
& {H}(Y|Z) +{H}(X|YZ)  - {H}(X|YZU) -{H}(Y|ZUV)  \nonumber  \\ \nonumber
&= {I}(Y;UV|Z) + {I}(X;U|YZ)   \\  \nonumber
&=  {H}(U|Z) + {H}(V|UZ)  - {H}(V|UYZ) -  {H}(U|XYZ) \\ \nonumber
&\overset{(h)}{\geq} {H}(U|Z) + {H}(V|UZ)  - {H}(V|UY) -  {H}(U|X) \\ 
&= {I}(U;X) - {I}(U;Z) - {I}(V;Z|U) + {I}(V;Y|U), \label{eqsup}
\end{align}
where $(h)$ holds because conditioning reduces entropy.
Hence, by (\ref{lrec}), (\ref{eqdeb2}), (\ref{eqdebcomb}), and (\ref{eqsup})
\begin{multline}
 {H}_{\infty}(S^{N} | z^{N}, a^m,b^m,c^m, \Theta =1, \Upsilon =1)     \\
 \geq N[{I}(U;Y) + {I}(V;X|U) - {I}(U;Z)  - {I}(V;Z|U)  \\
 -r_5(\epsilon,n,m)] , \label{setk}
\end{multline}
\vspace*{-0.4em}
where
\begin{multline}\label{reste3}
r_5(\epsilon,n,m)  \triangleq r_1(\epsilon,n) + r_2(\epsilon,n,m) \\ +r_3(\epsilon,n) +r_4(\epsilon,n)+\delta_{\epsilon}(n)/n. 
\end{multline}
Set $k$ to be less than the lower bound in (\ref{setk}) by $\sqrt{N}$: 
\begin{multline} \label{equationk}
k \triangleq \lfloor N[{I}(U;Y) + {I}(V;X|U) - {I}(U;Z) - {I}(V;Z|U) \\ - r_5(\epsilon,N)] - \sqrt{N} \rfloor.
\end{multline}
Now with (\ref{setk}) and (\ref{equationk}), we can apply  Theorem \ref{thMaurer} to lower bound ${H}(K |U_d Z^{N}A^mB^mC^m, \Upsilon=1, \Theta=1)$ by $k -\delta^*(N)$, where $\delta^*(N) = 2^{-\sqrt{N}/\log N} \left( k+\sqrt{N}/\log N\right)$. Thus, we can finally lower bound ${H}(K|U_dZ^{N}A^mB^mC^m)$ in (\ref{eqdeb3}):
\begin{align*}
&{H}(K|U_dZ^{N}A^mB^mC^m) \\
& \geq \left(1 - \delta_{\epsilon}^0(m) - 2^{-\sqrt{N}} \right) (k - \delta^*(N)) \\
& = k - r_6(\epsilon,n,m),
\end{align*}
\vspace*{-0.4em}
where 
\begin{multline*}
r_6(\epsilon,n,m) \triangleq  \left(1 - \delta_{\epsilon}^0(m) - 2^{-\sqrt{N}} \right) \delta^*(N) \\ + \left(  \delta_{\epsilon}^0(m) +2^{-\sqrt{N}} \right) k.
\end{multline*}
Moreover, the leakage is such that 
\begin{multline} \label{leakage}
{I}(K;U_dZ^{N}A^mB^mC^m) \\ = {H}(K)-{H}(K|U_dZ^{N}A^mB^mC^m) \leq r_6(\epsilon,n,m),
\end{multline}
with $r_6(\epsilon,n,m)$ vanishing to zero for a fixed $n$ as $m$ goes to infinity.
The keys computed by Alice and Bob are asymptotically the same for a fixed $n$ as $m$ goes to infinity, since
\begin{equation}\label{kreliability}
\mathbb{P}[K \neq \hat{K}] \leq \mathbb{P}[ (S^n)^m \neq (\hat{S}^n)^m] \leq \delta_e (m).  
\end{equation}

Then, by (\ref{rester1}), (\ref{rester2}), (\ref{reste1}), (\ref{reste2}), (\ref{reste3}), we have that $r_5(\epsilon,n,m)$ vanishes to zero for $n$ large enough and as $m$ goes to infinity, thus the secret key rate $R \triangleq k/N$ is asymptotically as close as desired to
\begin{equation*}
{I}(U;Y) - {I}(U;Z) + {I}(V;X|U)  - {I}(V;Z|U).
\end{equation*} 
Note that it is not exactly the bound proposed in Theorem~\ref{C_WSK}.\ref{C_WSK2} for the WSK capacity. We finish the proof as follows. If ${I}(V;X|U)  \leq {I}(V;Z|U)$, in the reconciliation we set $R_2=0$ so that the asymptotic secret key rate is now as close~as~desired~to
\begin{equation*}
 {I}(U;Y) - {I}(U;Z) + [{I}(V;X|U)  - {I}(V;Z|U)]^+.
\end{equation*}
Then, if ${I}(U;Y) \leq {I}(U;Z)$, in the reconciliation protocol, we choose $S^n=V^n$  (see the beginning of the proof), and we assume that $U^N$ is provided by a genie to Eve. Consequently, we obtain instead of Equation (\ref{eqdeb2}),
\begin{multline*}
  {H}_{\infty}(V^{N} | z^{N}, u^N, b^m,c^m,\Theta =1, \Upsilon =1)   \\   
\geq  m({H}(V^n|Z^nU^n)-n r_2(\epsilon,n,m))  \\ -  N ({I}(V;Y|U) - {I}(V;X|U) - r_1(\epsilon,n)),
\end{multline*}
and conclude in the same manner, to obtain an asymptotic secret key rate as close as desired to
\begin{equation*}
[ {I}(U;Y) - {I}(U;Z) ]^+ + [{I}(V;X|U)  - {I}(V;Z|U)]^+ .
\end{equation*}
\subsubsection{Continuous case}
We use the following lemma to extend the result to the continuous case by means of quantization.
\begin{lem}[\cite{Fano61,Pinsker64,Cover91}] \label{lemiquant}
Let $X$ and $Y$ be two real-valued random variables with probability distribution $\mathbb{P}_X$ and $\mathbb{P}_Y$ respectively. Let $\mathcal{E}_{\Delta_1}=\left\{ E_i\right\}_{i \in \mathcal{I}}$, $\mathcal{F}_{\Delta_2}=\left\{ F_j\right\}_{j \in \mathcal{J}}$ be two partitions of $X$ and $Y$ such that for any $i \in \mathcal{I}, \mathbb{P}_X(E_i)=\Delta_1$, for any $j \in \mathcal{J}, \mathbb{P}_Y(F_j)=\Delta_2$, where $\Delta_1, \Delta_2 >0$. Let $X_{\Delta_1}$, $Y_{\Delta_2}$ be the quantized version of $X$, $Y$ with respect to the partitions $\mathcal{E}_{\Delta_1}$, $\mathcal{F}_{\Delta_2}$ respectively. Then, we have
$$
{I}(X;Y) = \lim_{\Delta_1, \Delta_2 \to 0} {I}(X_{\Delta_1},Y_{\Delta_2}).
$$\end{lem}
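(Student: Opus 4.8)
The plan is to rely on the standard characterization of mutual information for arbitrary (not necessarily discrete) real-valued random variables as a supremum over finite measurable partitions, namely ${I}(X;Y) = \sup_{\mathcal{P},\mathcal{Q}} {I}(X_{\mathcal{P}};Y_{\mathcal{Q}})$, where $\mathcal{P},\mathcal{Q}$ range over all finite Borel partitions of $\mathbb{R}$ and $X_{\mathcal{P}},Y_{\mathcal{Q}}$ denote the induced discrete random variables. Since the sources of interest are continuous (the relevant laws admit densities and are therefore atomless), equal-probability partitions $\mathcal{E}_{\Delta_1},\mathcal{F}_{\Delta_2}$ with cells of mass $\Delta_1,\Delta_2$ exist for every $\Delta_1,\Delta_2>0$, each cell being an interval between consecutive quantiles, so the construction is well defined. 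I would then split the argument into an easy upper bound and a more delicate matching lower bound.

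For the upper bound, $(X_{\Delta_1},Y_{\Delta_2})$ is a deterministic function of $(X,Y)$, so the data-processing inequality gives ${I}(X_{\Delta_1};Y_{\Delta_2}) \leq {I}(X;Y)$ for every $\Delta_1,\Delta_2>0$, whence $\limsup_{\Delta_1,\Delta_2\to 0}{I}(X_{\Delta_1};Y_{\Delta_2}) \leq {I}(X;Y)$. For the lower bound, fix $\epsilon>0$. Because intervals generate the Borel $\sigma$-algebra of $\mathbb{R}$, the supremum defining ${I}(X;Y)$ is already approached over finite interval partitions, so there exist finite interval partitions $\mathcal{P}^*,\mathcal{Q}^*$ with ${I}(X_{\mathcal{P}^*};Y_{\mathcal{Q}^*}) \geq {I}(X;Y)-\epsilon$. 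The heart of the proof is to compare the equal-probability partitions to these fixed ones: since both families consist of intervals, for $\Delta_1$ small each breakpoint of $\mathcal{P}^*$ lies in a single cell of $\mathcal{E}_{\Delta_1}$, so coarsening $\mathcal{E}_{\Delta_1}$ by grouping its cells according to the $\mathcal{P}^*$-cell they fall in yields a partition $\tilde{\mathcal{P}}$ that is a coarsening of $\mathcal{E}_{\Delta_1}$ and agrees with $\mathcal{P}^*$ outside the at most $|\mathcal{P}^*|-1$ ``bad'' cells straddling a breakpoint, whose total mass is at most $(|\mathcal{P}^*|-1)\Delta_1 \to 0$, and similarly for $\mathcal{F}_{\Delta_2}$ versus $\mathcal{Q}^*$. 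Data processing then gives ${I}(X_{\Delta_1};Y_{\Delta_2}) \geq {I}(X_{\tilde{\mathcal{P}}};Y_{\tilde{\mathcal{Q}}})$, and continuity of discrete mutual information on the fixed finite alphabet of $\mathcal{P}^*,\mathcal{Q}^*$ under $O(\Delta)$ perturbations of the joint law gives ${I}(X_{\tilde{\mathcal{P}}};Y_{\tilde{\mathcal{Q}}}) \geq {I}(X_{\mathcal{P}^*};Y_{\mathcal{Q}^*}) - o(1)$. Combining, $\liminf_{\Delta_1,\Delta_2\to 0}{I}(X_{\Delta_1};Y_{\Delta_2}) \geq {I}(X;Y)-\epsilon$, and letting $\epsilon\to 0$ closes the gap against the upper bound.

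I expect the comparison step to be the main obstacle: the equal-probability partitions are neither nested nor exact refinements of the fixed $\mathcal{P}^*,\mathcal{Q}^*$, so data-processing monotonicity cannot be invoked directly, and the ``almost refinement'' must be made quantitative by controlling the perturbation of the discrete mutual information caused by the finitely many boundary cells of total mass $O(\Delta)$. A clean alternative that sidesteps the non-nestedness is to first establish the limit along a nested dyadic subsequence $\Delta_k = 2^{-k}$, where the quantized mutual informations increase monotonically and the generated $\sigma$-algebras $\sigma(X_{\Delta_k})$ increase to the Borel $\sigma$-algebra, so that the martingale/monotone-convergence theorem yields convergence to ${I}(X;Y)$, and then to sandwich an arbitrary $\Delta$ between two consecutive dyadic scales. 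The only other pitfall, the possible nonexistence of exact equal-probability partitions, is excluded by the atomlessness granted by the density assumption on the continuous source, so no separate treatment of atoms is required.
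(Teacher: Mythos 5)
Your proposal is correct and follows essentially the same route as the paper: both use Pinsker's supremum-over-partitions characterization of $I(X;Y)$, obtain the upper bound from monotonicity under refinement (data processing), and obtain the lower bound by building a coarsening of the equal-probability partition that approximates a near-optimal fixed partition up to cells of total mass $O(\Delta)$, then invoking continuity of the discrete mutual information. Your explicit restriction to interval partitions and the alternative dyadic/martingale argument are reasonable ways to make the approximation step fully rigorous, but they do not change the underlying argument.
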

Note that a quantization of  the eavesdropper observation $Z^n$ might underestimate its knowledge from the legitimate users point of view and implicitly increase the leakage. However, by Lemma \ref{lemiquant}, for any $\delta >0$, if the quantized version $Z^n_{\Delta^n}$ of $Z^n$ is fine enough,  then the leakage is not compromised and
\begin{equation*}
|I(K;MZ^n) -I(K;MZ^n_{\Delta^n})| < \delta.
\end{equation*}
This argument is also used in~\cite{barros08,Chou13c,Pierrot13}.

We perform the quantization as follows. As in Lemma~\ref{lemiquant}, we jointly quantify $X$, $Y$, $Z$, $U$ and $V$ to form $X_{\Delta_X}$, $Y_{\Delta_Y}$, $Z_{\Delta_Z},U_{\Delta_U}$, $V_{\Delta_V}$ such that $\Delta_X=\Delta_Y=\Delta_Z=\Delta_U=\Delta_V=l^{-b}$ and $|\mathcal{X}_{\Delta_X}|=|\mathcal{Y}_{\Delta_Y}|=|\mathcal{Z}_{\Delta_Z}|=|\mathcal{U}_{\Delta_U}|=|\mathcal{V}_{\Delta_V}|=l^{b}$ with $b>0$. 
We now apply the proof of the discrete case to the random variables $X_{\Delta_X}$, $Y_{\Delta_Y}$, $Z_{\Delta_Z},U_{\Delta_U}$, $V_{\Delta_V}$. By Lemma~\ref{lemiquant}, we can fix $l$ large enough such that
$|{I}(U_{\Delta_U};Y_{\Delta_Y}) - {I}(U;Y)| < \delta/4$, $|{I}(V_{\Delta_V};X_{\Delta_X}|U_{\Delta_U}) - {I}(V;X|U)| < \delta/4$, $|{I}(U_{\Delta_U};Z_{\Delta_Z}) - {I}(U;Z)| < \delta/4$, $|{I}(V_{\Delta_V};Z_{\Delta_Z}|U_{\Delta_U}) - {I}(V;Z|U)| < \delta/4$, and Equation (\ref{equationk}) becomes 
\begin{multline*} 
k \geq \lfloor N[{I}(Y;U) -{I}(V;X|U) - {I}(U;Z) - {I}(V;Z|U) \\- r_5(\epsilon,n,m)-  \delta]  - \sqrt{N}  \rfloor.
\end{multline*}
At this point, we cannot conclude with the last inequality. Indeed, in the term $r_5(\epsilon,n,m)$ are hidden the following terms: ${H}(X_{\Delta_X}|Z Y_{\Delta_Y}U_{\Delta_U})\epsilon$ (see (\ref{reste1})),  ${H}(Y_{\Delta_Y}|Z_{\Delta_Z} U_{\Delta_U} V_{\Delta_V})\epsilon$ (see (\ref{reste2})), ${H}(U_{\Delta_U})\epsilon$ and ${H}(V_{\Delta_V}|U_{\Delta_U})\epsilon$ (by definition of $r_0(\epsilon)$), which do not vanish to $0$ as $l$ get large. Now, if we choose $\epsilon = n^{-a}$, where $a \in ] 0, 1/2[$, so that for $i\in \{ 0,1,2,3,5 \} $, $\delta_{\epsilon}^i(n)$ vanishes as $n$ get large for $l$ fixed,\footnote{Recall that $P_e (\epsilon,n)$ decreases exponentially to zero as $n \epsilon^2$ goes to infinity.} then the asymptotic secret-key rate, for $n$ large enough and as $m$ goes to infinity becomes as close as desired to
\begin{equation*}
  {I}(Y;U) -{I}(V;X|U) - {I}(U;Z) - {I}(V;Z|U).
\end{equation*} 
Moreover, the leakage in (\ref{leakage}), and the key error probability between Alice an Bob in (\ref{kreliability}), still vanish to zero for $n$ large enough and as $m$ goes to infinity.

\section{Proof of Theorem \ref{theorem_seq1}} \label{AppendixTh1}
As in \cite{Csiszar04}, Theorem \ref{theorem_seq1} is not directly deduced from Theorem \ref{theorem_seq2}. %
 We first consider the case of one-way public communication, in which Alice sends messages to Bob, a first time with rate $R_1'$ and a second time with rate $R_2'$. For this scenario we note $C_{\textup{rec}}^*$ the reconciliation capacity. 

We can modify the proof of Proposition \ref{C_rec2} to obtain for $R_1',R_2' \in \mathbb{R}^+$, 
\begin{align*}
& C_{\textup{rec}}^*(R_1',R_2') \geq \displaystyle\max_{U,V} \left[{I}(U;Y)  + {I}(V;Y|U) \right] 
\end{align*}
\vspace*{-1em}
subject to
\vspace*{-1em}
\begin{align}
&R_1' \geq {I}(X;U|Y) \label{rate1b} \\
&R_2' \geq {I}(V;X|YU) \label{rate2b} \\
&U \text{---} X \text{---} Y, \text{ } V \text{---} UX \text{---} Y. \nonumber
\end{align}
Then, we can modify the proof of Theorem \ref{theorem_seq2} to prove that we can achieve the rate
\begin{multline*}
 R_{\textup{WSK}}^*(R_1',R_2') \triangleq \displaystyle\max_{U,V} \left( [ {I}(Y;U) -{I}(Z;U) ]^+  \right. \\ \left. + [ {I}(Y;V|U)  - {I}(Z;V|U) ] ^+ \right),
\end{multline*}
subject to rate constraints (\ref{rate1b}), (\ref{rate2b}) and Markov conditions
\begin{align}
&U \text{---} X \text{---} YZ, \text{ }V \text{---} UX \text{---} YZ \label{markov2b},
\end{align}
by a reconciliation phase followed by a privacy amplification phase performed with extractors, and this time without the assumption $X \rightarrow Y \rightarrow Z$. 
Note that Markov condition 
\begin{align}
U \text{---} V \text{---} X \text{---} YZ \label{markov3},
\end{align}
 implies Markov conditions (\ref{markov2b}), and that if Markov condition~(\ref{markov3}) holds, then the rate constraint~(\ref{rate2b}) becomes
\begin{multline*} 
R_2' \geq  {I}(X;V|U) - {I}(Y;V|U) 
 \geq  {I}(X;V) -{I}(Y;V) - R_1'. 
\end{multline*}
Hence, for $R_1',R_2'>0$ satisfying $R_1' +R_2' = R_1$,
%
\begin{align*} 
&  R_{\textup{WSK}}^*(R_1',R_2') \geq \displaystyle\max_{U,V}  [{I}(Y;V|U) - {I}(Z;V|U) ],
\end{align*}
subject to rate constraint $R_1 \geq {I}(X;V) -{I}(Y;V)$ and Markov condition (\ref{markov3}). We conclude by observing that $C_{\textup{WSK}}(R_1,0) \geq R_{\textup{WSK}}^*(R_1',R_2')$. 
\section{Proof of  Proposition \ref{C_rec2}} \label{AppendixC_rec}
\subsection{Converse}
Let $R_1, R_2 \in \mathbb{R}^+$. We first establish the rate constraints on $R_1$ and $R_2$. We have
\begin{align}
n R_1 \nonumber
&\geq {H}(A)  \\ \nonumber
&\geq  {I}(A;X^n)  - {I}(A;Y^n)   \\ \nonumber
&\stackrel{(a)}{=} n [ {I}(A;X_J|\tilde{U}) - {I}(A;Y_J|\tilde{U}) ]\\ 
&\stackrel{(b)}{=} n [ {I}(U;X_J) - {I}(U;Y_J) ],  \label{eqrate1aux} 
\end{align}
where $(a)$ holds by \cite[Lemma 4.1]{Ahlswede98}, if we set $\tilde{U}\triangleq X^{J-1}Y_{J+1}^NJ$ and $J$ is a RV uniformly distributed on $\llbracket 1, n\rrbracket$, independent of all previous RVs, $(b)$ holds if we set $U \triangleq A\tilde{U}$, since $X_J$ and $\tilde{U}$ are independent.\\
Similarly, we have
\begin{align}
 nR_2 \nonumber
&\stackrel{\phantom{(a)}}{\geq} {H}(B|A)  \\ \nonumber
&\stackrel{(c)}{\geq}  {H}(B|X^n) + {H} (\hat{S}|S)  - n\delta(\epsilon) \\ \nonumber
&\stackrel{(d)}{\geq} {I} (\hat{S};B|X^n) + {H}(\hat{S}|BX^n) - n\delta(\epsilon)\\ 
&= {H}(\hat{S}|X^n) \label{eqrate2aux} - n\delta(\epsilon) \\ \nonumber
&= {H}(\hat{S}|A) - {I}(\hat{S};X^n|A) - n\delta(\epsilon) \\ \nonumber
&\stackrel{(e)}{=} {I}(\hat{S};Y^n|A) - {I}(\hat{S};X^n|A) - n\delta(\epsilon) \\ \nonumber
&\stackrel{(f)}{=}  n [ {I}(V;Y_J|U) - {I}(V;X_J|U)] - n\delta(\epsilon), 
\end{align}
where $(c)$ holds because $A$ is a function of $X^n$ and by Fano's inequality, since for any $\epsilon >0$, there exists a reconciliation protocol such that $\mathbb{P} [ S \neq \hat{S} ] \leq \delta(\epsilon)$,\footnote{$\delta(\epsilon)$ denotes a function of $\epsilon$ such that $\lim_{\epsilon \rightarrow 0}\delta(\epsilon) =0$.} $(d)$ holds since $S=\eta_a(X^n,B)$, $(e)$ holds since $\hat{S} = \eta_b(Y^n,A)$, $(f)$ holds by \cite[Lemma 4.1]{Ahlswede98} and if we set $V \triangleq \hat{S}$. 

We now determine the reconciliation capacity bound.
\begin{align}
 {I}(\hat{S};X^n) \nonumber
& = \displaystyle\sum_{i=1}^{n} {I}(\hat{S};X_i|X^{i-1}) \\ \nonumber
&\stackrel{(a)}{=}  \displaystyle\sum_{i=1}^{n} {I}(\hat{S}X^{i-1};X_i) \\ \nonumber
& \leq  \displaystyle\sum_{i=1}^{n} {I}(\hat{S}X^{i-1}Y_{i+1}^n;X_i) \\ \nonumber
&=  n \displaystyle\sum_{i=1}^{n} \mathbb{P}[J=i] {I}(\hat{S}X^{J-1}Y_{J+1}^{n};X_J| J=i) \\ \nonumber
&=  n {I}(\hat{S}\tilde{U};X_J|J) \\
&\leq  n {I}(VU;X_J), \label{eqaux}
\end{align}
where $(a)$ holds because the $X_i$'s are i.i.d.. Then, 
\begin{align*}
&{H}(\hat{S}) - {H}(AB)\\
&= {I}(\hat{S};X^n) + {H}(\hat{S}|X^n) - {H}(A)  - {H}(B|A) \\
&\stackrel{(b)}{\leq} n {I}(VU;X_J) - {H}(A) + n\delta(\epsilon) \\
&\stackrel{(c)}{\leq} n [ {I}(V;X_J|U) +I(U;Y_J) +  \delta(\epsilon)],
\end{align*}
where $(b)$ holds by (\ref{eqaux}) and since ${H}(\hat{S}|X^n) \leq {H}(B|A)+ n\delta(\epsilon) $ by (\ref{eqrate2aux}), and $(c)$ holds by (\ref{eqrate1aux}).\\
For a DMS, standard techniques \cite{Ahlswede98} show that $|\mathcal{U}| \leq |\mathcal{X}|+2$ and $|\mathcal{V}| \leq |\mathcal{Y}|$.
\subsection{Achievability} \label{recach}
The proof for a DMS is similar to Wyner-Ziv coding~\cite{Wyner76}, we only describe the protocol. In the following, for $n \in \mathbb{N}$ and $\epsilon>0$,  we note $\mathcal{T}_{\epsilon}^n(X)$ the set of $\epsilon$-letter-typical sequences~\cite{Kramerbook} (also called ``robustly typical sequence" in \cite{Orlitsky01}) with respect to $p_X$. We also define conditional typical sets as follows, $\mathcal{T}_{\epsilon}^n(Y|x^n) \triangleq \{ y^n : (x^n,y^n) \in \mathcal{T}_{\epsilon}^n(XY) \}$. We note $\mu_X \triangleq \min_{x\in supp(p_X)}p_X(x)$. 
Let $\epsilon >0$, and define $\epsilon_1 \triangleq \frac{1}{2}\epsilon$, $\epsilon_2 \triangleq 2 \epsilon$.\\
\textbf{Code construction}: 
Fix a joint probability distribution $p_{UX}$ on $\mathcal{U} \times \mathcal{X}$ and $p_{UVY}$ on $\mathcal{U} \times \mathcal{V} \times \mathcal{Y}$. Let $R_u^{\phantom{a}} = {I}(X;U|Y) + 6 \epsilon H(U)$, $R_u'={I}(Y;U) - 3 \epsilon H(U)$. Generate $2^{n(R_u^{\phantom{a}}+R_u')}$ codewords, labeled $u^n(\omega,\nu)$ with $(\omega,\nu) \in \llbracket 1, 2^{nR_u^{\phantom{a}}\phantom{'}} \rrbracket \times \llbracket 1, 2^{nR_u'}\rrbracket$, by generating the symbols $u_i(\omega,\nu)$ for $i \in \llbracket 1,n\rrbracket$ and $(\omega,\nu) \in \llbracket 1, 2^{nR_u^{\phantom{a}}\phantom{'}} \rrbracket \times \llbracket 1, 2^{nR_u'}\rrbracket$ independently according to $p_U$. Let $R_v^{\phantom{a}} = {I}(V;Y|XU) + 6 \epsilon_2 H(V|U)$, $R_v'={I}(V;X|U) - 3 \epsilon_2 H(V|U)$. For each $(\omega,\nu)$, generate $2^{n(R_v^{\phantom{a}}+R_v')}$ codewords, labeled $v^n(\omega,\nu,k,l)$ with $(k,l) \in \llbracket 1, 2^{nR_v^{\phantom{a}}\phantom{'}}\rrbracket \times \llbracket 1, 2^{nR_v'}\rrbracket$, by generating the symbols $v_i(\omega,\nu,k,l)$ for $i \in \llbracket 1,n\rrbracket$ and  $(k,l) \in \llbracket 1, 2^{nR_v^{\phantom{a}}\phantom{'}}\rrbracket \times \llbracket 1, 2^{nR_v'}\rrbracket$ independently according to $p_{V|U=u_i(\omega,\nu)}$.\\ 
\textbf{Step1. At Alice's side}: 
Given $x^n$, find a pair $(\omega,\nu)$ s.t $(x^n,u^n(\omega,\nu)) \in \mathcal{T}^n_{\epsilon}(XU)$. If we find several pairs, we choose the smallest one (by lexicographic order). If we fail we choose $(\omega,\nu)=(1,1)$.
Define $s_1^n \triangleq u^n(\omega,\nu)$ and transmit $a \triangleq \omega$.\\
\textbf{Step2. At Bob's side}:
Given $y^n$ and $a$, find $\tilde{\nu}$ s.t $(y^n,u^n(\omega,\tilde{\nu})) \in \mathcal{T}^n_{\epsilon}(YU)$ and define $\hat{s}_1^n \triangleq u^n(\omega,\tilde{\nu})$. If there is one or more such $\tilde{\nu}$, choose the lowest, otherwise set $\tilde{\nu}=1$. Find a pair $(k,l)$ such that $\left( \hat{s}_1^n,y^n,v^n(\omega,\tilde{\nu},k,l) \right) \in \mathcal{T}^n_{\epsilon_2}(UYV)$. If there is one or more such $(k,l)$, choose the lowest, otherwise set $(k,l)=(1,1)$. Transmit $b=k$. Define $\hat{s}_2^n \triangleq v^n(\omega,\tilde{\nu},k,l) $ and $\hat{s}^n \triangleq (\hat{s}_1^n,\hat{s}_2^n)$.\\
\textbf{Step3. At Alice's side}: 
Given $s_1^n=u^n(\omega,\nu)$ and $b$, find $\tilde{l}$ s.t $( x^n,s_1^n,v^n(\omega,\tilde{\nu},k,\tilde{l}) ) \in \mathcal{T}^n_{\epsilon_2}(XUV)$. If there is one or more such $\tilde{l}$, choose the lowest, otherwise set $\tilde{l}=1$.
Define $s_2^n \triangleq v^n(\omega,\tilde{\nu},k,\tilde{l})$ and $s^n \triangleq (s_1^n,s_2^n)$.\\
We can show by standard arguments that there exists a code, such that after one repetition of the protocol, Alice obtains $S^n = U^n\hat{V}^{n}$, whereas Bob has $\hat{S}^n = \hat{U}^nV^n$ with $\mathbb{P}[\hat{U}^n \neq U^{n}] \leq \delta_{\epsilon}(n)$,\footnote{$\delta_{\epsilon}(n)$ denotes a function of $\epsilon$  and $n$ such that $\lim_{n \rightarrow \infty}\delta_{\epsilon}(n) =0$.} $\mathbb{P}[\hat{V}^n \neq V^{n}] \leq \delta_{\epsilon}(n)$, $\mathbb{P} [\hat{S}^n \neq S^{n}|\mathcal{R}_{n}] \leq P_e (\epsilon,n)$\footnote{We can show that $P_e (\epsilon,n)$ decreases exponentially to zero as $n\epsilon^2$ goes to infinity.} and $(U^n,X^n)$, $(\hat{U}^n,Y^n)$, $(\hat{U}^n,Y^n,V^n)$, $(U^n,\hat{V}^n,X^n)$ jointly typical with probability approaching one for $n$ large.\\
To extend the result to a CMS, we proceed as in the proof of Theorem \ref{theorem_seq2}.
\section{Proof of Proposition \ref{Sufprop}} \label{AppendixC_rec1}
\subsection{Proof of Part i)} 
The achievability and converse proof can be found in \cite{Chou12}, it remains to prove that equality holds in the rate constraint~(\ref{rate1}) and that $|\mathcal{U}|\leq |\mathcal{X}|$.
\subsubsection{Equality constraint}\label{defdelta0}
We start with the following lemma.
\begin{lem} \label{fconvex}
$f(U) \triangleq {I}(Y;U)$ and $f_1(U) \triangleq {I}(X;U|Y)$ are convex in $p_{U|X}$.
\end{lem}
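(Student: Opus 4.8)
The plan is to reduce both claims to the single standard fact that mutual information is convex in the channel for a fixed input distribution. Concretely, for a jointly distributed pair $(A,B)$ with law $p_A(a)\,p_{B|A}(b|a)$ and $p_A$ held fixed, one has ${I}(A;B)=\sum_a p_A(a)\,D\!\left(p_{B|A=a}\,\|\,p_B\right)$ with $p_B(b)=\sum_a p_A(a)\,p_{B|A}(b|a)$; since the relative entropy $D(p\|q)$ is jointly convex in the pair $(p,q)$ and both $p_{B|A=a}$ and $p_B$ are linear in $p_{B|A}$, the map $p_{B|A}\mapsto {I}(A;B)$ is convex. I would record this as a preliminary observation and then specialize it twice.

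For $f_1(U)={I}(X;U|Y)$, first I would decompose $f_1(U)=\sum_{y}p_Y(y)\,{I}(X;U|Y=y)$. The weights $p_Y(y)$ are determined by the source $p_{XY}$ and therefore do not depend on the optimization variable $p_{U|X}$. Using the Markov chain $U\rightarrow X\rightarrow Y$, the conditional law of $(X,U)$ given $Y=y$ factors as $p_{X|Y}(x|y)\,p_{U|X}(u|x)$, so each term ${I}(X;U|Y=y)$ is the mutual information of the channel $p_{U|X}$ driven by the fixed input $p_{X|Y=y}$. By the preliminary observation each such term is convex in $p_{U|X}$, and a nonnegative weighted sum of convex functions is convex; hence $f_1$ is convex.

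For $f(U)={I}(Y;U)$, I would view ${I}(Y;U)$ as the mutual information of the induced channel $p_{U|Y}$ with the fixed input $p_Y$. The key point is that $p_{U|Y}(u|y)=\sum_x p_{X|Y}(x|y)\,p_{U|X}(u|x)$ is an affine (indeed linear) function of $p_{U|X}$, again because $p_{X|Y}$ comes from the fixed source. By the preliminary observation, $p_{U|Y}\mapsto {I}(Y;U)$ is convex, and precomposition of a convex function with an affine map preserves convexity; hence $f$ is convex in $p_{U|X}$.

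The only genuinely delicate points, which I would be careful to verify, are that the quantities treated as fixed (namely $p_Y$, the conditionals $p_{X|Y=y}$, and the coefficients of the affine map $p_{U|X}\mapsto p_{U|Y}$) depend only on the source and not on $p_{U|X}$, and that the Markov condition $U\rightarrow X\rightarrow Y$ is exactly what collapses $p_{U|X,Y=y}$ to $p_{U|X}$. Everything else is a direct appeal to the joint convexity of relative entropy, so I do not expect a substantive obstacle.
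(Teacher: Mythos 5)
Your proof is correct, but it follows a genuinely different route from the paper's. The paper proves convexity directly by a time-sharing argument: it introduces a switch variable $Q\in\{1,2\}$ with $\mathbb{P}(Q=1)=\lambda$, independent of $(X,Y)$, sets $U=U_Q$ (so that $p_{U|X}=\lambda p_{U_1|X}+(1-\lambda)p_{U_2|X}$), and then bounds ${I}(Y;U)\leq {I}(Y;UQ)={I}(Y;U|Q)=\lambda {I}(Y;U_1)+(1-\lambda){I}(Y;U_2)$ using the independence of $Q$ and $Y$ (and similarly for ${I}(X;U|Y)$, using ${H}(X|YQ)={H}(X|Y)$). You instead reduce both claims to the textbook fact that ${I}(A;B)$ is convex in the channel $p_{B|A}$ for fixed input (via joint convexity of relative entropy), decomposing $f_1$ as a $p_Y$-weighted sum of per-$y$ mutual informations of the channel $p_{U|X}$ with fixed inputs $p_{X|Y=y}$, and writing $f$ as the precomposition of a convex function of $p_{U|Y}$ with the affine map $p_{U|X}\mapsto p_{U|Y}$. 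Both arguments hinge on the same two structural facts — the Markov chain $U\rightarrow X\rightarrow Y$ and the fact that $p_{XY}$ is fixed — and you correctly flag these as the delicate points. The paper's version is shorter and entirely self-contained (only the chain rule and independence are used), and it is the same device used elsewhere in the paper; yours is more modular, makes explicit where the affine dependence on $p_{U|X}$ enters, and invokes a standard result whose proof is itself the relative-entropy argument you cite. There is no gap in your argument.
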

\begin{proof}
Let $\lambda \in [0,1]$, let $U_1$, $U_2$ defined by $p_{U_1|X}$ and $p_{U_2|X}$ respectively, be s.t.  $U_1 \text{---} X \text{---} Y$ and $U_2 \text{---} X \text{---} Y$.\\ We introduce the random variable $Q\in \left\{ 1,2\right\}$ independent of all others and set $U = U_Q$.
\begin{align*}
Q & \triangleq 
 \begin{cases}
   1 & \text{ with probability }  \lambda,\\
   2 &  \text{ with probability } 1-\lambda.
   \end{cases}
\end{align*}
\begin{align*}
{I}(Y;U) 
& \leq  {I}(Y;UQ) \\
& = {I}(Y;Q) + {I}(Y;U|Q)\\
& \stackrel{(a)}{=} {I}(Y;U|Q)\\
& = \lambda {I}(Y;U_1) + (1-\lambda) {I}(Y;U_2),
\end{align*}
where $(a)$ holds since $Y$ and $Q$ are independent.
\begin{align*}
{I}(X;U|Y)
& \leq  {I}(X;UQ|Y)\\
& = {I}(X;Q|Y) + {I}(X;U|YQ) \\
& \stackrel{(b)}{=}  {I}(X;U|YQ)\\
& = \lambda ({I}(X;U_1|Y) + (1 - \lambda) ({I}(X;U_2|Y),
\end{align*}
where $(b)$ holds because ${H}(X|YQ) = {H}(X|Y)$, since $Q$ and ($X,Y$) are independent.
\end{proof}
\label{defdelta}
By Lemma \ref{fconvex}, $f(U)$ and $f_1(U)$ are convex in $p_{U|X}$.
Define $\Delta \triangleq \{ \textbf{u}  \in \mathbb{R}^{|\mathcal{U}||\mathcal{X}|} :  \forall i,j\in \llbracket 1 , |\mathcal{U}|\rrbracket \times \llbracket 1,|\mathcal{X}|\rrbracket, \sum_{k=1}^{|\mathcal{U}|} u_{kj} =1,  u_{ij} \geq 0\}$, and
$\mathcal{C} \triangleq \left\{ \textbf{u} \in \Delta : f_1(\textbf{u}) \leq R_1 \right\} $.\\ We first show that $\mathcal{C}$ is convex compact, with extreme points in $\left\{ \textbf{u} \in \Delta : f_1(\textbf{u}) = R_1 \right\} $:
\begin{itemize}
\item $\mathcal{C}$ is the preimage of $[0 , R_1]$ by the continuous function $f_1$, thus $\mathcal{C}$ is closed. We deduce that $\mathcal{C}$ is compact, since $\mathcal{C} \subset [0,1]^{|\mathcal{U}||\mathcal{X}|}$ and $[0,1]^{|\mathcal{U}||\mathcal{X}|}$ is compact.
\item $\mathcal{C}$ is convex by convexity of $f_1$, since the sublevels of a convex function are convex sets.
\item Let $\textbf{u}_1 \in \mathcal{C}$ s.t. $f_1(\textbf{u}_1) = R_1 - \delta$, with $\delta >0$. 
By continuity of $f_1$, $\exists \epsilon_0, \forall \textbf{u} \in \mathcal{B}(\textbf{u}_1,\epsilon_0), | f_1(\textbf{u})-f_1(\textbf{u}_1)| < \delta$. Let $\textbf{u}_0 \in \mathcal{B}(\textbf{u}_1,\epsilon_0) \backslash \left\{\textbf{u}_1\right\}$, $\lambda \in \left\{ -\frac{1}{2} , +\frac{1}{2} \right\}$ and $\textbf{u}_{\lambda} =\lambda \textbf{u}_0 + (1-\lambda) \textbf{u}_1 $. \\
Then $|| \textbf{u}_{\lambda} - \textbf{u}_1 ||  = || \lambda (\textbf{u}_0 -\textbf{u}_1) || \leq |\lambda| \epsilon_0 $, which means $\textbf{u}_{\lambda} \in \mathcal{C}$. Hence, $\frac{1}{2} \textbf{u}_{\lambda = +1/2} +\frac{1}{2} \textbf{u}_{\lambda = - 1/2} = \textbf{u}_1 $, and we conclude that $\textbf{u}_1$ is not an extreme point. Hence, the set of extreme points of $\mathcal{C}$ is a subset of $ \left\{ \textbf{u}\in \Delta: f_1(\textbf{u}) = R_1 \right\} $. 
\end{itemize}
Since $f$ is continuous, it reaches a maximum $\textbf{u}_{max}$ on the compact $\mathcal{C}$. Then, since $f$ is convex and $\mathcal{C}$ is a convex compact, by the Krein-Milman Theorem,\footnote{ A compact convex subset of a locally convex topological vector space is the closed convex hull of the set of its extreme points. Actually, only a weaker version is used since a finite dimensional space is considered.} $\textbf{u}_{max}$ is a convex linear combination of extreme points of $\mathcal{C}$ (existence of such extreme points comes directly from the Krein-Milman theorem, since $\mathcal{C} \neq \emptyset$ ). Hence, 
 $\textbf{u}_{max} = \sum_{k=1}^n \lambda_k \textbf{u}_k$, with $\sum_{k=1}^n \lambda_k = 1$ , $\lambda_1, \lambda_2, \ldots, \lambda_n \geq 0 $ and $\textbf{u}_1, \textbf{u}_2, \ldots, \textbf{u}_n$ extreme points of $\mathcal{C}$.
 By convexity of $f$, 
\begin{align*} 
& f(\textbf{u}_{max}) \leq \displaystyle\sum_{k=1}^n \lambda_k f(\textbf{u}_k) \leq \displaystyle\sum_{k=1}^n \lambda_k f(\textbf{u}_{max}) = f(\textbf{u}_{max}),\\
%
&\text{thus }\displaystyle\sum_{k=1}^n \lambda_k (f(\textbf{u}_{max})-f(\textbf{u}_k))=0,
\end{align*}  
 which means that there exists $i\in \llbracket 1,n \rrbracket$ s.t $f(\textbf{u}_{max}) = f(\textbf{u}_i)$. We conclude that $\textbf{u}_{max}$ is an extreme point of $\mathcal{C}$. This result is known as the maximum principle~\cite{Rockafellar70}.
\subsubsection{Cardinality bound $|\mathcal{U}|\leq |\mathcal{X}|$}
This result is a special case of a more general one that we prove in Appendix \ref{C_s_range}.
\subsection{Proof of Part ii)}
The proof is partially found in \cite{Watanabe10a} and all that remains to be proved are the equality in the communication rate constraint and the range constraint $|\mathcal{U}| \leq |\mathcal{X}|$.
\subsubsection{Equality in the constraint} \label{C_s_degraded}
To prove that equality holds in the constraint for the argument of the maximum in Proposition \ref{Sufprop}.\ref{C_s_eq}, we can reuse the proof of Proposition \ref{Sufprop}.\ref{C_rec3} in Appendix~\ref{defdelta0}, so that we only need to show that $f(U) = {I}(Y;U)-{I}(Z;U)$ is convex in $p_{U|X}$. To obtain the convexity of $f$, we replace $(X,Y)$ by $(Y,Z)$ in the function $f_1$ of Lemma~\ref{fconvex}.
\subsubsection{Range constraint $|\mathcal{U}| \leq |\mathcal{X}|$} \label{C_s_range}
The proof relies on a technique used in \cite{Salehi78}. %

Define 
\begin{align*}
& \mathcal{R} \triangleq \left\{ (R,R_1):  R \geq {I}(Y;U)-{I}(Z;U),\right. \\ & \phantom{mmm}\left. R_1 \geq {I}(X;U) - {I}(Y;U),  \text{ with } U \text{---} X \text{---} Y \text{---}Z \right\}, \\
& \mathcal{C} \triangleq \left\{ (R,R_1): R \geq {I}(Y;U)-{I}(Z;U),\right. \\ &\phantom{mmm} \left. R_1 = {I}(X;U) - {I}(Y;U),  \text{ with } U \text{---} X \text{---} Y\text{---}Z \right\}.\end{align*}
Note that the capacity region $\mathcal{C}$ is from Proposition \ref{Sufprop}.\ref{C_s_eq} and that the equality in the communication rate constraint is crucial to make it a subset of $\mathcal{R}$.
By \cite[Lemma 3]{Salehi78},
\begin{equation*}
\mathcal{R} = \left\{ (R,R_1): \forall \lambda_1, \lambda_2 \in \mathbb{R}^+, \lambda_1 R + \lambda_2 R_1 \geq G(\lambda_1, \lambda_2)\right\},
\end{equation*}
where $\forall \lambda_1, \lambda_2 \in \mathbb{R}^+$,
\begin{multline*}
G(\lambda_1, \lambda_2) \triangleq \smash{\displaystyle\inf_{ \stackrel {U \text{ s.t }} {U \text{---} X \text{---} Y \text{---} Z}}} \left[ \lambda_1 ({I}(Y;U)-{I}(Z;U)) \right. \\ \left.+ \lambda_2({I}(X;U) - {I}(Y;U)) \right].
\end{multline*}
 Consequently $G(\lambda_1, \lambda_2)$ is sufficient information to describe $\mathcal{R}$. Then, we show that for all $\lambda_1, \lambda_2 \in \mathbb{R}^+$, $G(\lambda_1, \lambda_2)$ can be achieved by considering a discrete random variable $U$ such that $|\mathcal{U}| \leq |\mathcal{X}|$.%

Let $\lambda_1, \lambda_2 \in \mathbb{R}^+$, let $\mathcal{P}$ in \cite[Lemma 2]{Salehi78} be the $|\mathcal{X}|$-dimensional probability simplex, and let $\mathcal{X} = \left\{ x_i \right\}_{i=1}^{|\mathcal{X}|} $. Consider $\mathcal{P}$ as a set of elements of the form 
\begin{multline*}
\left(  \mathbb{P}[X=x_1 |U=u], \mathbb{P}[X=x_2 |U=u], \ldots, \right. \\ \left. \mathbb{P}[X=x_{|\mathcal{X}|} |U=u]  \right),
\end{multline*} 
with $u \in \mathcal{U}$. Then, each probability distribution on $U$ defines a measure $\mu$ on $\mathcal{P}$. Define ${H}_P(X)$, ${H}_P(Y)$, and ${H}_P(Z)$ as the entropies of $X$, $Y$, and $Z$ respectively, when the distribution of $X$ is $P\in \mathcal{P}$. Define
\begin{align*}
f_1 (P) & \triangleq \lambda_1( {H}_P(Z) - {H}_P(Y))  + \lambda_2({H}_P(Y)-{H}_P(X)) \\
f_j  (P)& \triangleq P(x_j), \text{ for } j \in \llbracket 2, |\mathcal{X}| \rrbracket.
\end{align*}
Let $P^*_X$ achieve $G(\lambda_1, \lambda_2)$, and let $\mu^*$ be such that $\int_{\mathcal{P}} P \mu^*(dP) = P^*_X$. Denote by ${H}^*(X)$ the entropy of $X$ under probability distribution $P^*_X$.
Then, by \cite[Lemma 2]{Salehi78}, there exists $P_1,P_2,\ldots ,P_{|\mathcal{X}|}$, and $\alpha_1,\alpha_2,\ldots ,\alpha_{|\mathcal{X}|}$ such that, $\sum_{i=1}^{|\mathcal{X}|} \alpha_i = 1$, 
\begin{align*}
\forall j \in \llbracket 2 , |\mathcal{X}| \rrbracket, P_X^*(x_j) 
= \int_{\mathcal{P}}  f_j(P) \mu^{*}(dP) 
= \displaystyle\sum_{i=1}^{|\mathcal{X}|} \alpha_i f_j(P_i),
\end{align*}
and,
\begin{align*}
& \lambda_1( {H}^*(Z|U)- {H}^*(Y|U)) + \lambda_2({H}^*(Y|U)-{H}^*(X|U)) \\
&= \int_{\mathcal{P}} f_1(P) \mu^{*}(dP) = \displaystyle\sum_{i=1}^{|\mathcal{X}|} \alpha_i f_1(P_i).
\end{align*}
From $P_X^*(x_j) $, $ j \in \llbracket 2 , |\mathcal{X}| \rrbracket$, we can compute ${H}^*(X)$, ${H}^*(Y)$, and ${H}^*(Z)$, then
\begin{align*}
&  \lambda_1( {H}^*(Y) - {H}^*(Y|U) - {H}^*(Z) +{H}^*(Z|U)) \\
& \phantom{batman}+ \lambda_2({H}^*(X)-{H}^*(X|U) - {H}^*(Y) +{H}^*(Y|U) ) \\
& = \lambda_1( {I}^*(Y;U) -{I}^*(Z;U))+ \lambda_2({I}^*(X;U) - {I}^*(Y;U))\\
& = G(\lambda_1, \lambda_2).
\end{align*}
We have thus shown that we can choose $U$ such that $|\mathcal{U}|\leq|\mathcal{X}|$  to achieve $G(\lambda_1, \lambda_2)$. Consequently, it is enough to consider $U$ such that $|\mathcal{U}|\leq|\mathcal{X}|$, to form the set $\mathcal{R}$, as well as the set $\mathcal{C}$, since $\mathcal{C} \subset \mathcal{R}$. 
\section{Proof of Proposition \ref{discretegen}} \label{Appendixdiscretegeneral}
If $R_1 \geq {H}(X|Y)$, then by Proposition \ref{Sufprop}.\ref{C_s_eq} $C_{\textup{WSK}}(R_1,0)=\mathbb{I}(X;Y)$. Assume $R_1 \in ] 0; {H}(X|Y)[$ in the following. We note $\mathcal{X} = \{ 0, 1\}$ and by Proposition \ref{Sufprop}.\ref{C_s_eq}, we can assume $\mathcal{U} = \{ u_1, u_2\}$. We note $\beta_1 = p(X=1|U=u_1)$ and $\beta_2 = p(X=0|U=u_2)$. We can write
\begin{align}
&  {I}(U;X) - {I}(U;Y) - ( {H}(X) - {H}(Y))\nonumber \\ \nonumber
& =- \sum_{i=1,2} p(u_i) [{H}(X|U=u_i) - {H}(Y|U=u_i)] \\ \nonumber
& =  - \sum_{i=1,2} p(u_i) [{H}_b(\beta_i) - {H}(Y|U=u_i)] \\ 
& = - \smash{\sum_{i=1,2}} p(u_i) \left[{H}_b(\beta_i) + {\textstyle\sum_{y \in \mathcal{Y}}} p(y|u_i) \log p(y|u_i)\right],\label{eq3set}
\end{align}
with $\forall y \in \mathcal{Y}$,
\begin{align}
& p(y|u_1) = (1-\beta_1) p(y|X=0) + \beta_1 p(y|X=1), \label{eq1set} \\
& p(y|u_2) = \beta_2 p(y|X=0) + (1-\beta_2) p(y|X=1).\label{eq2set}
\end{align}
Moreover, since the channel $p_{Y|X}$ is symmetric, there exists a permutation $\pi \in \mathfrak{S}_{|\mathcal{Y}|}$ such that 
\begin{equation}
\forall y\in \mathcal{Y}, \forall x \in \mathcal{X}, p(y|x) = p(\pi(y) | x \oplus 1), \label{permutation}
\end{equation}
where $\oplus$ denotes the modulo $2$ operation.
Thus by (\ref{eq3set}), (\ref{eq1set}), (\ref{eq2set}), (\ref{permutation}) there exists $g_{Y|X}$\footnote{The exact description of $g_{Y|X}$ is not important here, what matters is that ${H}(Y|U=u_1)$ and ${H}(Y|U=u_2)$ can be expressed with the same function.} such that ${H}(Y|U=u_1) = g_{Y|X}(\beta_1)$, ${H}(Y|U=u_2) = g_{Y|X}(\beta_2)$. Then,
\begin{multline}
 {I}(U;X) - {I}(U;Y) - ({H}(X) - {H}(Y)) \\ =  - \sum_{i=1,2} p(u_i) \left[{H}_b(\beta_i) -  g_{Y|X}(\beta_i)\right].\label{eqdev1}
\end{multline}
Similarly, by using that the channel $p_{Z|X}$ is symmetric, there exists $g_{Z|X}$ such that ${H}(Z|U=u_1) = g_{Z|X}(\beta_1)$ and ${H}(Z|U=u_2) = g_{Z|X}(\beta_2)$. Thus, we also have
\begin{multline}
 {I}(U;Y) - {I}(U;Z) - ({H}(Y) - {H}(Z)) \\ = - \sum_{i=1,2} p(u_i) \left[  g_{Y|X}(\beta_i) - g_{Z|X}(\beta_i) \right].\label{eqdev2}
\end{multline}
Consider the region $\mathcal{R}_1 \!\! \triangleq \!\!\! \displaystyle\bigcup_{\beta_0  \in  [0,1]} \!\!\! \mathcal{R}_{\beta_0}  $ and $\mathcal{R}_2 \! \triangleq \!\!\!\!\!\!\!\!\!\! \displaystyle\bigcup_{(\beta_1,\beta_2) \in [0,1]^2} \!\!\!\!\!\!\!\!\!\! \mathcal{R}_{\beta_1,\beta_2}$, with
\begin{align*}
&\mathcal{R}_{\beta_0} \!  \triangleq  \! \left\{ (R,R_1) \! : \! \right.  R \leq  {H}(Y) \! -  \! {H}(Z) \! -  \! g_{Y|X}(\beta_0)\! + \!g_{Z|X}(\beta_0) \text{, }  \\ & \phantom{lllmmmmmmm} \left. R_1 \! \leq  \! {H}(X) \! - \! {H}(Y) \! - \! {H}_b(\beta_0) \! + \!  g_{Y|X}(\beta_0)\right\}\!\!,\\
&\mathcal{R}_{\beta_1,\beta_2} \triangleq \left\{ (R,R_1) :  R \leq  {I}(Y;U) -{I}(Z;U) \text{, } \right.  \\& \phantom{mmmmmmmmmmmmm} \left. R_1 \leq {I}(X;U) - {I}(Y;U)   \right\}.
\end{align*}
We can verify that both regions $\mathcal{R}_1$ and $\mathcal{R}_2$ are convex and that $\mathcal{R}_1 \subset \mathcal{R}_2$. We will use  a similar technique as in~\cite{Elgamal11}, based on Lemma \ref{convexcharact}, to show that  $\mathcal{R}_1=\mathcal{R}_2$.\footnote{Note that the fact that $R_1$ and $R$ are both lower bounds in $\mathcal{R}_1$ and $\mathcal{R}_2$ is crucial to show $\mathcal{R}_1=\mathcal{R}_2$. The same argument cannot apply if $R$ is a lower bound and $R_1$ an upper bound, whence the importance of the equality in the constraint shown in Proposition \ref{Sufprop}.\ref{C_s_eq}.} Then,
thanks to the refinement proposed in Proposition \ref{Sufprop}.\ref{C_s_eq} (equality in the constraint), we will be able to conclude for any $R_1 \in \mathbb{R}_+$,
\begin{align*}
C_{\textup{WSK}}(R_1,\!0) \!= \! \! \max_{\beta_0 \in [0,1]} \! \left( {H}(Y) \! -  \! {H}(Z) \! - \!  g_{Y|X}(\beta_0) \! + \! g_{Z|X}(\beta_0) \right)\\ \text{ such that }R_1 = H(X) \! - \! {H}(Y) \! - \! {H}_b(\beta_0) \! + \! g_{Y|X}(\beta_0).
\end{align*}

\begin{lem} [\cite{Elgamal11} \cite{Rockafellar70}] \label{convexcharact}
Let $\mathcal{C} \subset \mathbb{R}^d$ be convex. Let $\mathcal{C}_1 \subset \mathcal{C}_2$ be two bounded convex subsets of $\mathcal{C}$, closed relative to $\mathcal{C}$. If every supporting hyperplanes of $\mathcal{C}_2$ intersects with $\mathcal{C}_1$, then $\mathcal{C}_1 =\mathcal{C}_2$.
\end{lem}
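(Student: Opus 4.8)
The plan is to prove the nontrivial inclusion $\mathcal{C}_2 \subseteq \mathcal{C}_1$ by contradiction, since $\mathcal{C}_1 \subseteq \mathcal{C}_2$ holds by hypothesis. Assume there exists $x_0 \in \mathcal{C}_2 \setminus \mathcal{C}_1$. The idea is to produce, from $x_0$, a supporting hyperplane of $\mathcal{C}_2$ that does not meet $\mathcal{C}_1$, which contradicts the assumption that every supporting hyperplane of $\mathcal{C}_2$ intersects $\mathcal{C}_1$, and hence forces $\mathcal{C}_1 = \mathcal{C}_2$.

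First I would separate $x_0$ from $\mathcal{C}_1$. The key is to convert the closedness of $\mathcal{C}_1$ relative to $\mathcal{C}$ into ordinary closedness as seen by $x_0$: writing $\mathcal{C}_1 = F \cap \mathcal{C}$ for some closed $F \subseteq \mathbb{R}^d$, one checks that $\overline{\mathcal{C}_1} \cap \mathcal{C} = \mathcal{C}_1$, where $\overline{\mathcal{C}_1}$ denotes the closure taken in $\mathbb{R}^d$. Since $x_0 \in \mathcal{C}_2 \subseteq \mathcal{C}$ but $x_0 \notin \mathcal{C}_1$, this yields $x_0 \notin \overline{\mathcal{C}_1}$. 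As $\overline{\mathcal{C}_1}$ is a closed convex set and $x_0$ lies outside it, the strict separating hyperplane theorem provides $a \in \mathbb{R}^d \setminus \{0\}$ and $\beta \in \mathbb{R}$ with $\langle a, x_0\rangle > \beta > \sup_{y \in \mathcal{C}_1} \langle a, y\rangle$.

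Next I would maximize the linear functional $y \mapsto \langle a, y\rangle$ over $\mathcal{C}_2$. Because $\mathcal{C}_2$ is bounded, the supremum $s \triangleq \sup_{y \in \mathcal{C}_2} \langle a, y\rangle$ is finite, and since $x_0 \in \mathcal{C}_2$ we have $s \geq \langle a, x_0\rangle > \beta$. The hyperplane $H \triangleq \{ y \in \mathbb{R}^d : \langle a, y\rangle = s \}$ then bounds $\mathcal{C}_2$ on one side and touches its (relative) boundary, hence is a supporting hyperplane of $\mathcal{C}_2$; yet every point $y$ of $H$ satisfies $\langle a, y\rangle = s > \beta > \sup_{y' \in \mathcal{C}_1}\langle a, y'\rangle$, so $H \cap \mathcal{C}_1 = \emptyset$. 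This contradicts the hypothesis and shows $\mathcal{C}_2 \setminus \mathcal{C}_1 = \emptyset$.

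The main obstacle is the mismatch between closedness relative to $\mathcal{C}$ and closedness in $\mathbb{R}^d$: the separation step is legitimate only once one has shown $x_0 \notin \overline{\mathcal{C}_1}$, and the final step requires that the value $s$ be touched on the relative boundary of $\mathcal{C}_2$ so that $H$ qualifies as a genuine supporting hyperplane. Verifying that relative closedness guarantees both — using $\overline{\mathcal{C}_1}\cap\mathcal{C}=\mathcal{C}_1$ for the separation and, analogously, $\overline{\mathcal{C}_2}\cap\mathcal{C}=\mathcal{C}_2$ together with boundedness for the supporting property — is the delicate point. In the application of this lemma the sets $\mathcal{C}_1,\mathcal{C}_2\subseteq\mathbb{R}^2$ are images of compact parameter sets under continuous maps and are therefore compact, so the supremum is attained and the argument simplifies accordingly.
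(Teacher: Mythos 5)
The paper does not actually prove this lemma --- it is imported verbatim from \cite{Elgamal11} (with \cite{Rockafellar70} for the separation machinery), so there is no in-paper proof to compare against. Your argument is the standard one behind that cited result and it is correct: relative closedness gives $\overline{\mathcal{C}_1}\cap\mathcal{C}=\mathcal{C}_1$, hence $x_0\notin\overline{\mathcal{C}_1}$, strict separation produces $a\neq 0$ and $\beta$ with $\langle a,x_0\rangle>\beta>\sup_{\mathcal{C}_1}\langle a,\cdot\rangle$, and the level set of $\langle a,\cdot\rangle$ at $s=\sup_{\mathcal{C}_2}\langle a,\cdot\rangle\geq\langle a,x_0\rangle$ is a hyperplane bounding $\mathcal{C}_2$ that misses $\mathcal{C}_1$ entirely. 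The one point to be careful about is the one you flag yourself: whether $H=\{y:\langle a,y\rangle=s\}$ counts as a \emph{supporting} hyperplane. Your claim that $\overline{\mathcal{C}_2}\cap\mathcal{C}=\mathcal{C}_2$ plus boundedness settles this is not quite right in general, since the maximizer $y^*\in\overline{\mathcal{C}_2}$ need not lie in $\mathcal{C}$; but this is harmless either under the convention that a supporting hyperplane need only meet $\overline{\mathcal{C}_2}$, or in the paper's application, where $\mathcal{R}_1$ and $\mathcal{R}_2$ are continuous images of compact parameter sets (intersected with lower half-planes) and the supremum is attained. So the proof stands as written for the use made of the lemma.
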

Let $(R,R_1)\in \mathcal{R}_2$, and let $\alpha \in [0,1]$, then we have by~(\ref{eqdev1}),~(\ref{eqdev2})
\begin{align}
&\alpha R + (1-\alpha) R_1 \nonumber \\ \nonumber
& \leq \alpha ({I}(Y;U) -{I}(Z;U)) + (1- \alpha) ({I}(X;U) -{I}(Y;U) )\\\nonumber
& = \smash{\sum_{i =1,2}} p(u_i) [\alpha ({H}(Y) - {H}(Z) -  g_{Y|X}(\beta_i) + g_{Z|X}(\beta_i) )\\\nonumber
&  \phantom{mmlmm}   +(1-\alpha) ({H}(X) - {H}(Y) - {H}_b(\beta_i) +  g_{Y|X}(\beta_i))]  \\
& \leq \alpha ({H}(Y) - {H}(Z) -  g_{Y|X}(\beta^*) + g_{Z|X}(\beta^*) ) +(1-\alpha) \nonumber \\ 
&  \phantom{mmm} \times ({H}(X) - {H}(Y) - {H}_b(\beta^*) +  g_{Y|X}(\beta^*)), \label{betahyperplan}
\end{align}
where 
\begin{multline*}\beta^* \triangleq \arg\!\max_{\beta}  (\alpha ({H}(Y) - {H}(Z) -  g_{Y|X}(\beta) + g_{Z|X}(\beta) ) \\ +(1-\alpha) (1 - {H}(Y) - {H}_b(\beta) +  g_{Y|X}(\beta))).\end{multline*}
With the last inequality, we show that every supporting plane of $\mathcal{R}_2$ intersects $\mathcal{R}_1$. Note that the weight coefficients of $(R,R_1)$ have been taken of the form $(\alpha , 1- \alpha)$ with $\alpha \in [0,1]$, because by positivity and convexity of $\mathcal{R}_2$, we only needed to consider hyperplanes (lines) with negative slope to apply~ Lemma~\ref{convexcharact}.\\ 
Let $(R^0,R_1^0)$ be a boundary point of $\mathcal{R}_2$. There exists a supporting hyperplane $\mathcal{H}_{0}$ at $(R^0,R_1^0)$ defined by $(\alpha^0,1-\alpha^0)$.  By Equation (\ref{betahyperplan}), there exists $\beta_0^* \in [0,1]$ such that 
$$
\alpha^0 R^0 + (1-\alpha^0) R_1^0 \leq \alpha^0 R^* +(1-\alpha^0) R_1^*, 
$$
where $(R^*,R_1^*)\triangleq ({H}(Y) - {H}(Z) -  g_{Y|X}(\beta^*_0) + g_{Z|X}(\beta^*_0),{H}(X) - {H}(Y) - {H}_b(\beta^*_0) +  g_{Y|X}(\beta^*_0)) $. Then, since $(R^*,R_1^*)\in \mathcal{R}_1 \subset \mathcal{R}_2$, we also have, by definition of $\mathcal{H}_{0}$
$$
\alpha^0 R^* +(1-\alpha^0) R_1^* \leq \alpha^0 R^0 + (1-\alpha^0) R_1^0.
$$
Hence, $\alpha^0 R^* +(1-\alpha^0) R_1^* = \alpha^0 R^0 + (1-\alpha^0) R_1^0$, and thus $(R^*,R_1^*) \in \mathcal{H}_{0}$.

%
%
\section{Proof of Proposition \ref{Prop_Quant}} \label{AppendixQuant}
Consider  $X  \sim  \mathcal{N}(0,\sigma_x^{2})$, $N  \sim \mathcal{N}(0,\sigma^{2}_n)$, $Y  =  X+N$. 
 We have $\sigma_y^2 = \sigma_x^{2}+ \sigma_n^{2}  $ and
\begin{multline*}
  p_{X}(x) =   \frac{1}{\sqrt{ 2 \pi \sigma_x^2}} \exp \left[ - \frac{x^2}{2 \sigma_x^2}  \right], \\ p_{X|Y}(x|y) =   \frac{1}{  \sqrt{2\pi }  } \frac{\sigma_y }{\sigma_x \sigma_n} \exp \left[ - \frac{1 }{2\sigma^{2}_n} \frac{\sigma_y^2}{\sigma_x^2}   \left(x- \frac{\sigma_x^2}{\sigma_y^2} y \right)^{2}  \right].
\end{multline*} 
Let $l\in \mathbb{N}^*$ and $k\in \llbracket 1, l\rrbracket$. Define $t_k \triangleq a( 2\tfrac{ k -1}{l-1} - 1)$ and $\Delta \triangleq \tfrac{2a}{l-1}$. Let $U$ be a scalar quantized version of $X$, defined as follows. 
\begin{align*}
 p_{U}(u_k) & \triangleq \int_{t_k}^{t_{k+1}} \!\!\!\!\!\! p_{X}(x) dx = p_X(\bar{x}_k) \Delta , \\
\forall y \in \mathcal{Y},  p_{U|Y}(u_k|y) & \triangleq  p_{X|Y}(\bar{x}_k|y)\Delta , 
\end{align*} 
where $\bar{x}_k \in [t_k , t_{k+1}]$ by the mean value theorem for integration. Hence, 
$${H}(U) =  S_U -\log \Delta, \text{ with }S_U \triangleq - \Delta \sum_k  p_{X}(\bar{x}_k) \log p_{X}(\bar{x}_k).
$$
Observe that $S_U$  is a Riemann sum that approaches $h(X) = -\int p_{X} \log p_{X}$. Thus, if we set $f(x) \triangleq -p_{X}(x) \log p_{X}(x)$, we can show that for any $a \in \mathbb{R}^+$,\footnote{We used a standard Riemann sum error bound, and erfc$(x)\leq e^{-x^2}.$}
\begin{align*}
| h(X) - S_U|  
& = \left| \int f -S_U\right| \\
& \leq \left| \int_{-\infty}^{-a} f +\int_{a}^{+\infty}  f \right|+ \left| S_U - \int_{-a}^{a} f \right|\\
& \leq \epsilon_1(a) + K_1(a) \Delta,
\end{align*}
with $K_1(a) \triangleq a \max_{[-a,a]} |f'|$, $\epsilon_1(a) \triangleq e^{-\frac{a^2}{2\sigma_x^2}}[  \alpha_1 a + \beta_1]$, and $\alpha_1, \beta_1$ constants.

Similarly, if we define $$S_{U|Y} \triangleq - \Delta \sum_k  \int_y p_{XY}(\bar{x}_k,y) \log p_{X|Y}(\bar{x}_k|y) dy,$$ and  $g(x) \triangleq \int  p_{XY}(x,y) \log p_{X|Y}(x|y) dy,$ then, as previously, we can show that for any $a \in \mathbb{R}^+$,
\begin{align*}
| h(X|Y) - S_{U|Y}| \leq \epsilon_2(a) + K_2(a)\Delta,
\end{align*}
with $K_2(a) \triangleq a \max_{[-a,a]} |g'|$, $\epsilon_2(a) \triangleq e^{-\frac{a^2}{2\sigma_x^2}} [ \alpha_2 a + \beta_2]$, and $\alpha_2, \beta_2$ constants. 
Thus,
\begin{multline*}
\log \Delta -(\epsilon_2(a) + K_2(a) \Delta) \\
 \leq h(X|Y) - {H}(U|Y)  \\
  \leq  \log \Delta +\epsilon_2(a) + K_2(a)\Delta.
\end{multline*}
Hence, for any $a \in \mathbb{R}^+$, if we take $\Delta$ small enough, then $|\log \Delta|  \gg \epsilon_2(a) + K_2(a) \Delta$, such that  $ h(X|Y) - {H}(U|Y) \approx \log \Delta $, and
\begin{align}
& | {I}(X;Y) - {I}(Y;U)|  \nonumber \\ \nonumber
&= | h(X) - S_U  + S_{U|Y}-h(X|Y) |\\ \nonumber
& \leq \epsilon(a) +  K(a)\Delta \\ \nonumber
& \leq \epsilon(a) +  K(a) \exp[ h(X|Y)- {H}(U|Y)] \\ \nonumber
& = \epsilon(a) +  K(a) \exp[ h(X|Y)- R_1 ] ,
\end{align}
where $\epsilon(a) \triangleq \epsilon_1(a)+\epsilon_2(a)$,  $K(a) \triangleq K_1(a)+K_2(a)$.

To sum up, $\Delta$ chosen small enough ensures that ${I}(Y;U)$ approaches ${I}(X;Y)$ exponentially fast  as $R_1> h(X|Y)$ increases.

\bibliographystyle{IEEEtran}
\bibliography{bib}
\end{document}